\newtheorem{theo}{Theorem}[section]
\newtheorem{prop}[theo]{Proposition}
\newtheorem{coro}[theo]{Corollary}
\newtheorem{lemm}[theo]{Lemma}
\theoremstyle{definition}
\newtheorem{defi}[theo]{Definition}
\newtheorem*{note*}{Note}
\newtheorem*{claim*}{Claim}
\newtheorem*{exam*}{Example}
\newtheorem*{rema*}{Remark}
\author{Shota FUKUSHIMA\thanks{Graduate School of Mathematical Sciences, the University of Tokyo, 3-8-1 Komaba, Meguro-ku, Tokyo, 153-8914, Japan. 
Email: fukusima@ms.u-tokyo.ac.jp} 
\thanks{The author is supported by Leading Graduate Course for Frontiers of Mathematical Sciences and Physics (FMSP), at Graduate School of Mathematical Science, the University of Tokyo. }}
\title{Time-slicing approximation of Feynman path integrals on compact manifolds}
\date{\empty}
\newcommand{\transp}[1]{{}^t\!{#1}}
\newcommand{\vol}{\mathrm{vol}}
\newcommand{\jbracket}[1]{\left\langle {#1} \right\rangle}
\newcommand{\rmop}[1]{\mathop{\mathrm{#1}}}
\numberwithin{equation}{section}
\begin{document}
\maketitle
\begin{abstract}
We construct fundamental solutions to the time-dependent Schr\"odinger equations on compact manifolds by the time-slicing approximation of the Feynman path integral. We show that the iteration of short-time approximate solutions converges to the fundamental solutions to the Schr\"odinger equations modified by the scalar curvature in the uniform operator topology from the  Sobolev space to the space of square integrable functions. In order to construct the time-slicing approximation by our method, we only need to consider broken paths consisting of sufficiently short classical paths. We prove the convergence to fundamental solutions by proving two important properties of the short-time approximate solution, the stability and the consistency.  
\end{abstract}


\section{Introduction}
\subsection{Main theorem}
Let $(M, g)$ be a compact connected oriented smooth Riemannian manifold and consider the initial value problem of the semiclassical Schr\"odinger equation
\[
i\hbar\frac{\partial}{\partial t}u(t, x)=-\frac{\hbar^2}{2}\triangle_gu(t, x)+V(x)u(t, x), \quad u(0, x)=u(x), \quad (t, x)\in \mathbb{R}\times M 
\]
with a smooth potential $V: M\to \mathbb{R}$ and $\hbar\in (0, 1]$. 
We want to construct a fundamental solution to this problem by the time-slicing approximation of the Feynman path integrals \cite{Feynman48}. It is known that it is impossible to construct a complex measure which realizes Feynman path integrals for Schr\"odinger equations \cite{Cameron60}. However, an alternative method, the time-slicing approximation method, which is the original idea of Feynman \cite{Feynman48}, still has the possibility to be justified mathematically. On the Euclidean space, the time-slicing approximation is studied by Fujiwara \cite{Fujiwara79} \cite{Fujiwara17}, Kumano-go \cite{Kumano-go04} and Ichinose \cite{Ichinose20}. On the other hand, there are only a few studies of mathematical analysis of the Feynman path integrals on curved spaces (\cite{Fujiwara76}, \cite{Miyanishi15}). In physics, it is pointed by DeWitt \cite{DeWitt57} that the natural Feynman quantization on curved spaces gives the solution to a modified Schr\"odinger equation
\[
i\hbar\frac{\partial}{\partial t}u(t, x)=-\frac{\hbar^2}{2}\triangle_gu(t, x)+V(x)u(t, x)+\frac{\hbar^2}{12}R(x)u(t, x), 
\]
where $R: M\to \mathbb{R}$ is the scalar curvature of $(M, g)$. Our result is a generalization of \cite{Miyanishi15} to general compact manifolds and general smooth potentials. Time-slicing approximations of heat kernels  for supersymmetric quantum mechanical systems on manifolds is discussed in \cite{Fine-Sawin17}. 

In the following, we introduce our mathematical setting of the time-slicing approximation. Let 
\[
H(x, \xi)=\frac{1}{2}|\xi|_g^2+V(x), \quad (x, \xi)\in T^*M. 
\]
be the corresponding classical Hamiltonian. 
We assume that $V: M\to \mathbb{R}$ is smooth. For $(x, y)\in M\times M$ and $t>0$, we denote by $\Gamma^H_{t, x, y}$ the set of all the classical paths $(x(s), \xi(s)): [0, t]\to T^*M$ with respect to the Hamiltonian $H$ such that $x(0)=y$ and $x(t)=x$. In general the set $\Gamma^H_{t, x, y}$ does not consist of a single element. The simplest counterexample is $M=S^1$ and $V=0$. 
However, if we take a small neighborhood $N$ of the diagonal $\mathrm{diag}\,(M):=\{(x, x) \in M\times M \mid x\in M\}$ and small time $t>0$, then for an arbitrary pair $(x, y)\in N$, there exists a unique classical path $(x(s), \xi(s))\in \Gamma^H_{t, x, y}$ such that  
\[
|\xi(0)|_g=\min_{(q(s), p(s))\in \Gamma^H_{t, x, y}} |p(0)|_g. 
\]
(We prove this fact in Section \ref{sect_low_energy_cm}.)
We define $(x^t_s(x, y), \xi^t_s(x, y))\in \Gamma^H_{t, x, y}$ as the unique minimizer of the initial momentum $|\xi(0)|_g$. Then we can define the action function 
\[
S(t, x, y):=\int_0^t \left(\frac{1}{2}\left|\frac{d}{ds}x^t_s(x, y)\right|_g^2-V(x^t_s(x, y))\right) \,ds
\]
along the lowest energy path. We remark that ``\textit{the lowest energy}'' is equivalent to  ``\textit{the lowest momentum}'' because we fix an initial point $y\in M$. 

Next we introduce the Morette-Van Vleck determinant \cite{Morette51}. Take local coordinates $\varphi_\lambda=(x_1, \ldots, x_n)$ near $x$ and $\varphi_\mu=(y_1, \ldots, y_n)$ near $y$. Then we define positive definite matrices $(g^\lambda_{jk}(x))_{j, k=1}^n$ and $(g^\mu_{jk}(y))_{j, k=1}^n$ by 
\[
g_x=\sum_{j, k=1}^n g^\lambda_{jk}(x)dx_j dx_k, \quad g_y=\sum_{j, k=1}^n g^\mu_{jk}(y)dy_j dy_k
\]
respectively. 
The Morette-Van Vleck determinant is defined as 
\begin{equation}\label{eq_morette_van_vleck}
D(t, x, y):=g_\lambda(x)^{-\frac{1}{2}}g_\mu(y)^{-\frac{1}{2}}\det \left(-\frac{\partial^2 S}{\partial x_j\partial y_k}(t, x, y)\right)_{j, k=1}^n, 
\end{equation}
where $g_\lambda(x)=\det (g^\lambda_{jk}(x))_{j, k=1}^n$ and $g_\mu(y)=\det (g^\mu_{jk}(y))_{j, k=1}^n$. 
This is independent of the choice of local coordinates. If we take the neighborhood $N$ of $\rmop{diag}(M)$ sufficiently small, then $D(t, x, y)$ is positive for small $t>0$ and $(x, y)\in N$ (a consequence of Proposition \ref{prop_amplitude_diagonal}). 

Now we consider the modified semiclassical Schr\"odinger operator 
\begin{equation}
\tilde H_\hbar=-\frac{\hbar^2}{2}\triangle_g+V(x)+\frac{\hbar^2}{12}R(x) \label{eq_modified_hamiltonian}
\end{equation}
and the corresponding Schr\"odinger equation
\begin{equation}
i\hbar\frac{\partial}{\partial t}u(t)=\tilde H_\hbar u(t), \quad u(0)=u_0. \label{eq_schrodinger}
\end{equation}
Since $M$ is compact and $V$ is smooth, $\tilde H_\hbar$ is essentially self-adjoint on $L^2(M)$. We also denote its unique self-adjoint extension by $\tilde H_\hbar$. By Stone's theorem, there exists a family of unitary operators $\{ e^{-it\tilde H_\hbar/\hbar}\}_{t\in\mathbb{R}}$ on $L^2(M)$ such that $u=e^{-it\tilde H_\hbar/\hbar}u_0\in C^1(\mathbb{R}; L^2(M))$ gives the unique solution to the initial value problem \eqref{eq_schrodinger} for $u_0\in H^2(M)$ (the Sobolev space of degree 2). We define an approximate solution to the Schr\"odinger equation as 
\[
E_\hbar^\chi(t)u(x):=\frac{1}{(2\pi i\hbar)^{n/2}}\int_M \chi(x, y) \sqrt{D(t, x, y)} e^{iS(t, x, y)/\hbar} u(y)\, \vol_g(y). 
\]
Here $\chi\in C_c^\infty(N; [0, 1])$ with $\chi=1$ near $\mathrm{diag}(M)$ and $\vol_g$ is the natural volume form associated with the metric $g$. If $\chi$ is obvious from context, we drop $\chi$ and denote $E_\hbar^\chi(t)=E_\hbar (t)$. 

A partition $\Delta$ of $t>0$ is a finite sequence $\Delta=(t_1, \ldots, t_L)$ such that $t_j>0$ for all $j$ and $t_1+\cdots+t_L=t$. The size of partition is defined as $|\Delta|:=\max_{1\leq j \leq L}t_j$. We define the time-slicing approximation of the Feynman path integral as
\[
\mathcal{E}_\hbar^\chi(\Delta)=\mathcal{E}_\hbar(\Delta):=E_\hbar(t_L)\cdots E_\hbar(t_1). 
\]
$\mathcal{E}_\hbar(\Delta)$ is interpreted as the sum or an approximate Feynman path integral over broken paths consisting of short classical segments. 

The semiclassical Sobolev space $H_\hbar^s (M)$ of degree $s\in \mathbb{R}^n$ is a Hilbert space defined as 
\[ H_\hbar^s(M):=\{\, u\in \mathscr{D}^\prime (M) \mid (-\hbar^2 \triangle_g +1)^{s/2}u\in L^2(M)\,\}\]
with an inner product 
\[
    \jbracket{u, v}_{H_\hbar^s}:=\jbracket{ (-\hbar^2 \triangle_g+1)^{s/2}u, (-\hbar^2 \triangle_g+1)^{s/2}v}_{L^2}. \]
Note that $H_\hbar^s(M)=H_1^s(M)$ as sets for all $\hbar\in (0, 1]$. The Sobolev norm is defined as
\begin{equation}\label{eq_sobolev_norm} \|u\|_{H^s_\hbar}^2:=\jbracket{ (-\hbar^2 \triangle_g+1)^{s/2}u, (-\hbar^2 \triangle_g+1)^{s/2}u}_{L^2}=\| (-\hbar^2 \triangle_g+1)^{s/2}u\|_{L^2}^2. \end{equation}

Our main theorem states that its iteration converges to the fundamental solution to the Schr\"odinger equation. 

\begin{theo}\label{theo_main}
There exists a neighborhood $N$ of $\rmop{diag}(M)$ such that the following statement holds. For all $T>0$, $\varepsilon\in (0, 1/2]$ and $\chi\in C_c^\infty(N; [0, 1])$ with $\chi=1$ near the diagonal, there exist positive constants $\delta>0$ and $C>0$ such that, for all $t\in (0, T]$, $\hbar\in (0, 1]$ and all partitions $\Delta$ of $t$ with $|\Delta|<\delta$, the inequality
\[
\left\| \left( \mathcal{E}_\hbar(\Delta)-e^{-it\tilde H_\hbar /\hbar} \right) (-\hbar^2\triangle_g+1)^{-(1+\varepsilon)/2}\right\|_{L^2\to L^2} \leq Ct|\Delta|^\varepsilon
\]
holds. In particular, as $|\Delta|\to 0$, $\mathcal{E}_\hbar (\Delta)$ converges to $e^{-it\tilde H_\hbar/\hbar}$ in the operator norm topology from the semiclassical Sobolev space $H_\hbar^{1+\varepsilon}(M)$ to $L^2(M)$. 
\end{theo}

The convergence in the strong operator topology on $L^2$ spaces is an immediately consequence from the Theorem \ref{theo_main} and the stability of the short-time approximate solution $E_\hbar (t)$, which is stated in Theorem \ref{theo_stability_consistency}. 

\begin{coro}\label{coro_strong_convergence}
Take a neighborhood $N$ of $\rmop{diag}M$ and $\chi\in C_c^\infty(N; [0, 1])$ as in Theorem \ref{theo_main}. Then for any $T>0$, $\mathcal{E}_\hbar(\Delta)$ converges to $e^{-it\tilde H_\hbar /\hbar}$ uniformly in $\hbar \in (0, 1]$ and $t\in (0, T]$ in the strong operator topology as $|\Delta|\to 0$: 
\[
\lim_{|\Delta|\to 0} \|\mathcal{E}_\hbar(\Delta)u-e^{-it\tilde H_\hbar/\hbar}u\|_{L^2}=0, \quad \forall u\in L^2(M). 
\]
More precisely, for any $T>0$, $u\in L^2(M)$ and $\varepsilon>0$, there exists $\delta^\prime>0$ such that, for any $t\in (0, T]$, partition $\Delta$ of $t$ with $|\Delta|<\delta^\prime$ and $\hbar \in (0, 1]$, the inequality 
\[\|\mathcal{E}_\hbar(\Delta)u-e^{-it\tilde H_\hbar/\hbar}u\|_{L^2}<\varepsilon
    \]
holds. 
\end{coro}

We emphasize two observations concerning this paper. First, we do not need to connect arbitrary two points in the configuration space in order to construct a short-time approximate solution $E_\hbar (t)$. Fujiwara \cite{Fujiwara79} \cite{Fujiwara17} used classical paths for constructing the Feynman path integrable on Euclidean spaces, proving that for arbitrary two points and short time, there exists a unique classical path connecting them if the potential $V$ is at most quadratically increasing. Kumano-go \cite{Kumano-go04} and Ichinose \cite{Ichinose20} used straight lines on Euclidean spaces. On the other hand, on manifolds, there may not exist any natural choices of paths connecting arbitrary two points even if the potential $V$ is zero. In this paper, by introducing a cutoff function whose support is included in some small neighborhood near the diagonal, we avoid this problem. This idea is already used in \cite{Miyanishi15}. 

Next, the topology of the convergence to fundamental solutions to Schr\"odinger equations is stronger than the previous results. For example, Miyanishi \cite{Miyanishi15} proved the convergence of the time-slicing approximation to the fundamental solution to Schr\"odinger equations for free particles ($V=0$) on compact rank 1 locally symmetric spaces. More precisely, he proved that 
\[
\rmop{s\text{-}\lim}_{N\to \infty} E_1 \left( \frac{t}{N} \right)^N P(N) = e^{-it\tilde H}, 
\]
where $P(\lambda)$ is the spectral projector of the self-adjoint operator $\tilde H=\tilde H_1$ defined by the spectral decomposition
\[
\tilde H=\int_{-\infty}^\infty \lambda\, dP(\lambda). 
\]
Ichinose \cite{Ichinose20} proved the convergence to fundamental solutions to Schr\"odinger equations corresponding to the Hamiltonian with polynomially increasing (and time-dependent) scalar and vector potentials on Euclidean spaces in the $L^2$ strong operator topology. 

We cannot expect stronger convergence under our setting, for instance the operator norm topology on $L^2$ spaces, due to the cutoff function in the definition of the short-time approximate solution $E_\hbar (t)$. Miyanishi \cite{Miyanishi14} remarks the non-uniformness of the convergence to a fundamental solution of the free Schr\"odinger equation on the sphere. The strong convergence of (imaginary-time) Feynman path integrals is also treated in \cite{Ichinose12}. 

Concerning the scalar curvature term $R/12$, we remark that Schulman \cite{Schulman81} mentions it in the physics point of views. A mathematical treatment of the curvature term is also discussed in \cite{Woodhouse92} in the context of geometric quantization.

\subsection{Outline of the proof}
In order to prove our main theorem following Ichinose \cite{Ichinose20}, we show that the approximate solution $E_\hbar (t)$ satisfies the following two properties\footnote{The terminology ``\textit{stability}'' and ``\textit{consistency}'' is also taken from \cite{Ichinose20}. }: 
\begin{itemize}
\item \textit{Stability}: $\|E_\hbar (t)\|_{L^2\to L^2}\leq 1+O(t)$, 
\item \textit{Consistency}: $i\hbar\partial_t E_\hbar(t)-\tilde H_\hbar E_\hbar(t)\to 0$ as $t\to +0$ in some sense. 
\end{itemize}
This procedure is similar to the Chernoff approximation, which is applied to the construction of the solution to the heat equations \cite{Chernoff68}. 

The (square root of) Morette-Van Vleck determinant $\sqrt{D(t, x, y)}$ defined by \eqref{eq_morette_van_vleck} is necessary to establish the stability (see Morette \cite{Morette51}). However, due to this amplitude, $E_\hbar (t)$ is not an approximate solution to the original Schr\"odinger equation: 
\[
\left(i\hbar\frac{\partial}{\partial t}-\left(-\frac{\hbar^2}{2}\triangle_g +V(x)\right)\right)(\sqrt{D}e^{iS})
\longrightarrow \frac{\hbar^2}{12}R(x)\delta_x(y) \quad (t\to +0) 
\]
in the distributional sense. Here $\delta_x(y)\in \mathscr{D}^\prime(M)$ is the delta function in the following sense: 
\[
\int_M \delta_x(y)u(y)\, \vol_g(y)=u(x). 
\]
Thus we need the modification $\tilde H_\hbar=-\hbar^2\triangle_g/2+V(x)+\hbar^2 R(x)/12$ in order to establish the consistency (see \cite{DeWitt57}). Another difficulty to derive the consistency is the existence of cutoff function $\chi(x, y)$. Due to $\chi$, we must ignore the propagation of the high energy component. The operator $(-\hbar^2\triangle_g+1)^{-(1+\varepsilon)/2}$ is inserted in order to dump the high energy component. 

The procedure of the rigorous proof is as follows. In Section \ref{sect_s_and_c}, we state the stability and consistency, and give the proof of Theorem \ref{theo_main} from them. In Section \ref{sect_low_energy_cm}, we investigate the classical mechanics on compact manifolds. The neighborhood $N$ of the diagonal in the statement of Theorem \ref{theo_main} is defined there. Section \ref{sect_FIO} is devoted to prepare general theories of oscillatory integral operators in order to define a short-time approximate solution $E_\hbar (t)$. In Section \ref{sect_def_of_E_h(t)}, we give a precise definition of the short-time approximate solution $E_\hbar (t)$ and prove the fundamental properties of it. Finally we prove the stability and the consistency in Section \ref{sect_proof_s_and_c}. 


\section{Reduction to stability and consistency}\label{sect_s_and_c}

In this section, we show Theorem \ref{theo_main} from the stability and the consistency. First we mention fundamental properties of the approximate solution $E_\hbar (t)$. The precise definition of $N\supset \rmop{diag}(M)$ and $t_0>0$ in the following proposition are presented in the statement of Theorem \ref{theo_action_well_defined}. 

\begin{prop}\label{prop_fundamental_properties}
Suppose that $\chi\in C_c^\infty(N)$ satisfies $\chi=1$ near the diagonal. Then $E_\hbar (t)=E^\chi_\hbar (t)$ satisfies the following properties. 
\begin{enumerate}
\renewcommand{\labelenumi}{(\roman{enumi})}
\item For all $u\in C^\infty(M)$ and $\hbar\in (0, 1]$, $t\in (0, t_0] \mapsto E_\hbar (t)u \in L^2(M)$ is continuous in $L^2(M)$. 
\item For all $u \in C^\infty(M)$ and $\hbar\in (0, 1]$, $\| E_\hbar (t)u-u\|_{L^2(M)}\to 0$ as $t\to +0$. 
\item For all $u\in C^\infty(M)$ and $\hbar\in (0, 1]$, $t\in (0, t_0] \mapsto E_\hbar (t)u\in L^2(M)$ is differentiable in $L^2(M)$ and $t\in (0, t_0]\mapsto \partial_t E_\hbar (t)u\in L^2(M)$ is continuous in $L^2(M)$. 
\end{enumerate}
\end{prop}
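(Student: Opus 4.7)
The plan is to treat $E(t)u$ as a smooth function of $t$ on $(0, t_0)$ coming from a compactly supported smooth kernel, and to analyze the delicate limit $t \to +0$ separately via stationary phase at the diagonal. For $t \in (0, t_0)$, by the construction of Sections \ref{sect_low_energy_cm} and \ref{sect_def_of_E(t)}, both the phase $S(t,x,y)$ and the amplitude $\chi(x,y)\sqrt{D(t,x,y)}$ are smooth on the open set $N \supset \mathrm{diag}(M)$ and depend smoothly on $t$. Since $\chi$ is compactly supported in $N$, the Schwartz kernel of $E(t)$ is a smooth, compactly supported function on $M \times M$ for each fixed $t > 0$, and $E(t)u \in C^\infty(M)$ whenever $u \in C^\infty(M)$.

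For (i), continuity in $t$ on $(0, t_0)$ follows immediately from dominated convergence: the integrand is continuous in $t$ and is uniformly bounded in $(x,y,t) \in M \times M \times [t_1,t_2]$ for any $[t_1,t_2] \subset (0,t_0)$. For (iii), I would differentiate under the integral sign to obtain
\begin{equation*}
\partial_t E(t)u(x) = \frac{1}{(2\pi i)^{n/2}}\int_M \chi(x,y)\left[\frac{\partial_t D(t,x,y)}{2\sqrt{D(t,x,y)}} + i\sqrt{D(t,x,y)}\,\partial_t S(t,x,y)\right]e^{iS(t,x,y)} u(y)\,\mathrm{vol}_g(y).
\end{equation*}
Again, for $t$ bounded away from $0$, each factor in the bracket is smooth and bounded on $\mathrm{supp}\,\chi$; the existence of the derivative in the strong sense and its $L^2$-continuity in $t$ then follow from uniform convergence of the integrand on the compact set $M \times M$ together with the resulting uniform bound on $|\partial_t E(t)u(x)|$.

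The main obstacle is (ii), the convergence $E(t)u \to u$ in $L^2$ as $t \to +0$. Here the kernel is genuinely singular, since $S(t,x,y) \sim d_g(x,y)^2/(2t)$ and $\sqrt{D(t,x,y)} \sim t^{-n/2}$ near the diagonal, and the approach is via stationary phase in the variable $y$. The critical-point equation is $\partial_y S(t,x,y) = -\xi^t_0(x,y) = 0$, which holds precisely when the lowest-energy classical path from $y$ to $x$ has zero initial momentum, i.e.\ when $y = x$; this is the unique critical point. After introducing normal coordinates centered at $x$ and rescaling $y = \exp_x(\sqrt{t}\, z)$, the phase takes the form $\tfrac{1}{2}|z|^2_{g(x)} + O(\sqrt{t})$ and the amplitude tends to a smooth nonzero limit, so the integral reduces to a Gaussian integral in $z$ perturbed by lower-order terms. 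The precise choice of $\sqrt{D}/(2\pi i)^{n/2}$ in the amplitude is exactly what makes the Gaussian normalization cancel the Jacobian, yielding $E(t)u(x) \to u(x)$ pointwise; uniform control in $x \in M$ then upgrades the convergence to $L^2$. The delicate step will be to justify the change of variable and the uniform expansions of $S$ and $D$ cleanly, relying on the short-time asymptotics to be established in Section \ref{sect_def_of_E(t)}.
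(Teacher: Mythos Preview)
Your approach is the paper's: dominated convergence for (i) and (iii), stationary phase for (ii). One slip to flag in (ii): the claim that the critical point of $y\mapsto S(t,x,y)$ is exactly $y=x$ is false when $V\not\equiv 0$ and $t>0$, since $\partial_yS(t,x,y)=-\xi^t_0(x,y)$ vanishes at the $y$ from which the \emph{zero}-initial-momentum orbit reaches $x$ in time $t$, and that is generically not $x$. The paper sidesteps this by writing $S(t,x,y)=t^{-1}\Phi(t,x,y)$ with $\Phi(t,x,y)=\tfrac12 d(x,y)^2+t^2\phi(t,x,y)$ (Theorems~\ref{theo_action_scaling} and \ref{theo_difference_from_free}), pushing $e^{it\phi}$ into the amplitude and applying stationary phase to the \emph{free} phase $d(x,y)^2/(2t)$, whose unique stationary point is exactly $y=x$; this gives a clean $L^\infty_x$ remainder $O(t)$, and dominated convergence finishes. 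Your rescaling $y=\exp_x(\sqrt{t}\,z)$ is an equivalent implementation and works once you track the $O(t)$ drift of the critical point, but the paper's decomposition is cleaner and avoids that bookkeeping. For (iii) the paper is slightly more careful than ``differentiate under the integral sign'': it first shows $t\mapsto v(t,\cdot):=\partial_t(E(t)u)(\cdot)$ is $L^2$-continuous by dominated convergence, then proves $E(t)u-E(s)u=\int_s^t v(\sigma,\cdot)\,d\sigma$ as an $L^2$-valued Riemann integral via pairing with test functions, which rigorously gives strong differentiability.
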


\begin{rema*}
    We does not state that the convergence in Proposition \ref{prop_fundamental_properties} (ii) is uniform in $\hbar\in (0, 1]$. We are only interested in the continuity of $t\mapsto E_\hbar (t)u$ in $L^2(M)$ at $t=0$. 
\end{rema*}

We prove Proposition \ref{prop_fundamental_properties} in Subsection \ref{subsect_proof_of_fp}. 
We define $E_\hbar (0)u=u$ for all $u\in L^2(M)$. Then (i) and (ii) imply that $t\in [0, t_0]\mapsto E_\hbar (t)u\in L^2(M)$ is continuous in $L^2(M)$ for all $u\in C^\infty(M)$ and $\hbar\in (0, 1]$. 

Now we state the stability and consistency. 

\begin{theo}\label{theo_stability_consistency}
There exists $\delta>0$ such that the following statements hold. 
\begin{enumerate}
\item (Stability) There exists $C>0$ such that for all $t\in (0, \delta]$ and any $\hbar\in (0, 1]$, the  inequality
\[
\|E_\hbar(t)\|_{L^2(M)\to L^2(M)}\leq e^{C\hbar t}
\]
holds. 
\item (Consistency) For all $\varepsilon\in (0, 1/2]$, there exists $C_\varepsilon>0$ such that for all $t\in (0, \delta]$, $\hbar\in (0, 1]$ all $u\in C^\infty(M)$, the inequality
\[
\left\| i\hbar\frac{\partial}{\partial t}E_\hbar (t)u-\tilde H_\hbar E_\hbar(t)u\right\|_{L^2(M)}\leq C_\varepsilon \hbar t^\varepsilon\|u\|_{H^{1+\varepsilon}_\hbar (M)}
\]
holds. 
\end{enumerate}
\end{theo}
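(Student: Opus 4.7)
The plan is to treat $E(t)$ as a Fourier integral operator with phase $S(t,x,y)$ and amplitude $(2\pi i)^{-n/2}\chi(x,y)\sqrt{D(t,x,y)}$, exploiting two structural facts. First, the Morette--Van Vleck determinant is engineered so that $E(t)$ is ``almost unitary'' as $t\to+0$; this will drive the stability. Second, $\sqrt{D}\,e^{iS}$ satisfies the free Hamilton--Jacobi and transport equations exactly modulo an error proportional to $R/12$; this will drive the consistency once the curvature anomaly is absorbed into $\tilde H$. Throughout I would work in local coordinates on a small neighborhood of $\rmop{diag}(M)$ containing $\rmop{supp}\chi$, using the change of variable $y\mapsto\eta:=\xi^t_0(x,y)$ (the initial momentum of the lowest-energy classical path from $y$ to $x$ in time $t$), which by Section \ref{sect_low_energy_cm} is a diffeomorphism onto a neighborhood of the zero section of $T^*M$.

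For stability, after the change of variable, $\sqrt{D}$ combines with the Jacobian $|\det(\partial y/\partial\eta)|$ to yield a smooth amplitude uniformly bounded in $t$ together with all its derivatives, and the phase becomes the Legendre transform of $S$ in $y$. The composition $E(t)^*E(t)$ then becomes a pseudodifferential operator of order zero whose total symbol equals $\chi^2$ modulo $O(t)$ in suitable symbol seminorms; the Calder\'on--Vaillancourt (or Kumano-go--Taniguchi) theorem gives $\|E(t)^*E(t)\|_{L^2\to L^2}\le 1+Ct$, whence $\|E(t)\|_{L^2\to L^2}\le (1+Ct)^{1/2}\le e^{Ct}$.

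For consistency, I would differentiate $\chi\sqrt D\,e^{iS}$ under the integral sign. The contributions from $i\partial_t$ and from $-\tfrac12\Delta_g$ acting on $e^{iS}$ together with $V$ combine into the factor $-\partial_tS-\tfrac12|\nabla_xS|_g^2-V(x)$, which vanishes identically by Hamilton--Jacobi. The cross terms --- $\nabla_xS$ paired with $\nabla_x\sqrt D$ plus $\partial_t\sqrt D$ --- assemble into $i$ times the left-hand side of the transport equation for $\sqrt D$, which by the standard Jacobi-field computation of DeWitt equals $-\tfrac{1}{12}R(x)\chi\sqrt D\,e^{iS}+O(t)$, exactly cancelled by the added $R/12$ term in $\tilde H$. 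What remains splits into two pieces: (a) a genuine $O(t)$ oscillatory integral operator $tA(t)$ whose amplitude has the same structure as that of $E(t)$, so $\|tA(t)\|_{L^2\to L^2}\le Ct$ by the same Calder\'on--Vaillancourt argument; and (b) boundary terms carrying derivatives of $\chi$, supported where $d(x,y)\ge\sigma>0$ so that $|\nabla_yS|_g\gtrsim 1/t$. On $\rmop{supp}\nabla\chi$, repeated integration by parts with the operator
\[
L=-i\,|\nabla_yS|_g^{-2}\,\nabla_yS\cdot\nabla_y,\qquad Le^{iS}=e^{iS},
\]
gains a factor of $t$ per application at the cost of one derivative on $u$; a $(1+\varepsilon)$-fold interpolation between $k=1$ and $k=2$ integer applications yields $t^{1+\varepsilon}\|u\|_{H^{1+\varepsilon}}$. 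Combining (a) and (b) and using $t\le t^\varepsilon\delta^{1-\varepsilon}$ gives $\|(i\partial_t-\tilde H)E(t)u\|_{L^2}\le C_\varepsilon t^\varepsilon\|u\|_{H^{1+\varepsilon}}$.

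The main obstacle is the precise identification of the transport-equation residual with the scalar curvature term $R(x)/12$. This demands expanding $D(t,x,y)$ to the required order in $t$ via the Jacobi equation along $x^t_s(x,y)$ and tracing the resulting Jacobi endomorphism into $R(x)$ --- the classical DeWitt calculation, but carried out on a general Riemannian manifold with a smooth non-zero potential, where the interaction of the metric and $V$ in the Hamiltonian flow must be tracked carefully. Once this geometric computation is in hand, the remaining pieces (Calder\'on--Vaillancourt after the change of variable, and the non-stationary phase integration by parts on $\rmop{supp}\nabla\chi$) are technically intricate but follow the standard oscillatory-integral template set up in Sections \ref{sect_FIO}--\ref{sect_def_of_E(t)}.
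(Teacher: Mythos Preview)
Your stability argument is essentially the paper's: $E(t)^*E(t)$ is a pseudodifferential operator with principal symbol $\chi^2\le 1$, so Calder\'on--Vaillancourt (Corollary~\ref{coro_L2_boundedness}) gives $\|E(t)\|^2 \le 1 + Ct$.

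For consistency there is a genuine gap in your treatment of piece~(a). First a bookkeeping point: the transport equation
\[
\partial_t\sqrt{D} + g(\mathrm{grad}_x S, \mathrm{grad}_x\sqrt{D}) + \tfrac{1}{2}\sqrt{D}\,\triangle_x S = 0
\]
is satisfied \emph{exactly} by the Van~Vleck amplitude; the curvature does not arise at the transport level but one order lower, where $\tfrac{1}{2}\triangle_x$ hits the amplitude itself. Writing $a(t,x,y)=t^{n/2}\sqrt{D}$, the full WKB computation (Proposition~\ref{prop_WKB}) leaves the bulk remainder
\[
r_0(t,x,y)=\chi(x,y)\bigl(\tfrac{1}{2}\triangle_x a(t,x,y)-\tfrac{1}{12}R(x)\,a(t,x,y)\bigr).
\]
The DeWitt identity (Theorem~\ref{theo_curvature}) says $\triangle_x a(0,x,y)|_{x=y}=\tfrac{1}{6}R(x)$, hence $r_0(0,x,x)=0$. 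But $r_0(0,x,y)$ is \emph{not} zero for $x\ne y$ on $\rmop{supp}\chi$: on a curved manifold there is no reason for $\tfrac{1}{2}\triangle_x a(0,x,y)$ to equal $\tfrac{1}{12}R(x)\,a(0,x,y)$ off the diagonal. So the remainder is not of the form $tA(t)$ with $A(t)$ uniformly $L^2$-bounded, and $\|T[r_0](t)\|_{L^2\to L^2}$ is only $O(1)$, not $O(t)$.

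What you actually have is vanishing \emph{on the diagonal} at $t=0$, and converting this into $O(t^\varepsilon)$ decay costs $\varepsilon$ derivatives on $u$ --- this is precisely what Theorem~\ref{theo_vanishing} does, turning first-order vanishing of the amplitude along $Z=\rmop{diag}(M)$ into $\|T[r_0](t)\|_{H^\varepsilon\to L^2}\le Ct^\varepsilon$ (Lemma~\ref{lemm_r0}). Your non-stationary-phase integration by parts handles piece~(b) correctly (and is morally the same as the paper's application of Theorem~\ref{theo_vanishing} to $r_1$, which vanishes to all orders near the diagonal), but that mechanism is unavailable for $r_0$, whose support contains the diagonal where the phase is stationary. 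Without a replacement argument for piece~(a) the consistency bound does not close.
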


We prove Theorem \ref{theo_stability_consistency} in Section \ref{sect_proof_s_and_c}. We now reduce the proof of Theorem \ref{theo_main} to that of Theorem \ref{theo_stability_consistency}. 

\begin{proof}[Proof of Theorem \ref{theo_main}]
Let $u\in C^\infty(M)$. We set 
\[
G_\hbar (t)u:=
\begin{cases}
i\hbar \partial_tE_\hbar (t)u-\tilde H_\hbar E_\hbar (t)u & \text{for } t>0, \\
0 & \text{for } t=0. 
\end{cases}
\]
The continuity of $t\in [0, t_0]\mapsto G_\hbar (t)u\in L^2(M)$ at $t=0$ is the consequence of the consistency. By the Duhamel principle and $E_\hbar (0)u=u$, we obtain
\[
E_\hbar (t)u-e^{-it\tilde H_\hbar/\hbar}u=-\frac{i}{\hbar}\int_0^t e^{-i(t-s)\tilde H_\hbar/\hbar}G_\hbar (s)u\, ds. 
\]
The consistency gives the estimate
\[
\|E_\hbar(t)u-e^{-it\tilde H_\hbar/\hbar}u\|_{L^2}
\leq \frac{1}{\hbar}\int_0^t \|G_\hbar(s)u\|_{L^2}\, ds
\leq C_\varepsilon t^{1+\varepsilon} \|u\|_{H_\hbar^{1+\varepsilon}}. 
\]
Since it holds for all $u\in C^\infty(M)$, we obtain
\[
\|E_\hbar(t)-e^{-it\tilde H_\hbar/\hbar}\|_{H_\hbar^{1+\varepsilon}\to L^2}
\leq C_\varepsilon t^{1+\varepsilon}. 
\]

Take $\mu\gg 1$ such that $V(x)+R(x)/12+\mu>0$ for all $x\in M$. Then the $L^2$-bounded operator $P_\hbar:=(\tilde H_\hbar+\mu)^{-(1+\varepsilon)/2}$ is well-defined and $P_\hbar L^2(M)=H_\hbar^{1+\varepsilon}(M)$. Thus, for all $u\in L^2(M)$, we have 
\[
\|(E_\hbar(t)-e^{-it\tilde H_\hbar/\hbar})P_\hbar u\|_{L^2}
\leq C_\varepsilon t^{1+\varepsilon}\|P_\hbar u\|_{H_\hbar^{1+\varepsilon}}
\leq C_\varepsilon t^{1+\varepsilon}\|u\|_{L^2}. 
\]
Hence
\[
\|(E_\hbar(t)-e^{-it\tilde H_\hbar\hbar})P_\hbar\|_{L^2\to L^2}
\leq C_\varepsilon t^{1+\varepsilon}. 
\]

For a partition $\Delta=(t_1, \ldots, t_L)$ of $t$ with $|\Delta|<\delta$, we have
\begin{align*}
&E_\hbar(\Delta)-e^{-it\tilde H_\hbar/\hbar} \\
&= 
\sum_{j=1}^L E_\hbar(t_L)\cdots E_\hbar(t_{j+1}) (E_\hbar(t_j)-e^{-it_j\tilde H_\hbar/\hbar}) e^{-i(t_{j-1}+\cdots +t_1)\tilde H_\hbar/\hbar}. 
\end{align*}
Here we interpret the summand as 
\begin{align*}
&E_\hbar(t_L)\cdots E_\hbar(t_{j+1}) (E_\hbar(t_j)-e^{-it_j\tilde H_\hbar/\hbar}) e^{-i(t_{j-1}+\cdots +t_1)\tilde H_\hbar/\hbar} \\
&=E_\hbar(t_L)\cdots E_\hbar(t_2)(E_\hbar(t_1)-e^{-it_1\tilde H_\hbar/\hbar}) 
\end{align*}
if $j=1$ and 
\begin{align*}
&E_\hbar(t_L)\cdots E_\hbar(t_{j+1}) (E_\hbar(t_j)-e^{-it_j\tilde H_\hbar/\hbar}) e^{-i(t_{j-1}+\cdots +t_1)\tilde H_\hbar/\hbar} \\
&=(E_\hbar(t_L)-e^{-it_N\tilde H_\hbar/\hbar})e^{-i(t_{N-1}+\cdots+t_1)\tilde H_\hbar/\hbar}
\end{align*}
if $j=L$. 
We have 
\begin{align*}
&\|E_\hbar(t_L)\cdots E_\hbar(t_{j+1}) (E_\hbar(t_j)-e^{-it_j\tilde H_\hbar/\hbar}) e^{-i(t_{j-1}+\cdots +t_1)\tilde H_\hbar/\hbar}P_\hbar\|_{L^2\to L^2} \\
&\leq e^{C(t_{j+1}+\cdots+t_L)}\|(E(t_j)-e^{-it_j\tilde H_\hbar/\hbar})P_\hbar\|_{L^2\to L^2} 
\leq C_\varepsilon t_j^{1+\varepsilon}
\end{align*}
by the consistency, stability and the fact that $e^{-it_j\tilde H_\hbar/\hbar}$ and $P_\hbar$ commute. 
Summing up these inequalities, we obtain the estimate
\[
\|(E_\hbar(\Delta)-e^{-it\tilde H_\hbar/\hbar})P_\hbar\|_{L^2\to L^2} 
\leq \sum_{j=1}^L C_\varepsilon t_j^{1+\varepsilon}
\leq C_\varepsilon \sum_{j=1}^L t_j |\Delta|^\varepsilon
\leq C_\varepsilon t|\Delta|^\varepsilon. \qedhere
\]
\end{proof}

\begin{proof}[Proof of Corollary \ref{coro_strong_convergence}]
Let $u\in L^2(M)$ and $\varepsilon^\prime>0$. Since $C^\infty(M)$ is dense in $L^2(M)$, we can take a function $v \in C^\infty(M)$ such that $\|u-v\|_{L^2}<\varepsilon^\prime$. By Theorem \ref{theo_main} and the stability of $E_\hbar (t_j)$, we have
\begin{align*}
&\|\mathcal{E}_\hbar (\Delta)u-e^{-it\tilde H_\hbar/\hbar}u\|_{L^2} \\
&\leq \|\mathcal{E}_\hbar (\Delta)v-e^{-it\tilde H_\hbar/\hbar}v\|_{L^2}
+(\|\mathcal{E}_\hbar (\Delta)\|_{L^2\to L^2} 
+\| e^{-it\tilde H_\hbar/\hbar}\|_{L^2\to L^2}) \|u-v\|_{L^2} \\
&\leq Ct|\Delta|^{1/2} \|v\|_{H^{3/2}_\hbar}+\left(\prod_{j=1}^L e^{C\hbar t_j}+1\right)\varepsilon^\prime \\
&\leq Ct|\Delta|^{1/2} \| v\|_{H^{3/2}_\hbar}+(e^{CT}+1)\varepsilon^\prime. 
\end{align*}
We take $\delta^\prime>0$ such that $CT(\delta^\prime)^{1/2} \| v\|_{H^{3/2}_1}<\varepsilon^\prime$. Note that $\|w\|_{H^{3/2}_\hbar}\leq \| w\|_{H^{3/2}_1}$ for all $w\in H_\hbar^s(M)$ and $s\geq 0$ since the spectral decomposition 
\[
    -\triangle_g=\int_0^\infty \lambda \, dP^\prime (\lambda)\]
and the definition of the Sobolev norm \eqref{eq_sobolev_norm} imply 
\[
    \|w\|_{H^s_\hbar}^2 \leq \int_0^\infty (1+\hbar^2 \lambda)^s\, d\jbracket{P^\prime (\lambda)w, w}\leq \int_0^\infty (1+\lambda)^s\, d\jbracket{P^\prime (\lambda)w, w}=\|w\|_{H^s_1}^2. \]
If $|\Delta|<\delta^\prime$, $\hbar\in (0 ,1]$ and $t\in (0, T]$, then 
\[
\|\mathcal{E}_\hbar (\Delta)u-e^{-it\tilde H_\hbar/\hbar}u\|_{L^2}
\leq (e^{CT}+2)\varepsilon^\prime. 
\]
Therefore $\mathcal{E}_\hbar (\Delta)u\in L^2(M)$ converges to $e^{-it\tilde H_\hbar /\hbar}u\in L^2(M)$ uniformly in $\hbar \in (0, 1]$ and $t\in (0, T]$. 
\end{proof}


\section{Low energy classical mechanics}\label{sect_low_energy_cm}

We prove the existence and uniqueness of the classical path connecting sufficiently close two points in sufficiently short time. The action integral along the classical path is defined and satisfies suitable conditions for phase functions (Theorem \ref{theo_phase_function}). We also investigate the properties of the Morette-Van Vleck determinant in Subsection \ref{subsec_mvv}.

\subsection{Existence and uniqueness of the lowest energy classical path}\label{subsec_classical_mechanics}

We first introduce sets of classical paths connecting two points with low energy. 

\begin{defi}
For a Hamiltonian $H: T^*M\to \mathbb{R}$, $(t, x, y)\in (0,\infty)\times M\times M$ and $\mu\in\mathbb{R}$, we define 
\[
\Gamma^H_{t, x, y}(\mu):=\{\, (x(s), \xi(s))\in \Gamma^H_{t, x, y} \mid \, |\xi(0)|_g<\mu\,\}. 
\]
Here $\Gamma^H_{t, x, y}$ is the set of all the classical paths $(x(s), \xi(s)): [0, t]\to T^*M$ with respect to the Hamiltonian $H$ such that $x(0)=y$ and $x(t)=x$.

If $\Gamma^H_{t, x, y}(\mu)$ consists of a single element, then we call the element the lowest energy classical path from $y$ to $x$ in the time $t$. 
In the case of $t=1$, we simply call it the lowest energy classical path from $y$ to $x$. 
\end{defi}

The object of this subsection is to show the existence and uniqueness of the lowest energy classical path connecting sufficiently close two points. 

\begin{theo}\label{theo_action_well_defined}
There exist a small $t_0>0$, a small neighborhood $N\subset M\times M$ of $\rmop{diag}(M)$ and $\mu\in\mathbb{R}$ such that  for all $(x, y)\in N$ and $t\in (0, t_0]$, the set $\Gamma^H_{t, x, y}(\mu/t)$
consists of a single element. 
\end{theo}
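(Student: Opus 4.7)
The plan is to reduce the existence and uniqueness of the lowest energy classical path to an application of the inverse function theorem for a parametrized family of ``exponential type'' maps that degenerates to the cogeodesic exponential map as $t \to 0$. The guiding observation is that the bound $|\xi(0)|_g < \mu/t$ is exactly the one that makes the natural time/momentum rescaling carry the problem onto a fixed time interval with bounded initial momentum.

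First I would perform the rescaling. Given a classical trajectory $(x(s), \xi(s)) : [0, t] \to T^*M$ for $H(x, \xi) = \frac{1}{2}|\xi|_g^2 + V(x)$, set
\[
y_r := x(tr), \qquad \pi_r := t\, \xi(tr), \qquad r \in [0, 1].
\]
A direct computation in local coordinates, using $H = \frac{1}{2}g^{ij}(x)\xi_i\xi_j + V(x)$, shows that $(y_r, \pi_r)$ satisfies Hamilton's equations for the rescaled Hamiltonian
\[
\tilde H_t(y, \pi) := \frac{1}{2}|\pi|_g^2 + t^2 V(y),
\]
with initial data $(y, \eta) = (y, t\xi(0))$. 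The bound $|\xi(0)|_g < \mu/t$ becomes $|\eta|_g < \mu$, and the endpoint $x(t)$ is now $y_1$. Hence the elements of $\Gamma^H_{t, x, y}(\mu/t)$ are in bijection with covectors $\eta \in T_y^*M$ with $|\eta|_g < \mu$ satisfying $\Phi_t(y, \eta) = x$, where
\[
\Phi_t(y, \eta) := \pi_M \bigl( \phi_1^{\tilde H_t}(y, \eta) \bigr)
\]
and $\phi_r^{\tilde H_t}$ denotes the time-$r$ Hamiltonian flow of $\tilde H_t$.

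Next I would analyze $\Phi_t$ as $t \to 0$. At $t = 0$ the Hamiltonian $\tilde H_0 = \frac{1}{2}|\pi|_g^2$ generates the cogeodesic flow, so $\Phi_0(y, \eta) = \exp_y(\eta^\sharp)$, where $\sharp$ is the musical isomorphism. The fibrewise differential $d_\eta \Phi_0|_{\eta = 0}$ is $\eta \mapsto \eta^\sharp$, hence invertible. Consider the smooth map
\[
\Psi : [0, \infty) \times \{(y, \eta) \in T^*M : |\eta|_g < \mu_0\} \longrightarrow M \times M, \quad \Psi(t, y, \eta) := (y, \Phi_t(y, \eta)),
\]
for a fixed $\mu_0$ smaller than the injectivity radius of $(M, g)$. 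Then $\Psi(0, \cdot, \cdot)$ is a diffeomorphism from $\{|\eta|_g < \mu_0\}$ onto an open neighborhood of $\rmop{diag}(M)$. Smooth dependence of the Hamiltonian flow on the parameter $t$ together with compactness of $M$ lets me apply a uniform inverse function theorem: there exist $\mu \in (0, \mu_0)$, $t_0 > 0$, and an open neighborhood $N$ of $\rmop{diag}(M)$ such that for every $t \in [0, t_0)$, $\Psi(t, \cdot, \cdot)$ restricts to a diffeomorphism from $\{(y, \eta) : |\eta|_g < \mu\}$ onto an open set containing $N$. Unwinding the rescaling, each $(x, y) \in N$ and $t \in (0, t_0)$ gives a unique $\eta$ with $|\eta|_g < \mu$ such that $\Phi_t(y, \eta) = x$, hence a unique element of $\Gamma^H_{t, x, y}(\mu/t)$.

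The principal obstacle is making the inverse function theorem argument genuinely uniform in $(t, y)$. I need the derivative $d_\eta \Phi_t(y, \eta)$ to be uniformly invertible on the compact set $\{y \in M,\ |\eta|_g \leq \mu\} \times [0, t_0]$, and I need the image of $\Psi(t, \cdot, \cdot)$ to contain a \emph{$t$-independent} neighborhood of the diagonal. Both are arranged using the compactness of $M$ together with the smooth dependence of the Hamiltonian flow on the small parameter $t$; concretely, one first picks $N$ as the image under $\Psi(0, \cdot, \cdot)$ of $\{|\eta|_g < \mu'\}$ for some $\mu' < \mu$, and then uses that $\Psi(t, \cdot, \cdot) - \Psi(0, \cdot, \cdot) = O(t^2)$ uniformly on compacta to guarantee that $N$ lies in the image of $\Psi(t, \cdot, \cdot)$ for all sufficiently small $t$.
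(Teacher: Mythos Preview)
Your proposal is correct and follows essentially the same strategy as the paper: rescale time and momentum so that the Hamiltonian becomes $\frac{1}{2}|\pi|_g^2 + t^2 V$, recognize the $t=0$ case as the (co)geodesic exponential map, and apply the inverse function theorem together with compactness of $M$ to obtain a uniform neighborhood of the diagonal. The paper organizes the argument as ``local version (inverse function theorem applied to $(t,y,\eta)\mapsto (t,x_1^t(y,\eta),y)$ at a single basepoint) $+$ finite cover of $\rmop{diag}(M)$'', whereas you phrase it as a single uniform-in-$t$ inverse function theorem on the compact set $\{|\eta|_g\le\mu\}$; these are the same idea packaged differently, and your identification of the uniformity issue as the only real obstacle matches exactly what the paper's compactness step addresses.
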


In order to prove Theorem \ref{theo_action_well_defined}, it is enough to prove the following local version. 

\begin{theo}\label{theo_low_energy}
Let $H: T^*M\to\mathbb{R}$ be a Hamiltonian of the form 
\[
H(x, \xi)=\frac{1}{2}|\xi|_g^2+V(x). 
\]
Then for all $y_0\in M$, there exist an open neighborhood $W$ of $y_0$, constants $\mu>0$ and $t_0^\prime>0$ such that for all $(x, y)\in W\times W$ and $t\in (0, t_0^\prime]$, the set $\Gamma^H_{t, x, y}(\mu/t)$
consists of a single element. 
\end{theo}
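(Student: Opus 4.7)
My plan is to reduce the problem to a perturbation of the unit-time geodesic flow via a parabolic rescaling, and then to apply a parameter-dependent inverse function theorem near the zero section of $T^*M$.

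First I would rescale time and momentum. Given a trajectory $(x(s), \xi(s))_{s\in[0,t]}$ of the Hamiltonian $H(x,\xi)=\tfrac{1}{2}|\xi|_g^2+V(x)$, the curve $(\tilde x(\sigma),\tilde\xi(\sigma)):=(x(t\sigma),\,t\,\xi(t\sigma))$ on $\sigma\in[0,1]$ satisfies Hamilton's equations for the perturbed Hamiltonian
\[
\tilde H_t(x,\xi):=\tfrac{1}{2}|\xi|_g^2+t^2 V(x),
\]
and the correspondence is a smooth bijection between the two families of trajectories. Endpoints $(x(0),x(t))=(y,x)$ are preserved, and the initial-momentum bound $|\xi(0)|_g<\mu/t$ is equivalent to $|\tilde\xi(0)|_g<\mu$. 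So it suffices to exhibit $\mu>0$, $t_1>0$ and a neighborhood $W$ of $y_0$ such that, for each $t\in(-t_1,t_1)$, the map
\[
\Psi_t(y,\eta):=\bigl(y,\,\pi\circ\Phi^1_{\tilde H_t}(y,\eta)\bigr)
\]
is a diffeomorphism from $\{(y,\eta)\in T^*M:y\in W,\ |\eta|_g<\mu\}$ onto an open subset of $M\times M$ containing $W\times W$.

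Second I would analyze $\Psi_0$. Since $\tilde H_0(x,\xi)=\tfrac{1}{2}|\xi|_g^2$ is the geodesic Hamiltonian, $\Psi_0(y,\eta)=(y,\exp_y\eta^\sharp)$, and the standard fact that $d(\exp_y)_0$ is the canonical identification of $T_yM$ with itself implies that the differential of $\Psi_0$ at $(y_0,0)$ is a linear isomorphism onto $T_{(y_0,y_0)}(M\times M)$. Combining this with smooth dependence of Hamiltonian flow on the parameter $t$ and applying the inverse function theorem to the extended map $(t,y,\eta)\mapsto (t,\Psi_t(y,\eta))$ at the point $(0,y_0,0)$ yields an open neighborhood $\mathcal U$ of $(0,y_0,0)$ in $\mathbb{R}\times T^*M$ on which the extended map is a diffeomorphism onto a neighborhood of $(0,y_0,y_0)$ in $\mathbb{R}\times M\times M$. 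Shrinking and choosing $t_1,\mu,W$ so that
\[
\mathcal U\supset (-t_1,t_1)\times\{(y,\eta)\in T^*M:y\in W,\ |\eta|_g<\mu\}
\]
and so that the image contains $(-t_1,t_1)\times W\times W$ then gives the required diffeomorphism; undoing the rescaling produces the unique element of $\Gamma^H_{t,x,y}(\mu/t)$.

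The main obstacle I anticipate is making the choice of $\mu$ and $W$ uniform in $(t,y)$: a naive fiberwise application of the inverse function theorem would yield radii $\mu(t,y)$ that could in principle shrink as $(t,y)$ varies, and that would not rule out additional classical paths with $|\xi(0)|_g$ just below $\mu/t$ appearing as one moves within $W$. Treating $(t,y)$ as smooth parameters and inverting the extended map on $\mathbb{R}\times T^*M$ instead of $\Psi_t$ fiberwise is precisely what makes $\mu$ and $W$ a single uniform choice; compactness of a small closed neighborhood of $y_0$ can also be invoked to the same effect. A modest amount of bookkeeping is also required to verify that the rescaling indeed carries Hamilton's equations for $H$ to those for $\tilde H_t$, but this is a direct coordinate computation using $\dot x^j=g^{jk}\xi_k$ and $\dot\xi_j=-\tfrac{1}{2}(\partial_jg^{kl})\xi_k\xi_l-\partial_jV$.
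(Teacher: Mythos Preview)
Your proposal is correct and follows essentially the same route as the paper: the paper introduces the scaled Hamiltonian $H_t(x,\xi)=\tfrac12|\xi|_g^2+t^2V(x)$ via the same parabolic rescaling, reduces to the unit-time boundary value problem for $H_t$, and applies the inverse function theorem to the extended map $(t,y,\eta)\mapsto(t,x^t_1(y,\eta),y)$ at $(0,y_0,0)$, using that the $t=0$ case is the geodesic exponential map. Your discussion of why one must invert the $(t,y,\eta)$-map rather than each $\Psi_t$ fiberwise matches exactly how the paper secures uniformity of $\mu$ and $W$.
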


First we prove Theorem \ref{theo_action_well_defined} from Theorem \ref{theo_low_energy}. The rest of this subsection is devoted to the proof of Theorem \ref{theo_low_energy}. 

\begin{proof}[Proof of Theorem \ref{theo_action_well_defined}]
We introduce a family $\mathscr{U}$ of open sets in $M \times M$ as 
\begin{align*}
&\mathscr{U}:= \\
&\left\{\, 
W\times W \underset{\text{open}}{\subset}
M \times M 
\,\middle| \, 
\begin{aligned}
&\exists t_0^\prime>0 \text{ such that } 
\forall (t, x, y)\in (0, t_0^\prime)\times W \times W, \, \\
&\Gamma^H_{t, x, y} \text{ has a unique element with the lowest energy}
\end{aligned}
\,\right\}. 
\end{align*}

$\mathscr{U}$ is an open covering of $\rmop{diag}(M)$ by Theorem \ref{theo_low_energy}. Since $\rmop{diag} (M)$ is compact, it is covered by finite open sets $\{W_j\times W_j \in \mathscr{U}\}_{j=1}^J$. We take $t_{0, j}^\prime>0$ such that for all $(t, x, y)\in (0, t_{0, j}^\prime]\times W_j \times W_j$, $\Gamma^H_{t, x, y}$ has a unique element with the lowest energy. We set 
\[
t_0:=\min_{1\leq j \leq J} t_{0, j}^\prime, \, N=\bigcup_{j=1}^J (W_j\times W_j). 
\]
By definition, for all $(t, x, y)\in (0, t_0]\times N$, there exists a unique lowest energy classical path from $y$ to $x$ in time $t$. 
\end{proof}

In the following we prove Theorem \ref{theo_low_energy}. We begin with an observation. In the statement of Theorem \ref{theo_low_energy}, the energy bound $\mu/t$ diverges as $t\to +0$. Naively, the reason is that classical particle needs to move more quickly in order to go to a fixed point from another fixed point in shorter time $t$. This causes the divergence of the momentum of the classical path as $t\to +0$ and makes analysis of classical mechanics difficult. In order to avoid this problem, we introduce a rescaling on $T^*M$: 
\[
\Theta_t(x, \xi):=(x, t^{-1}\xi). 
\]
We also rescale the Hamiltonian as 
\[
H_t(x, \xi):=t^2H(x, t^{-1}\xi)=\frac{1}{2}|\xi|_g^2+t^2 V(x). 
\]
By the rescaling, the meaning of the small parameter $t>0$ is changed from the length of the time interval to the parameter of the perturbation by $t^2 V(x)$. In particular $H_1$ is the original Hamiltonian $H=|\xi|_g^2/2+V(x)$ and $H_0$ is the free Hamiltonian $H_0=|\xi|_g^2/2$. 
We denote by $X_{H_t}$ the Hamilton vector field of $H_t$ associated with the canonical symplectic structure $\omega=\sum dx_j\wedge d\xi_j$ on $T^*M$. Let $\{\varphi^t_s: T^*M\to T^*M\}_{s\in (-a, a)}$ be the flow of $X_H$: 
\begin{equation}\label{eq_hamiltonian_flow_scaled}
    \begin{split}
        \frac{\partial \varphi^t_s}{\partial s}(y, \eta)&=X_{H_t}(\varphi^t_s(y, \eta)), \\
        \varphi^t_0(y, \eta)&=(y, \eta). 
    \end{split}
\end{equation}
We denote $\varphi^1_s=\varphi_s$. $\{\varphi_s\}$ is the Hamiltonian flow of the original Hamiltonian $H$. The rescaling $\Theta_t$ relates the Hamiltonian flow $\{\varphi^t_s\}$ of the rescaled Hamiltonian $H_t$ to the Hamiltonian flow $\{\varphi_s\}$ of the original Hamiltonian $H$. 

\begin{lemm}\label{lemm_scaling}
If $\{\varphi^t_s: T^*M\to T^*M\}_{s\in (-a, a)}$ is the flow of $X_{H_t}$, then $\varphi_s$ is defined for $s\in (-at, at)$ and $\varphi_s=\Theta_t\circ \varphi^t_{t^{-1}s}\circ \Theta_t^{-1}$. 
\end{lemm}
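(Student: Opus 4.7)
My plan is to verify the identity directly by showing that the right-hand side, viewed as a family of maps in $s$, satisfies the defining ODE of the Hamiltonian flow $\varphi_s$ of $X_H$. Concretely, I would set
\[
\psi_s := \Theta_t \circ \varphi^t_{t^{-1}s} \circ \Theta_t^{-1}
\]
and check (a) that $\psi_0 = \mathrm{id}$, and (b) that $\frac{d}{ds}\psi_s(x,\xi) = X_H(\psi_s(x,\xi))$, whence uniqueness of solutions to the Hamilton ODE yields $\psi_s = \varphi_s$ on the common domain.

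The main computation is the transformation of the Hamilton vector field under the scaling $\Theta_t$. In local canonical coordinates, writing $H_t(x,\xi) = t^2 H(x, t^{-1}\xi)$, the chain rule gives
\[
\partial_\xi H_t(x,\xi) = t\,(\partial_\xi H)(x, t^{-1}\xi), \qquad \partial_x H_t(x,\xi) = t^2\,(\partial_x H)(x, t^{-1}\xi).
\]
Thus if $(x^t(s), \xi^t(s)) = \varphi^t_s(x_0,\xi_0)$ solves $\dot x^t = \partial_\xi H_t$, $\dot \xi^t = -\partial_x H_t$, then a direct substitution shows that the rescaled curve $(X(s), \Xi(s)) := \Theta_t\bigl(x^t(t^{-1}s), \xi^t(t^{-1}s)\bigr) = (x^t(t^{-1}s),\, t^{-1}\xi^t(t^{-1}s))$ satisfies
\[
\dot X = \partial_\xi H(X, \Xi), \qquad \dot \Xi = -\partial_x H(X, \Xi),
\]
since the factors $t^{-1}$ from differentiating in $s$ exactly cancel the factors $t$ and $t^2$ coming from the derivatives of $H_t$. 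This is the main bookkeeping step; once it is done, everything is forced.

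For the initial condition, $\psi_0 = \Theta_t \circ \varphi^t_0 \circ \Theta_t^{-1} = \Theta_t \circ \mathrm{id} \circ \Theta_t^{-1} = \mathrm{id}$. The statement about the interval of existence is then immediate from the expression for $\psi_s$: the inner flow $\varphi^t_{t^{-1}s}$ makes sense precisely when $t^{-1}s \in (-a,a)$, i.e.\ when $s \in (-at, at)$, and $\Theta_t$ is a global diffeomorphism of $T^*M$ (for $t>0$) so composition with it preserves the domain. Uniqueness of integral curves of $X_H$ through a given initial point then identifies $\psi_s$ with $\varphi_s$ on $(-at, at)$.

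I do not foresee any serious obstacle. The only point requiring a bit of care is keeping the factors of $t$ straight in the chain-rule computation; the cleanest way is to note that the scaling identity $H_t = t^2\,(\Theta_t^{-1})^*H$ combined with the fact that $\Theta_t^*\omega = t^{-1}\omega$ for the canonical symplectic form produces the factor $t$ that converts the parameter rescaling $s \mapsto t^{-1}s$ into the correct time change for $\varphi_s$. Alternatively, one can verify the formula in local coordinates as above, which is essentially a one-line calculation.
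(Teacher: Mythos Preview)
Your proposal is correct and follows essentially the same approach as the paper: both verify that $s\mapsto \Theta_t\circ\varphi^t_{t^{-1}s}\circ\Theta_t^{-1}$ is an integral curve of $X_H$ with the right initial condition, then invoke uniqueness. The only cosmetic difference is that the paper first isolates the pushforward identity $\Theta_{t*}X_{H_t}=tX_H$ in local coordinates and then differentiates the composition, whereas you carry out the equivalent chain-rule bookkeeping directly on Hamilton's equations.
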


\begin{proof}
First we prove that $\Theta_{t*}X_{H_t}=tX_H$ using local coordinates. Take local coordinates $(x_1, \ldots, x_n)$ of $M$ and the associated canonical coordinates $(x_1, \ldots, x_n, \xi_1, \ldots, \xi_n)$. Then
\[
(\Theta_{t*}X_{H_t})(x, \xi)=\sum_{j=1}^n\left(t\frac{\partial H}{\partial \xi_j}(x, \xi)\frac{\partial}{\partial x_j}
-t\frac{\partial H}{\partial x_j}(x, \xi)\frac{\partial}{\partial \xi_j}\right)
=tX_H(x, \xi). 
\]

We prove that $s\mapsto \Theta_t\circ\varphi^t_{t^{-1}s}\circ \Theta_t^{-1}(x, \xi)$ is the integral curve of $X_H$ for every $(x, \xi)\in T^*M$. This is proved by a direct calculation 
\begin{align*}
&\frac{d}{ds}(\Theta_t\circ \varphi^t_{t^{-1}s}\circ \Theta_t^{-1})(x, \xi)
=t^{-1}(d\Theta_t)_{(\varphi^t_{t^{-1}s}\circ \Theta_t^{-1})(x, \xi)}\left.\frac{d}{du}\right|_{u=t^{-1}s}\varphi^t_u(\Theta_t^{-1}(x, \xi)) \\
&=t^{-1}(d\Theta_t)_{(\varphi^t_{t^{-1}s}\circ \Theta_t^{-1})(x, \xi)}X_{H_t}((\varphi^t_{t^{-1}s}\circ\Theta_t^{-1})(x, \xi))
=t^{-1}(\Theta_{t*}X_{H_t})(x, \xi) \\
&=X_H(x, \xi). 
\end{align*}
Clearly $(\Theta_t\circ \varphi^t_{t^{-1}s}\circ \Theta_t^{-1}) (x, \xi)|_{s=0}=(x, \xi)$. Hence by the uniqueness of integral curve, we obtain $\varphi_s=\Theta_t\circ \varphi^t_{t^{-1}s}\circ \Theta_t^{-1}$ for $s\in (-at, at)$. 
\end{proof}

For $U\subset M$ and $\mu\geq 0$, we define $\mathcal{V}(U, \mu)\subset T^*M$ as 
\[
\mathcal{V}(U, \mu):=\{\, (x, \xi)\in T^*U \mid x\in U, \, |\xi|_g<\mu\,\}. 
\]

If we regard $t$ as a parameter of the perturbation, $H_t(x, \xi)$ also makes sense for $t\leq 0$, while $t$ as the length of the time interval ($t$ in $\Gamma^H_{t, x, y}$) makes sense only for $t>0$. Moreover the rescaled Hamiltonian $H_0$ corresponds to the free Hamiltonian $|\xi|_g^2/2$. Thus we can analyze the classical mechanics with the rescaled Hamiltonian $H_t$ employing the calculus around $t=0$. 

\begin{lemm}\label{lemm_low_energy_diffeo}
There exist $t_1>0$ and $\mu_0>0$ such that 
\[
\varphi^t_1: \mathcal{V}(M, \mu_0)\longrightarrow \varphi^t_1(\mathcal{V}(M, \mu_0))
\]
is a diffeomorphism for all $t\in [-t_1, t_1]$. 
\end{lemm}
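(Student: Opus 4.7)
The plan is to establish the $t=0$ case first and then perturb. When $t=0$ the Hamiltonian is $H_0(x,\xi)=\tfrac12|\xi|_g^2$ and $\varphi^0_s$ is the cogeodesic flow, which fixes the zero section pointwise since the Hamilton equations force $\dot x=0$ when $\xi=0$. To see that the differential $d\varphi^0_1|_{(x_0,0)}$ is invertible I would pick canonical local coordinates $(x,\xi)$ around $(x_0,0)$, linearize Hamilton's equations $\dot x^k=g^{kj}(x)\xi_j$, $\dot\xi_k=-\tfrac12(\partial_k g^{ij})(x)\xi_i\xi_j$ along the stationary solution $(x_0,0)$, and solve the resulting constant-coefficient system $\tfrac{d}{ds}\delta x=g^{-1}(x_0)\delta\xi$, $\tfrac{d}{ds}\delta\xi=0$. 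Evaluated at $s=1$ this gives the block matrix $\begin{pmatrix}I & g^{-1}(x_0)\\ 0 & I\end{pmatrix}$, whose determinant is $1$, so the inverse function theorem applies.

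Next I would upgrade this pointwise statement to a uniform neighborhood of the zero section. Because $M$ is compact and the zero section is compact in $T^*M$, the inverse function theorem applied at each $(x_0,0)$, combined with a Lebesgue-number style covering argument, supplies some $\mu_1>0$ for which $\varphi^0_1$ is a \emph{local} diffeomorphism at every point of $\mathcal{V}(M,\mu_1)$. For global injectivity on $\mathcal{V}(M,\mu_0)$ with $\mu_0<\mu_1$ I would argue by contradiction: otherwise there are sequences $(x_k,\xi_k)\neq(x_k',\xi_k')$ with $|\xi_k|_g,|\xi_k'|_g\to 0$ and $\varphi^0_1(x_k,\xi_k)=\varphi^0_1(x_k',\xi_k')$; passing to subsequences using compactness of $M$, both sequences accumulate at a common point $(x_*,0)$ (the images converge, and $\varphi^0_1$ is the identity on the zero section), which contradicts local injectivity at $(x_*,0)$.

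The perturbation step is straightforward. Because $H_t=\tfrac12|\xi|_g^2+t^2V$ depends smoothly on $t$ and its sublevel sets are compact in $T^*M$, the flow $\varphi^t_s$ is globally defined and depends smoothly on $(t,s,x,\xi)$; in particular $\varphi^t_1\to\varphi^0_1$ in $C^1$ uniformly on the compact closure $\overline{\mathcal{V}(M,\mu_0)}$ as $t\to 0$. Choosing $t_2>0$ sufficiently small preserves nondegeneracy of the Jacobian on $\mathcal{V}(M,\mu_0)$, and the same extraction-and-limit argument used above, now allowing $t_k\to 0$ jointly with the two points, shows that $\varphi^t_1$ remains injective on $\mathcal{V}(M,\mu_0)$ for $|t|<t_2$.

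I expect the main obstacle to be the transition from local to global injectivity: local invertibility is a one-line linearization, but ruling out distinct low-momentum points mapping to the same image requires the compactness of the zero section together with a diagonal subsequence argument, and this has to be arranged so that it survives the $t\to 0$ perturbation in addition to the $(x,\xi)\to(x_*,0)$ collapse.
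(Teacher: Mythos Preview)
Your argument is correct, and it reaches the same conclusion as the paper, but the packaging differs. The paper applies the inverse function theorem directly to the joint map $F(t,y,\eta)=(t,\varphi^t_1(y,\eta))$ at each point $(0,y_0,0)$ and then covers the compact zero section by finitely many of the resulting charts, taking the minimum of the local $t_2$'s and $\mu$'s; this handles the $t$-dependence and the $(y,\eta)$-dependence in a single stroke. You instead carry out the $t=0$ case first (via exactly the same block-triangular linearization $\begin{pmatrix}I & g^{-1}(x_0)\\ 0 & I\end{pmatrix}$) and then perturb to small $t$ by $C^1$ continuity of the flow in the parameter. Both routes rest on the same Jacobian computation at the zero section; the paper's is more economical, yours is more explicit.

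One remark on what you identify as the ``main obstacle'': the passage from local to global injectivity is not actually an issue here. Because $M$ is compact and $V$ is smooth, the Hamiltonian $H_t(x,\xi)=\tfrac12|\xi|_g^2+t^2V(x)$ has compact energy levels and hence complete flow, so $\varphi^t_1\colon T^*M\to T^*M$ is a global diffeomorphism for \emph{every} $t$, and its restriction to any open subset is automatically injective. Your sequence-extraction argument is correct, but it is proving something that already follows from $\varphi^t_1$ being a flow map. The paper's proof does not spell out global injectivity either, presumably for this reason.
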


\begin{proof}
Let $y_0\in M$. Since $(0, y, 0)\in \mathbb{R}\times T^*M$ is a regular point of the mapping 
\[
F(t, y, \eta):=(t, \varphi^t_1(y, \eta))
\]
defined near $(0, y_0, 0)$, we can apply the inverse function theorem. Thus there exist $t_1>0$, a neighborhood $U$ of $y_0$ and $\mu>0$ such that 
\[
\varphi^t_1: \mathcal{V}(U, \mu)\longrightarrow \varphi^t_1(\mathcal{V}(U, \mu))
\]
is a diffeomorphism for all $t\in (-t_1, t_1)$. 

Now let $\mathscr{U}$ be a family of open subsets defined as
\[
\mathscr{U}:=\left\{\, (-t_1, t_1)\times \mathcal{V}(U, \mu)\,\middle| \, 
\begin{aligned}
&\varphi^t_1: \mathcal{V}(U, \mu)\to \varphi^t_1(\mathcal{V}(U, \mu)) \text{ is } \\
&\text{a diffeomorphism for all } t\in [-t_1, t_1]
\end{aligned}
\,\right\}. 
\]
This $\mathscr{U}$ covers a compact set $\{0\}\times \{\,(x, 0)\in T^*M \mid x\in M\,\}$. Thus we can choose finite sets $\{ (-t_{1, j}, t_{1, j})\times \mathcal{V}(U_j, \mu_j)\}_{j=1}^J$ which cover $\{0\}\times \{\,(x, 0)\in T^*M \mid x\in M\,\}$. We set $t_1:=\min_{1\leq j\leq J} t_{2, j}$ and $\mu_0:=\min_{1\leq j\leq J} \mu_j$. Then 
\[
\varphi^t_1: \mathcal{V}(M, \mu_0)\longrightarrow \varphi^t_1(\mathcal{V}(M, \mu_0))
\]
is a diffeomorphism for all $t\in [-t_1, t_1]$. 
\end{proof}

In the following we take $\mu_0>0$ as in Lemma \ref{lemm_low_energy_diffeo}. The next proposition is a rescaled counterpart of Theorem \ref{theo_low_energy}. 

\begin{prop}\label{prop_low_energy_scaled}
Let $y_0\in M$. Then there exist an open neighborhood $W$ of $y_0$, positive constants $\mu\in (0, \mu_0]$ and $t_0^\prime>0$ such that for all $(x, y)\in W\times W$ and $t\in [-t_0^\prime, t_0^\prime]$, the set $\Gamma^{H_t}_{1, x, y}(\mu)$ consists of a single element. 
\end{prop}

Note that the length of the time interval is fixed (always $[0, 1]$) and the divergence of the energy bound $\mu$ is removed by the rescaling. 

\begin{proof}
Let $\varphi^t_s(y, \eta)=(\overline{q}^t_s(y, \eta),\overline{p}^t_s(y, \eta))$. Then the mapping 
\[
\Lambda: (t, y, \eta)\longmapsto (t, \overline{q}^t_1(y, \eta), y). 
\]
is defined near $(0, y_0, 0)\in \mathbb{R}\times T^*M$. 
Take canonical coordinates $(y_1, \ldots, y_n, \eta_1, \ldots, \eta_n)$ near $(y_0, 0)$. Then the Jacobian of $\Lambda$ at $(0, y_0, 0)$ is
\[
\det J\Lambda(0, y_0, 0)
=\det
\begin{pmatrix}
1 & \partial_t \overline{q}^t_1(y_0, 0) & 0 \\
0 & \partial_y \overline{q}^t_1(y_0, 0) & 1 \\
0 & \partial_\eta \overline{q}^t_1(y_0, 0) & 0
\end{pmatrix}
=(-1)^n\det \partial_\eta \overline{q}^t_1(y_0, 0). 
\]
This is not 0 since $(y, \eta)\mapsto (\overline{q}^0_1(y, \eta), y)=(\exp_y(v_\eta), y)$ is a local diffeomorphism around $(y_0, 0)$ by the existence of geodesically convex neighborhoods\footnote{See \cite{Spivak79} for the existence of geodesically convex neighborhoods. }. Here $v_\eta\in T_yM$ is a tangent vector defined as $g(v_\eta, \cdot)=\eta$. 
By the inverse function theorem, there exist $t_0^\prime>0$ and a neighborhood $\mathcal{V}=\mathcal{V}(U, \mu)$ of $(y_0, 0)$ such that 
\[
\tilde\Lambda:=\Lambda|_{[-t_0^\prime, t_0^\prime]\times \mathcal{V}}: [-t_0^\prime, t_0^\prime]\times \mathcal{V}\longrightarrow \Omega:=\Lambda([-t_0^\prime, t_0^\prime]\times \mathcal{V})
\]
is a diffeomorphism. Take $\mu_0$ as in Lemma \ref{lemm_low_energy_diffeo}. We assume that $\mu\in (0, \mu_0]$. 

The inverse function $\tilde\Lambda^{-1}$ is of the form 
\[
\tilde \Lambda^{-1}(t, x, y)=(t, \zeta(t, x, y))
\]
for some smooth function $\zeta: \Omega\to \mathcal{V}$. Since $(0, y_0, y_0)\in \Omega$, we can take a neighborhood $W$ of $y_0$ such that $[-t_0^\prime, t_0^\prime]\times W\times W\subset \Omega$. Now we define
\[
\gamma^t_s(x, y):=\varphi^t_s(\zeta(t, x, y))
\]
for $(s, t, x, y)\in [-1, 1]\times [-t_0^\prime, t_0^\prime]\times W\times W$. This curve is an element in $\Gamma^{H_t}_{1, x, y}$ by definition. 

Finally we prove that the above $\gamma^t_s(x, y)=(q^t_s(x, y), p^t_s(x, y))$ is the unique element of the set $\Gamma^{H_t}_{t, x, y}(\mu)$. 
Let $\alpha(s)=(q(s), p(s))\in \Gamma^{H_t}_{1, x, y}$ have an initial momentum such that $|p(0)|_g<\mu$. Since $q(1)=\overline{q}^t_1(y, p(0))=x$, we have
\[
(t, y, p(0))=\tilde\Lambda^{-1}(t, \overline{q}^t_1(y, p(0)), y)=\tilde\Lambda^{-1}(t, x, y)=(t, y, p^t_0(x, y)). 
\]
The uniqueness of the solution to initial value problem implies that $\alpha(s)=\gamma^t_s(x, y)$ for all $s\in [0, 1]$. 
\end{proof}

We prove Theorem \ref{theo_low_energy} by converting Proposition \ref{prop_low_energy_scaled} using Lemma \ref{lemm_scaling}. 

\begin{proof}[Proof of Theorem \ref{theo_low_energy}]
    Let $y_0\in M$. Then by Propotision \ref{prop_low_energy_scaled}, there exist an open neighborhood $W$ of $y_0$, positive constants $\mu\in (0, \mu_0]$ and $t_0^\prime>0$ such that for all $(x, y)\in W\times W$ and $t\in [-t_0^\prime, t_0^\prime]$, the set $\Gamma^{H_t}_{1, x, y}(\mu)$ has a unique element. Let $(q^t_s (x, y), p^t_s(x, y))\in \Gamma^{H_t}_{1, x, y}(\mu)$ be the unique element. We define 
    \begin{equation}\label{eq_element_scale}
        (x^t_s(x, y), \xi^t_s(x, y)):=\Theta_t (q^t_{t^{-1}s}(x, y), p^t_{t^{-1}s}(x, y))=(q^t_{t^{-1}s}(x, y), t^{-1}p^t_{t^{-1}s}(x, y))
    \end{equation}
    for $(x, y)\in W\times W$, $t\in (0, t_0^\prime]$ and $s\in [-t, t]$. We claim that $(x^t_s(x, y), \xi^t_s(x, y))$ is the unique element of $\Gamma^H_{t, x, y}(\mu/t)$. $x^t_0(x, y)=y$ and $x^t_t(x, y)=x$ are immediately obtained by the definition \eqref{eq_element_scale}. We put $\eta:=\xi^t_0(x, y)=t^{-1}p^t_0 (x, y)$. Then, since $(q^t_s(x, y), p^t_s(x, y))$ is a classical path with respect to the Hamiltonian $H_t$ with initial value $(y, t\eta)$, we have 
    \begin{equation}\label{eq_qp_yeta}
        (q^t_{t^{-1}s}(x, y), p^t_{t^{-1}s}(x, y))=\varphi^t_{t^{-1}s}(y, t\eta)=(\varphi^t_{t^{-1}s}\circ \Theta_t^{-1})(y, \eta). 
    \end{equation}
    By applying $\Theta_t$ both side of \eqref{eq_qp_yeta} and recalling the definition \eqref{eq_element_scale} of $(x^t_s, \xi^t_s)$, we obtain 
    \[(x^t_s(x, y), \xi^t_s(x, y))=\Theta_t (q^t_{t^{-1}s}(x, y), p^t_{t^{-1}s}(x, y))=(\Theta_t\circ \varphi^t_{t^{-1}s}\circ \Theta_t^{-1})(y, \eta). 
        \]
    Now we can apply Lemma \ref{lemm_scaling} and obtain
    \begin{equation}\label{eq_xxi_phi}
        (x^t_s(x, y), \xi^t_s(x, y))=\varphi_s(y, \eta). 
    \end{equation}
    Thus $(x^t_s(x, y), \xi^t_s(x, y))$ is a classical path with respect to the Hamiltonian $H$. The initial momentum $\xi^t_0(x, y)$ is estimated as
    \[ |\xi^t_0(x, y)|_g =|p^t_0 (x, y)|_g/t < \mu /t\]
    since $\xi^t_0(x, y)=p^t_0(x, y)$ by \eqref{eq_element_scale} and $|p^t_0(x, y)|_g< \mu$ by the definition of $\Gamma^{H_t}_{1, x, y}(\mu)$. Hence we proved $(x^t_s(x, y), \xi^t_s(x, y))\in \Gamma^H_{t, x, y}(\mu/t)$. 

    For the proof of the uniqueness, we assume $(\tilde x(s), \tilde \xi(s))\in \Gamma^H_{t, x, y}(\mu/t)$. Consider 
    \begin{equation}\label{eq_element_scaling_inv}
        (\tilde q(s), \tilde p(s)):=\Theta_t^{-1}(\tilde x(ts), \tilde \xi(ts))=(\tilde x(ts), t\tilde \xi (ts))
    \end{equation}
    for $s\in [-1, 1]$. We claim that $(\tilde q(s), \tilde p(s))\in \Gamma^{H_t}_{1, x, y}(\mu)$. We easily obtain $\tilde q(0)=\tilde x(0)=y$, $\tilde q(1)=\tilde x(t)=x$ and $|\tilde p(0)|_g=t|\tilde \xi(0)|_g<\mu$ by $(\tilde x(s), \tilde \xi(s))\in \Gamma^H_{t, x, y}(\mu/t)$. What we have to prove is that $(\tilde q(s), \tilde p(s))$ is a classical path with respect to the Hamiltonian $H_t$. Since $(\tilde x(s), \tilde \xi(s))\in \Gamma^H_{t, x, y}(\mu/t)$, we have 
    \[(\tilde x(ts), \tilde \xi(ts))=\varphi_{ts}(y, \tilde \xi(0))=\varphi_{ts}(y, t^{-1}\tilde p(0))=(\varphi_{ts}\circ\Theta_t)(y, \tilde p(0))\] 
    for all $s\in [-1, 1]$. Applying $\Theta_t^{-1}$ to the both sides and recalling the definition \eqref{eq_element_scaling_inv}, we obtain 
    \begin{equation}\label{eq_xxi_scaling_inv}
        (\tilde q(s), \tilde p(s))=\Theta_t^{-1}(\tilde x(ts), \tilde \xi(ts))=(\Theta_t^{-1}\circ \varphi_{ts}\circ\Theta_t)(y, \tilde p(0)). 
    \end{equation}
    We can apply Lemma \ref{lemm_scaling} to \eqref{eq_xxi_scaling_inv} and obtain 
    \[ (\tilde q(s), \tilde p(s))=\varphi^t_s(y, \tilde p(0)). \]
    Hence we proved that $(\tilde q(s), \tilde p(s))$ is a classical path with respect to the Hamiltonian $H_t$. Therefore $(\tilde q(s), \tilde p(s))\in \Gamma^{H_t}_{1, x, y}(\mu)$. By the uniqueness of the element of $\Gamma^{H_t}_{1, x, y}(\mu)$, which is the assertion of Proposition \ref{prop_low_energy_scaled}, we obtain $(\tilde q(s), \tilde p(s))=(q^t_s(x, y), p^t_s(x, y))$. 
    Hence
    \begin{align*}
     (\tilde x(s), \tilde \xi(s))
     &=(\tilde q(t^{-1}s), t^{-1}\tilde p(t^{-1}s))=(q^t_{t^{-1}s}(x, y), t^{-1}p^t_{t^{-1}s}(x, y)) \\
     &=(x^t_s(x, y), \xi^t_s(x, y))
    \end{align*}
by \eqref{eq_element_scale} and \eqref{eq_element_scaling_inv}. 
\end{proof}

\subsection{Action integrals}

We recall the definition of the action integral. 

\begin{defi}
Take $t_0>0$ and $N\subset M\times M$ as in Theorem \ref{theo_action_well_defined}. For $(t, x, y)\in (0, t_0]\times N$, let $(x^t_s(x, y), \xi^t_s(x, y))\in \Gamma^H_{t, x, y}$ be the lowest energy classical path from $y$ to $x$ in the time $t$. Then we define 
\[
S(t, x, y):=\int_0^t \left(\frac{1}{2}\left|\frac{d}{ds}x^t_s(x, y)\right|_g^2-V(x^t_s(x, y))\right) \,ds. 
\]
\end{defi}

$S(t, x, y)$ has another representation 
\begin{equation} S(t, x, y)=\int_0^t \left(\jbracket{\xi^t_s(x, y), \frac{d}{ds} x^t_s(x, y)}-H(x^t_s(x, y), \xi^t_s(x, y))\right)\, ds. \label{eq_action_hamiltonian}\end{equation}
This follows from 
\[ g\left( \frac{d}{ds}x^t_s(x, y), \cdot\right)=\xi^t_s (x, y)\]
by the Hamilton equation and hence 
\begin{align*}\frac{1}{2}\left|\frac{d}{ds}x^t_s\right|_g^2-V(x^t_s)
    &=\frac{1}{2}\left|\frac{d}{ds}x^t_s\right|_g^2+\frac{1}{2}\left| \xi^t_s\right|_g^2-H(x^t_s, \xi^t_s) \\ 
    &=\jbracket{\xi^t_s, \frac{d}{ds} x^t_s}-H(x^t_s, \xi^t_s). 
\end{align*}

We recall the well known facts on the action integrals in the classical mechanics. As in Subsection \ref{subsec_classical_mechanics}, let $\{\varphi_s: T^*M\to T^*M\}_{s\in (-a, a)}$ be the flow generated by the Hamilton vector field $X_H$ associated with the Hamiltonian $H$. 

\begin{theo}\label{theo_property_of_action}
\begin{enumerate}
\item $S(t, x, y)$ generates the symplectomorphism $\varphi_t$. More precisely, for $(t, x, y)\in (0, t_0]\times N$, the relation 
\[
\varphi_t (y, -d_y S(t, x, y))=(x, d_x S(t, x, y))
\]
holds. 
\item $S(t, x, y)$ satisfies the Hamilton-Jacobi equation
\begin{equation}
\frac{\partial S}{\partial t}(t, x, y)+H( x, d_x S(t, x, y))=0. \label{eq_hj}
\end{equation}
\end{enumerate}
\end{theo}

\begin{proof}
    Let $(x^t_s(x, y), \xi^t_s(x, y))\in \Gamma^H_{t, x, y}$ be the lowest energy classical path. Note that 
    \begin{equation}\varphi_t(y, \xi^t_t(x, y))=(x, \xi^t_0(x, y)) \label{eq_hamilton_flow_split}\end{equation}
    by the definition of $\varphi_t$ and $(x^t_s, \xi^t_s)$. Take arbitrary local coordinates $\psi=(x_1, \ldots, x_n)$ and lift it to the  canonical coordinates on $T^*M$: 
    \begin{equation}
        \tilde \psi: (x, \xi_1 dx_1+\cdots +\xi_n dx_n) \longmapsto (\psi(x), \xi_1, \ldots, \xi_n). \label{eq_canonical_coordinates}\end{equation}
        We define $x^t_{s, j}(x, y)\in \mathbb{R}$ and $\xi^t_{s, j}(x, y)\in \mathbb{R}$ as 
    \begin{equation}\label{eq_x_xi_local}
        \begin{split}
        \psi (x^t_s(x, y))&=(x^t_{s, 1}(x, y), \ldots , x^t_{s, n}(x, y)), \\
    \xi^t_s(x, y)&=\xi^t_{s, 1}(x, y)dx_1+\cdots + \xi^t_{s, n}(x, y)dx_n. 
        \end{split}
    \end{equation}
    $x^t_{s, i}$ and $\xi^t_{s, i}$ satisfies the Hamilton equations
    \begin{equation}
            \frac{dx^t_{s, j}}{ds} =\frac{\partial H}{\partial \xi_j}(x^t_s, \xi^t_s), \quad \frac{d\xi^t_{s, j}}{ds} =-\frac{\partial H}{\partial x_j}(x^t_s, \xi^t_s) \label{eq_hamilton}
    \end{equation}
    In the following, we use a shorthand notation $\dot{}=\partial_s$. For example we denote $\dot x^t_s=\partial_s x^t_s$ and $\dot \xi^t_s=\partial_s \xi^t_s$. 
    
    1. By \eqref{eq_hamilton_flow_split}, it suffices to show that 
    \begin{equation}
        \xi^t_t(x, y)=d_x S(t, x, y),  \xi^t_0(x, y)=-d_y S(t, x, y). \label{eq_generating}
    \end{equation}
    We claim that 
    \begin{equation}d_x (\jbracket{\xi^t_s, \dot x^t_s}-H(x^t_s, \xi^t_s))
    =\frac{\partial}{\partial s}(d_x x^t_s(\cdot, y)^* \xi^t_s). \label{eq_generating_locally}\end{equation}
    Here $d_x x^t_s (\cdot, y)^*: T^*_{x^t_s(x, y)}M\to T^*_xM$ is the dual mapping of the derivative $d_x x^t_s (\cdot, y): T_xM\to T_{x^t_s(x, y)}M$ of the smooth mapping $x\mapsto x^t_s(x, y)$. In the canonical coordinates \eqref{eq_canonical_coordinates}, we have
    \begin{align*} &\frac{\partial}{\partial x_j}(\jbracket{\xi^t_s, \dot x^t_s}-H(x^t_s, \xi^t_s)) \\
        &=\sum_{k=1}^n\left(\frac{\partial \xi^t_{s, k}}{\partial x_j} \dot x^t_{s, k}+\xi^t_{s, k} \frac{\partial^2 x^t_{s, k}}{\partial s \partial x_j}-\frac{\partial H}{\partial x_k}\frac{\partial x^t_{s, k}}{\partial x_j}-\frac{\partial H}{\partial \xi_k}\frac{\partial \xi^t_{s, k}}{\partial x_j}\right) \\
    &= \sum_{k=1}^n \Biggl(\frac{\partial \xi^t_{s, k}}{\partial x_j}\underbrace{\left(\dot x^t_{s, k}-\frac{\partial H}{\partial \xi_k}\right)}_{=0} -\frac{\partial x^t_{s, k}}{\partial x_j}\underbrace{\left(\dot \xi^t_{s, k}+\frac{\partial H}{\partial x_k}\right)}_{=0}+\frac{\partial}{\partial s}\left(\xi^t_{s, k}\frac{\partial x^t_{s, k}}{\partial x_j}\right) \Biggr)\end{align*}
    by the Hamilton equations \eqref{eq_hamilton}. Hence 
    \begin{equation}
        \begin{split}
            d_x (\jbracket{\xi^t_s, \dot x^t_s}-H(x^t_s, \xi^t_s))
            &=\sum_{j=1}^n \frac{\partial}{\partial x_j}(\jbracket{\xi^t_s, \dot x^t_s}-H(x^t_s, \xi^t_s))dx_j \\
            &=\frac{\partial}{\partial s}\left(\sum_{j, k=1}^n\xi^t_{s, k}\frac{\partial x^t_{s, k}}{\partial x_j}dx_j\right). 
        \end{split}
        \label{eq_generating_locally_wip}
    \end{equation}
    $\sum_{j, k=1}^n \xi^t_{s, k}\partial_{x_j}x^t_{s, k}dx_j$ is just a local representation of $d_x x^t_s(\cdot, y)^*\xi^t_s$: 
    \[ \sum_{j, k=1}^n\xi^t_{s, k}\frac{\partial x^t_{s, k}}{\partial x_j}dx_j=d_x x^t_s(\cdot, y)^*\xi^t_s. \]
    Hence \eqref{eq_generating_locally_wip} implies \eqref{eq_generating_locally}. By \eqref{eq_generating_locally}, we obtain  
    \begin{align*}
        d_xS(t, x, y)&=\int_0^t d_x (\jbracket{\xi^t_s, \dot x^t_s}-H(x^t_s, \xi^t_s))\, ds 
        =\int_0^t \frac{\partial}{\partial s}(d_x x^t_s(\cdot, y)^* \xi^t_s)\, ds \\
        &=d_x x^t_t (x, y)^*\xi^t_t(x, y) -d_x x^t_0 (x, y)^*\xi^t_0 (x, y)=\xi^t_t(x, y)
    \end{align*}
    since $x^t_t(x, y)=x$ and $x^t_0(x, y)=y$ imply that $d_x x^t_t(x, y)=\rmop{id}\nolimits_{T_xM}$ and $d_x x^t_0 (x, y)=0$. 

    The relation $\xi^t_0(x, y)=-d_yS(t, x, y)$ is proved similarly and we omit the proof. 

    2. Let \eqref{eq_canonical_coordinates} be canonical coordinates near a fixed point $(x^t_s(x, y), \xi^t_s(x, y))\in T^*M$ and we use the same notation $x^t_{s, j}, \xi^t_{s, j}$ as in \eqref{eq_x_xi_local}. 
    Then the derivative by $t$ of the integrand of \eqref{eq_action_hamiltonian} is 
    \begin{align*}
        &\frac{\partial}{\partial t}(\jbracket{\xi^t_s, \dot x^t_s}-H(x^t_s, \xi^t_s)) \\
        &=\sum_{j=1}^n \left(\frac{\partial \xi^t_{s, j}}{\partial t}\dot x^t_{s, j}+\xi^t_{s, j}\frac{\partial^2 x^t_{s, j}}{\partial s \partial t}-\frac{\partial H}{\partial x_j}\frac{\partial x^t_{s, j}}{\partial t}-\frac{\partial H}{\partial \xi_j}\frac{\partial \xi^t_{s, j}}{\partial t}\right) \\
        &=\sum_{j=1}^n \Biggl( \frac{\partial \xi^t_{s, j}}{\partial t}\underbrace{\left(\dot x^t_{s, j}-\frac{\partial H}{\partial \xi_j}\right)}_{=0}-\frac{\partial x^t_{s, j}}{\partial t}\underbrace{\left( \dot \xi^t_{s, j}+\frac{\partial H}{\partial x_j}\right)}_{=0}+\frac{\partial}{\partial s}\left(\xi^t_{s, j}\frac{\partial x^t_{s, j}}{\partial t}\right)\Biggr) \\
        &= \frac{\partial}{\partial s}\left(\sum_{j=1}^n \xi^t_{s, j}\frac{\partial x^t_{s, j}}{\partial t}\right)=\frac{\partial}{\partial s}\jbracket{\xi^t_s, \frac{\partial x^t_s}{\partial t}}
    \end{align*}
    by the Hamilton equations \eqref{eq_hamilton}. Hence we obtain 
    \begin{align}
        \frac{\partial S}{\partial t}(t, x, y)
        &= \jbracket{\xi^t_t, \dot x^t_t}-H(x^t_t, \xi^t_t)+\int_0^t \frac{\partial}{\partial t}(\jbracket{\xi^t_s, \dot x^t_s}-H(x^t_s, \xi^t_s))\, ds \nonumber \\
        &=\jbracket{\xi^t_t, \dot x^t_t}-H(x, \xi^t_t)+\int_0^t \frac{\partial}{\partial s}\jbracket{\xi^t_s, \frac{\partial x^t_s}{\partial t}}\, ds \nonumber \\
        \begin{split}&=\jbracket{\xi^t_t, \dot x^t_t}-H(x, \xi^t_t) \\
        &\quad +\jbracket{\xi^t_t(x, y), \left.\frac{\partial x^t_s}{\partial t}(x, y)\right|_{s=t}}-\jbracket{\xi^t_0(x, y), \frac{\partial x^t_0}{\partial t}(x, y)}. \end{split}\label{eq_time_derivative_of_action} 
    \end{align}

    We calculate tangent vectors $\partial_t x^t_s(x, y)|_{s=t}\in T_x M$ and $\partial_t x^t_0(x, y)\in T_yM$. $\partial_t x^t_0(x, y)=0$ is easily proved by $x^t_0(x, y)=y$ for any $t$. $\partial_t x^t_s(x, y)|_{s=t}$ is related to a velocity vector by differentiating the identity $x^t_t(x, y)=x$ by $t$: 
    \begin{equation}\label{eq_perturbation_velocity}
        0=\frac{\partial}{\partial t} x^t_t (x, y)=\left.\frac{\partial x^t_s}{\partial t}(x, y)\right|_{s=t}+\dot x^t_t(x, y). 
    \end{equation} 
    Hence \eqref{eq_time_derivative_of_action}, \eqref{eq_perturbation_velocity} and $\partial_t x^t_0(x, y)=0$ imply 
    \[
        \frac{\partial S}{\partial t}(t, x, y)+H(x, \xi^t_t(x, y))=\jbracket{\xi^t_t(x, y), \dot x^t_t(x, y)+\left.\frac{\partial x^t_s}{\partial t}(x, y)\right|_{s=t}}=0. 
    \]
    Combining this equation with $d_xS(t, x, y)=\xi^t_t(x, y)$ of \eqref{eq_generating}, we obtain the desired equation \eqref{eq_hj}. 
\end{proof}

Similarly to the problem of the divergence of the momentum mentioned in Subsection \ref{subsec_classical_mechanics}, the action $S(t, x, y)$ diverges as $t\to +0$. We can observe the divergence by considering the free particle: $S(t, x, y)=d(x, y)^2/2t$, where $d(x, y)$ is the distance between two points $x$ and $y$ in $M$ associated with the Riemannian metric $g$. In order to pick up the order of divergence, we introduce another action function, which has a better analytic properties than $S(t, x, y)$ and has an easy relation to it. Fix $t_0>0$ and $N\subset M\times M$ as in Theorem \ref{theo_action_well_defined}. For $(t, x, y)\in [-t_0, t_0]\times N$, let $(q^t_s(x, y), p^t_s(x, y))\in \Gamma^{H_t}_{1, x, y}$ be the unique lowest energy classical path from $y$ to $x$ with respect to the rescaled Hamiltonian $H_t=t^2H\circ\Theta_t$. We define the action $\Phi(t, x, y)$ of $(q^t_s(x, y), p^t_s(x, y))$ as 
\[
\Phi(t, x, y):=\int_0^1 \left(\frac{1}{2}\left|\frac{d}{ds}q^t_s(x, y)\right|_g^2-t^2V(q^t_s(x, y))\right) \,ds. 
\]

Note that $\Phi(t, x, y)$ is a smooth function not only in $(0, t_0]\times N$, but also in $[-t_0, t_0]\times N$ by the smoothness of the solutions of the initial value problem of ordinary differential equations with respect to initial values and parameters. We use the same character $t$ in $S(t, x, y)$ and $\Phi(t, x, y)$, though the meaning of $t$ is different. $t\in (0, t_0]$ in $S(t, x, y)$ is the length of the time interval, while $t\in [-t_0, t_0]$ in $\Phi(t, x, y)$ is a parameter of the perturbation by the potential $t^2V(x)$. 

There exists an easy relation between $S(t, x, y)$ and $\Phi(t, x, y)$. 

\begin{theo}\label{theo_action_scaling}
We have
\[
S(t, x, y)=t^{-1}\Phi(t, x, y)
\]
for all $(t, x, y)\in (0, t_0]\times N$. 
\end{theo}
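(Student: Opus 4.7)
The plan is to deduce the identity from the scaling relation between the flows of $H$ and $H_t$ provided by Lemma \ref{lemm_scaling}, together with a change of variables in the action integral. Concretely, I would first establish a natural bijection between $\Gamma^H_{t, x, y}$ and $\Gamma^{H_t}_{1, x, y}$ given by reparametrizing time $s\mapsto ts$ and rescaling momentum by $t^{-1}$, and then verify directly that the Lagrangians transform in a way that produces the prefactor $t^{-1}$.

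First I would use Lemma \ref{lemm_scaling} as follows. If $\alpha(s)=(q(s),p(s))$ is an integral curve of $X_{H_t}$ on $[0,1]$, then $\tilde\alpha(u):=\Theta_t(\alpha(t^{-1}u))=(q(t^{-1}u),\,t^{-1}p(t^{-1}u))$ is an integral curve of $X_H$ on $[0,t]$; this is exactly what $\varphi_u=\Theta_t\circ\varphi^t_{t^{-1}u}\circ\Theta_t^{-1}$ says, together with the relation $\Theta_t^{-1}\circ\Theta_t=\mathrm{id}$. This map restricts to a bijection
\[
\Gamma^{H_t}_{1, x, y}\longrightarrow \Gamma^H_{t, x, y},\quad (q(s),p(s))\longmapsto (q(t^{-1}u),\,t^{-1}p(t^{-1}u)),
\]
because the base-projection still runs from $y$ to $x$, now on $[0,t]$. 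Under this bijection the initial momentum is multiplied by $t^{-1}$, hence $|p(0)|_g$ is minimized if and only if the initial momentum of $\tilde\alpha$ is minimized. Combined with Theorem \ref{theo_action_well_defined} and Proposition \ref{prop_low_energy_scaled} (with $\mu$ chosen compatibly after possibly shrinking $t_0$ and $N$), the bijection therefore sends the unique lowest energy path $(q^t_s(x,y),p^t_s(x,y))$ for $H_t$ to the unique lowest energy path $(x^t_u(x,y),\xi^t_u(x,y))$ for $H$, giving the pointwise identity $x^t_u(x,y)=q^t_{t^{-1}u}(x,y)$ for $u\in[0,t]$.

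Next I would substitute this identity in the definition of $S(t,x,y)$. Differentiating $x^t_u=q^t_{t^{-1}u}$ in $u$ gives $\frac{d}{du}x^t_u=t^{-1}\frac{d}{ds}q^t_s\big|_{s=t^{-1}u}$, so with the change of variables $u=ts$ one computes
\[
S(t,x,y)=\int_0^t\!\Bigl(\tfrac{1}{2}\bigl|\tfrac{d}{du}x^t_u\bigr|_g^2-V(x^t_u)\Bigr)du
=\int_0^1\!\Bigl(\tfrac{1}{2t^2}\bigl|\tfrac{d}{ds}q^t_s\bigr|_g^2-V(q^t_s)\Bigr)t\,ds,
\]
which equals
\[
t^{-1}\int_0^1\!\Bigl(\tfrac{1}{2}\bigl|\tfrac{d}{ds}q^t_s\bigr|_g^2-t^2V(q^t_s)\Bigr)ds=t^{-1}\Phi(t,x,y),
\]
proving the claim.

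The only genuinely delicate step is the second one: making sure that the bijection identified above really does match the lowest energy paths on both sides. This amounts to checking that the thresholds from Theorem \ref{theo_action_well_defined} (an upper bound of the form $\mu/t$ on $|\xi(0)|_g$) and from Proposition \ref{prop_low_energy_scaled} (an upper bound $\mu$ on $|p(0)|_g$) can be chosen consistently under the rescaling $\xi\leftrightarrow t^{-1}p$; if needed one shrinks $t_0$ and the constants so that the same $\mu$ works on both sides. Once this bookkeeping is in place, the remaining calculation is a one-line change of variables.
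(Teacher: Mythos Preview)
Your proposal is correct and follows essentially the same route as the paper: use Lemma \ref{lemm_scaling} to identify the lowest energy $H$-path on $[0,t]$ with the lowest energy $H_t$-path on $[0,1]$ via the reparametrization $u\mapsto t^{-1}u$ and momentum rescaling, deduce $x^t_u(x,y)=q^t_{t^{-1}u}(x,y)$, and then perform the change of variables $u=ts$ in the action integral. The paper's version is slightly more concrete (it fixes $\eta=\xi^t_0(x,y)$, shows $t\eta$ is an admissible initial momentum for $H_t$, and invokes uniqueness to conclude $p^t_0(x,y)=t\eta$), whereas you phrase the same step as a momentum-preserving bijection $\Gamma^{H_t}_{1,x,y}\to\Gamma^H_{t,x,y}$; your explicit remark about matching the thresholds $\mu$ and $\mu/t$ is exactly the point the paper uses implicitly.
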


\begin{proof}
Let $(x^t_s(x, y), \xi^t_s(x, y))\in \Gamma^H_{t, x, y}$ be the lowest energy path with respect to $H$ and $(q^t_s(x, y), p^t_s(x, y))\in \Gamma^{H_t}_{1, x, y}$ be that with respect to $H_t$. Take $\eta=\xi^t_0(x, y)\in T^*_yM$. Then $(x^t_s(x, y), \xi^t_s(x, y))=\varphi_s(y, \eta)$. Let $(\pi\circ\varphi_s)(y, \eta)=\tilde x_s(y, \eta)$. We also put $\eta^\prime:=p^t_0(x, y)$ and $\gamma^t_s(x, y)=\varphi^t_{t^{-1}s}(y, \eta^\prime)$, $(\pi\circ \varphi^t_{t^{-1}s})(y, \eta^\prime)=\tilde q^t_s(y, \eta^\prime)$. By Lemma \ref{lemm_scaling}, we obtain
\[
\tilde x_s(y, \eta)=\tilde q^t_{t^{-1}s}(y, t\eta). 
\]
Substitute $s=t$ to this and we have $x=\tilde q^t_1(y, t\eta)$. This implies that $p^t_0(x, y)=t\eta$. By the definition of $\eta^\prime$, we obtain $\eta^\prime=t\eta$. Hence $x^t_s(x, y)=q^t_{t^{-1}s}(x, y)$. 

We calculate $S(t, x, y)$ as 
\begin{align*}
S(t, x, y)
&=\int_0^t \left(\frac{1}{2}\left|\frac{d}{ds}x^t_s(x, y)\right|_g^2-V(x^t_s(x, y))\right) \,ds \\
&=\int_0^t \left(\frac{1}{2}t^{-2}\left|\left.\frac{dq^t_\sigma}{d\sigma}\right|_{\sigma=t^{-1}s}(x, y)\right|_g^2-V(q^t_{t^{-1}s}(x, y))\right) \,ds \\
&=t^{-1}\int_0^1 \left(\frac{1}{2}\left|\frac{dq^t_s}{ds}(x, y)\right|_g^2-t^2V(q^t_s(x, y))\right) \,ds \\
&=t^{-1}\Phi(t, x, y). \qedhere
\end{align*}
\end{proof}

The function $\Phi(t, x, y)$ plays an important role in Section \ref{sect_FIO} due to the following properties. 

\begin{theo}\label{theo_phase_function}
    Take $\mu_0>0$ as in Lemma \ref{lemm_low_energy_diffeo}. If we take smaller $t_0>0$ if necessary, then the following statements hold. 
    \begin{enumerate}
    \item $(t, y, \eta)\mapsto \varphi^t_1(y, \eta)$ is smooth; 
    \item $\varphi^t_1: \mathcal{V}(M, \mu_0)\to \varphi^t_1(\mathcal{V}(M, \mu_0))$ is a diffeomorphism for each $t\in [-t_0, t_0]$; 
    \item If $(t, x, y)\in [-t_0, t_0]\times N$, then $(y, -d_y\Phi (t, x, y))\in \mathcal{V}(M, \mu_0)$ and 
    \[
    \varphi^t_1(y, -d_y\Phi (t, x, y))=(x, d_x\Phi (t, x, y)). 
    \]
    \end{enumerate}
    \end{theo}

\begin{proof}
We introduce a shorthand notation $\mathcal{V}=\mathcal{V}(M, \mu_0)$. We take $t_1>0$ as in Lemma \ref{lemm_low_energy_diffeo}. Take smaller $t_0$ if necessary, we can assume that $t_0=t_1$. Consider the family of diffeomorphisms 
\[
\{\varphi^t_1: \mathcal{V}\to \varphi^t_1(\mathcal{V})\}_{t\in [-t_0, t_0]}
\]
generated by $\Phi$. 
This obviously satisfies the assertions 1 and 2 in Theorem \ref{theo_phase_function}. We prove the assertion 3. By Theorem \ref{theo_action_well_defined}, we can take the unique minimizer $(x^t_s(x, y), \xi^t_s(x, y))\in \Gamma^H_{t, x, y}(\mu_0/t)$ of initial energy in $\Gamma^H_{t, x, y}(\mu_0/t)$. Then, since $S(t, x, y)$ generates $(x^t_s, \xi^t_s)$ by Theorem \ref{theo_property_of_action}, we obtain
\[
(y, -d_yS(t, x, y))=(y, \xi^t_0(x, y))\in \mathcal{V}(M, \mu_0/t)
\]
and $\varphi_t(y, -d_y S(t, x, y))=(x, d_xS(t, x, y))$. We recall that $\varphi_t=\Theta_t \circ \varphi^t_1 \circ \Theta_t^{-1}$ by Lemma \ref{lemm_scaling} and $\Phi(t, x, y)=t^{-1}S(t, x, y)$ by Theorem \ref{theo_action_scaling}. Then 
\[
|-d_y\Phi (t, x, y)|_g=t |d_y S(t, x, y)|_g< t \times \mu_0/t=\mu_0, 
\]
which implies $(y, -d_y\Phi(t, x, y))\in \mathcal{V}(M, \mu_0)$, and 
\begin{align*} \varphi^t_1 (y, -d_y \Phi(t, x, y))
    &=(\Theta_t^{-1} \circ \varphi_t)(y, -d_y S(t, x, y)) 
    =(y, t\xi^t_0(x, y)) \\
    &=(y, td_x S(t, x, y))=(y, d_x \Phi(t, x, y)). \qedhere\end{align*}
\end{proof}

In the end of this chapter, we record the difference of the action $\Phi(t, x, y)$ from the free motion. 
Let $d: M\times M \to [0, \infty)$ be the distance function associated with the Riemann matric $g$. 

\begin{theo}\label{theo_difference_from_free}
Let $\phi: [-t_0, t_0]\times N\to \mathbb{R}$ be a smooth function defined as 
\[
\phi(t, x, y):=\int_0^1 (1-s)(\partial_t^2\Phi)(st, x, y)\, ds. 
\]
Then $\Phi (t, x, y)=d(x, y)^2/2+t^2\phi(t, x, y)$. 
\end{theo}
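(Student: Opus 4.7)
The statement is a second-order Taylor expansion of $\Phi$ in $t$ around $t=0$ with integral remainder. Recall the identity
\[
f(t) = f(0) + f'(0)\, t + t^2 \int_0^1 (1-s)\, f''(st)\, ds
\]
for $f \in C^2((-t_0, t_0))$, obtained by integrating by parts twice. Applied to $f(t) := \Phi(t, x, y)$ with $(x, y) \in N$ held fixed, the claim reduces to verifying the two identities
\[
\Phi(0, x, y) = \tfrac{1}{2}\, d(x, y)^2 \qquad\text{and}\qquad (\partial_t \Phi)(0, x, y) = 0.
\]
The joint smoothness of $\Phi$ in $(t, x, y)$ across $t = 0$ needed to run the Taylor formula is built into Proposition \ref{prop_low_energy_scaled}: the lowest energy path has the form $\varphi^t_s(y, \zeta(t, x, y))$ with $\zeta$ smooth and $\varphi^t_s$ smooth in all parameters.

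For the first identity, at $t = 0$ the scaled Hamiltonian collapses to $H_0(x, \xi) = \tfrac{1}{2}|\xi|_g^2$, whose classical trajectories are geodesics parametrized at constant speed. On the neighbourhood $N$ of $\rmop{diag}(M)$ furnished earlier, the minimizing geodesic from $y$ to $x$ over $[0, 1]$ is the unique element of $\Gamma^{H_0}_{1, x, y}(\mu)$, since its initial momentum has norm $d(x, y)$ while any other classical path joining the two points in unit time has strictly larger speed. Consequently $|\dot q^0_s(x, y)|_g \equiv d(x, y)$, and integrating the Lagrangian over $[0,1]$ gives $\Phi(0, x, y) = \tfrac{1}{2} d(x, y)^2$.

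For the derivative identity, I would differentiate under the integral and use the Euler-Lagrange equation for the extremal $q^t_s$ of the Lagrangian $L(q, \dot q, t) = \tfrac{1}{2}|\dot q|_g^2 - t^2 V(q)$. Combining $\partial_q L = \tfrac{d}{ds}\partial_{\dot q} L$ with $\partial_t \dot q^t_s = \tfrac{d}{ds}\partial_t q^t_s$, the variation-of-path contribution reassembles as the total derivative $\tfrac{d}{ds}\bigl((\partial_{\dot q} L)\cdot \partial_t q^t_s\bigr)$; the fixed endpoints $q^t_0 = y$, $q^t_1 = x$ force $\partial_t q^t_s$ to vanish at $s = 0, 1$, killing the boundary terms. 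Only the explicit $t$-dependence of $L$ survives, and one obtains
\[
\partial_t \Phi(t, x, y) = -2t \int_0^1 V(q^t_s(x, y))\, ds,
\]
which vanishes at $t = 0$. Substituting the two values into the Taylor formula gives precisely $\Phi(t, x, y) = d(x, y)^2/2 + t^2 \phi(t, x, y)$. There is no serious obstacle here; the only delicate points are justifying the differentiation under the integral sign and the smooth $(t, x, y)$-dependence of $q^t_s$ at $t=0$, both of which follow from the smooth parameter dependence of ODE solutions together with Proposition \ref{prop_low_energy_scaled}.
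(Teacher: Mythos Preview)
Your argument is correct; the Taylor-with-remainder reduction and the identification $\Phi(0,x,y)=d(x,y)^2/2$ match the paper exactly. The only difference is how you obtain $(\partial_t\Phi)(0,x,y)=0$. You compute $\partial_t\Phi$ directly, using the Euler--Lagrange equation and the fixed endpoints to kill the path-variation terms and isolate the explicit $t$-dependence $\partial_t L = -2tV$. The paper instead observes that $H_t(x,\xi)=\tfrac12|\xi|_g^2+t^2V(x)$ is even in $t$, hence $q^t_s=q^{-t}_s$ and $\Phi(t,x,y)=\Phi(-t,x,y)$, so the first derivative at $t=0$ vanishes automatically. The symmetry argument is shorter and sidesteps the bookkeeping of differentiating under the integral sign, while your computation yields the stronger explicit formula $\partial_t\Phi(t,x,y)=-2t\int_0^1 V(q^t_s(x,y))\,ds$ valid for all $t$, not just $t=0$; either route is entirely adequate here.
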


\begin{proof}
Since $H_t(x, \xi)=H_{-t}(x, \xi)$ for all $(t, x, \xi)\in \mathbb{R}\times T^*M$, it follows that $q^t_s(x, y)=q^{-t}_s(x, y)$ for all $(t, x, y)\in [-t_0, t_0]\times N$ and $-1 \leq s\leq 1$. Hence $\Phi(t, x, y)=\Phi(-t, x, y)$ for all $(t, x, y)\in [-t_0, t_0]\times N$. This means that $\partial_t \Phi(0, x, y)=0$. Combining this with $\Phi(0, x, y)=d(x, y)^2/2$, we obtain the conclusion. 
\end{proof}

\subsection{Properties of Morette-Van Vleck determinants}\label{subsec_mvv}

We summarize the properties of Morette-Van Vleck determinants. 

\subsubsection{The value on the diagonal}\label{subsubsec_action}

As in the definition of the Morette-Van Vleck determinant \eqref{eq_morette_van_vleck}, we define 
\begin{equation}\label{eq_definiton_mvv_scaled}
    D_\Phi(t, x, y):= g_\lambda (x)^{-1/2} g_\mu (y)^{-1/2} \det \left(-\frac{\partial^2 \Phi}{\partial x_j\partial y_k}(t, x, y)\right)_{j, k=1}^n. \end{equation}
Since $\Phi(t, x, y)=tS(t, x, y)$, we have the relation $D_\Phi (t, x, y)=t^n D(t, x, y)$. The value of the Morette-Van Vleck determinant on the diagonal is important for the short time behavior of the short-time approximate solution $E_\hbar (t)$ (Proposition \ref{prop_fundamental_properties} (ii)). 

\begin{prop}\label{prop_amplitude_diagonal}
    We have $D_\Phi (0, x, x)=1$ for all $x\in M$. 
    \end{prop}
    
    \begin{proof}
    The action $S(t, x, y)=t^{-1}\Phi(t, x, y)$ satisfies the Hamilton-Jacobi equation
    \[
    \frac{\partial S}{\partial t}(t, x, y)+\frac{1}{2}|\rmop{grad}\nolimits_x S(t, x, y)|_g^2+V(x)=0
    \]
    by Theorem \ref{theo_property_of_action}. 
    Thus $\Phi(t, x, y)$ satisfies
    \[
    -\Phi(t, x, y)+t\frac{\partial \Phi}{\partial t}(t, x, y)+\frac{1}{2}|\rmop{grad}\nolimits_x \Phi(t, x, y)|_g^2+t^2 V(x)=0. 
    \]
    We set $t=0$ and obtain
    \begin{equation}
    -\Phi(0, x, y)+\frac{1}{2}|\rmop{grad}\nolimits_x \Phi(0, x, y)|_g^2=0. \label{eq_stat_HJ_free}
    \end{equation}
    We take local coordinates $\phi =(x_1, \ldots, x_n)$. This induces the local coordinates $\phi\times \phi=(x_1, \ldots, x_n, y_1, \ldots, y_n)$ in $M\times M$. We take a differential $\partial_{y_j}$ of both sides of \eqref{eq_stat_HJ_free}. Then we have
    \[
    -\frac{\partial\Phi}{\partial y_j}(0, x, y)+\sum_{k, l=1}^n g^{kl}(x) \frac{\partial \Phi}{\partial x_k}(0, x, y)\frac{\partial^2 \Phi}{\partial x_l \partial y_j}(0, x, y)=0. 
    \]
    We expand $\partial_{y_j}\Phi$ and $\partial_{x_k}\Phi$ as 
    \begin{equation}\label{eq_mvv_expand_y}
        \begin{split}
            \frac{\partial \Phi}{\partial y_j}(0, x, y)
    &=\frac{\partial \Phi}{\partial y_j}(0, y, y)+\sum_{i=1}^n \frac{\partial^2 \Phi}{\partial x_i \partial y_j}(0, y, y)(x_i-y_i)+O(|x-y|^2) \\
    &=\frac{\partial \Phi}{\partial y_j}(0, x, x)+\sum_{i=1}^n \frac{\partial^2 \Phi}{\partial x_i \partial y_j}(0, x, x)(x_i-y_i)+O(|x-y|^2)
        \end{split}
    \end{equation}
    and
    \begin{equation}\label{eq_mvv_expand_x}
    \frac{\partial \Phi}{\partial x_k}(0, x, y)=\frac{\partial \Phi}{\partial x_k}(0, x, x)-\sum_{i=1}^n \frac{\partial^2 \Phi}{\partial x_k \partial y_i}(0, x, x)(x_i-y_i)+O(|x-y|^2). 
    \end{equation}

    Recalling that $\Phi(0, x, y)=d(x, y)^2/2$, we obtain $\partial_{y_j}\Phi(0, x, x)=\partial_{x_k}\Phi(0, x, x)=0$. Thus \eqref{eq_mvv_expand_y} and \eqref{eq_mvv_expand_x} become

    \begin{equation}\label{eq_mvv_expand_y2}
        \begin{split}
            \frac{\partial \Phi}{\partial y_j}(0, x, y)
    &=\sum_{i=1}^n \frac{\partial^2 \Phi}{\partial x_i \partial y_j}(0, y, y)(x_i-y_i)+O(|x-y|^2) \\
    &=\sum_{i=1}^n \frac{\partial^2 \Phi}{\partial x_i \partial y_j}(0, x, x)(x_i-y_i)+O(|x-y|^2)
        \end{split}
    \end{equation}
    and
    \begin{equation}\label{eq_mvv_expand_x2}
        \frac{\partial \Phi}{\partial x_k}(0, x, y)=\frac{\partial \Phi}{\partial x_k}(0, x, x)-\sum_{i=1}^n \frac{\partial^2 \Phi}{\partial x_k \partial y_i}(0, x, x)(x_i-y_i)+O(|x-y|^2)
        \end{equation}
    respectively. 
    We substitute \eqref{eq_mvv_expand_y2} and \eqref{eq_mvv_expand_x2} to \eqref{eq_stat_HJ_free} and obtain 
    \begin{align*}
    &\sum_{i=1}^n \left( -\frac{\partial^2 \Phi}{\partial x_i \partial y_j}(0, x, x)-\sum_{k, l=1}^n g^{kl}(x) \frac{\partial^2 \Phi}{\partial x_k\partial y_i}(0, x, x)\frac{\partial^2 \Phi}{\partial x_l \partial y_j}(0, x, x)\right) (x_i-y_i) \\
    &=O(|x-y|^2). 
    \end{align*}
    Since this holds for all $y$ near $x$, we obtain
    \[
    -\frac{\partial^2 \Phi}{\partial x_i \partial y_j}(0, x, x)+\sum_{k, l=1}^n g^{kl}(x) \frac{\partial^2 \Phi}{\partial x_k\partial y_i}(0, x, x)\frac{\partial^2 \Phi}{\partial x_l \partial y_j}(0, x, x)=0
    \]
    for all indices $i, j$. In the matrix form, this is equivalent to
    \[
    -A(x)-\transp{A(x)}G(x)^{-1}A(x)=0, 
    \]
    where $A(x):=(\partial_{x_i}\partial_{y_j}\Phi(0, x, x))_{i, j}$ and $G(x):=(g_{ij}(x))_{i, j}$. Since $A(x)$ is a regular matrix, we obtain
    \[
    \det (-G(x)^{-1}A(x))=1. 
    \]
    The left hand side is equal to $D_\Phi (0, x, x)$. 
    \end{proof}

Proposition \ref{prop_amplitude_diagonal} shows that, taking $t_0>0$ and a neighborhood $N$ of $\rmop{diag}(M)$ if necessary, we can define the square root $\sqrt{D(t, x, y)}=t^{-n/2}\sqrt{D_\Phi (t, x, y)}$ in $(0, t_0] \times N$. Now we summarize the condition for $t_0>0$ and $N\supset \rmop{diag}(M)$. 

    Throughout the paper, we take a small number $t_0>0$ and a small neighborhood $N$ of $\rmop{diag}(M)$ such that the assertion in Theorem \ref{theo_action_well_defined}, that in Theorem \ref{theo_phase_function} and the positivity of $D_\Phi|_{[0, t_0]\times N}$ hold. 


\subsubsection{Transport equations}
The transport equation (Theorem \ref{theo_1/2_transport}) and the diagonal value of the Laplacian (Theorem \ref{theo_curvature}) are employed in order to prove the consistency.

We begin with a transport equation which the Morette-Van Vleck itself satisfies.  

\begin{theo}\label{theo_1_transport_equation}
The Morette-Van Vleck determinant defined by \eqref{eq_morette_van_vleck} satisfies the transport equation
\begin{equation}\label{eq_1_transport}
\frac{\partial D}{\partial t} (t, x, y)+g(\mathop{\mathrm{grad}}\nolimits_x S(t, x, y), \rmop{grad}\nolimits_x D(t, x, y))+D(t, x, y)\triangle_x S(t, x, y)=0. 
\end{equation}
\end{theo}

\begin{proof}
Let $(x, y)\in N$. We use the same notation in the definition of the Morette-Van Vleck determinant \eqref{eq_morette_van_vleck}. The time derivative of the Morette-Van Vleck determinant is 
\begin{align*}
\partial_t D=(-1)^n g_\lambda (x)^{-1/2}g_\mu(y)^{-1/2} \sum_{j=1}^n \det 
\begin{pmatrix}
\partial_{x_1}\partial_{y_1}S & \cdots & \partial_{x_1}\partial_{y_n}S \\
 \vdots & & \vdots \\
\partial_{x_{j-1}}\partial_{y_1}S & \cdots & \partial_{x_{j-1}}\partial_{y_n}S \\
\partial_{x_j}\partial_{y_1}\partial_t S & \cdots & \partial_{x_j}\partial_{y_n}\partial_t S \\
\partial_{x_{j+1}}\partial_{y_1}S & \cdots & \partial_{x_{j+1}}\partial_{y_n}S \\
 \vdots & & \vdots \\
\partial_{x_n}\partial_{y_1}S & \cdots & \partial_{x_n}\partial_{y_n}S 
\end{pmatrix}. 
\end{align*}
The Hamilton-Jacobi equation (Theorem \ref{theo_property_of_action}) yields 
\begin{align*}
\partial_{x_j}\partial_{y_k}\partial_t S
&=-\partial_{x_j}\partial_{y_k}\left( \frac{1}{2}|\rmop{grad}\nolimits_x S|_g^2+V(x)\right) \\
&=-\sum_{l, m=1}^n \partial_{x_j} (g_\lambda^{lm}(x) \partial_{x_l}\partial_{y_k}S \cdot \partial_{x_m}S) \\
&=-\sum_{l, m=1}^n \partial_{x_j} (g_\lambda (x)^{-1/2} \partial_{x_l}\partial_{y_k}S) \cdot g_\lambda(x)^{1/2} g_\lambda^{lm}(x)\partial_{x_m}S \\
&\quad -\sum_{l, m=1}^n  g_\lambda (x)^{-1/2} \partial_{x_l}\partial_{y_k}S \cdot \partial_{x_j}(g_\lambda(x)^{1/2} g_\lambda^{lm}(x)\partial_{x_m}S). 
\end{align*}
Here $(g^{jk}_\lambda (x))_{j,k}$ is the inverse matrix of $(g^\lambda_{jk})_{j,k}$ defined by $g_x=\sum g^\lambda_{jk}(x)dx_j dx_k$. Thus if we set 
\[
A_{jl}:=
\begin{pmatrix}
\partial_{x_1}\partial_{y_1}S & \cdots & \partial_{x_1}\partial_{y_n}S \\
 \vdots & & \vdots \\
\partial_{x_{j-1}}\partial_{y_1}S & \cdots & \partial_{x_{j-1}}\partial_{y_n}S \\
\partial_{x_j} (g_\lambda (x)^{-1/2} \partial_{x_l}\partial_{y_1}S) & \cdots & \partial_{x_j} (g_\lambda (x)^{-1/2} \partial_{x_l}\partial_{y_n}S) \\
\partial_{x_{j+1}}\partial_{y_1}S & \cdots & \partial_{x_{j+1}}\partial_{y_n}S \\
 \vdots & & \vdots \\
\partial_{x_n}\partial_{y_1}S & \cdots & \partial_{x_n}\partial_{y_n}S 
\end{pmatrix}
\]
and 
\[
B_{jl}:=
\begin{pmatrix}
\partial_{x_1}\partial_{y_1}S & \cdots & \partial_{x_1}\partial_{y_n}S \\
 \vdots & & \vdots \\
\partial_{x_{j-1}}\partial_{y_1}S & \cdots & \partial_{x_{j-1}}\partial_{y_n}S \\
\partial_{x_l}\partial_{y_1}S & \cdots & \partial_{x_l}\partial_{y_n}S \\
\partial_{x_{j+1}}\partial_{y_1}S & \cdots & \partial_{x_{j+1}}\partial_{y_n}S \\
 \vdots & & \vdots \\
\partial_{x_n}\partial_{y_1}S & \cdots & \partial_{x_n}\partial_{y_n}S 
\end{pmatrix}, 
\]
then 
\begin{align*}
&\sum_{j=1}^n\det 
\begin{pmatrix}
\partial_{x_1}\partial_{y_1}S & \cdots & \partial_{x_1}\partial_{y_n}S \\
 \vdots & & \vdots \\
\partial_{x_{j-1}}\partial_{y_1}S & \cdots & \partial_{x_{j-1}}\partial_{y_n}S \\
\partial_{x_j}\partial_{y_1}\partial_t S & \cdots & \partial_{x_j}\partial_{y_n}\partial_t S \\
\partial_{x_{j+1}}\partial_{y_1}S & \cdots & \partial_{x_{j+1}}\partial_{y_n}S \\
 \vdots & & \vdots \\
\partial_{x_n}\partial_{y_1}S & \cdots & \partial_{x_n}\partial_{y_n}S 
\end{pmatrix} \\
=&
-\sum_{l, m=1}^n g_\lambda(x)^{1/2} g_\lambda^{lm}(x)\partial_{x_m}S\sum_{j=1}^n \det A_{jl}
 \\
&-\sum_{j, l, m=1}^n g_\lambda(x)^{-1/2}\partial_{x_j}(g_\lambda (x)^{1/2} g_\lambda^{lm}(x)\partial_{x_m}S)\det B_{jl}. 
\end{align*}

Hence 
\begin{align}
&\partial_t D= \nonumber \\
&-(-1)^n g_\lambda (x)^{-1/2}g_\mu(y)^{-1/2}\sum_{l, m=1}^n g_\lambda(x)^{1/2} g_\lambda^{lm}(x)\partial_{x_m}S\sum_{j=1}^n \det A_{jl} \label{eq_transport_matrix}\\
&-(-1)^n g_\lambda (x)^{-1/2}g_\mu(y)^{-1/2}\sum_{j, l, m=1}^n g_\lambda(x)^{-1/2}\partial_{x_j}(g_\lambda (x)^{1/2} g_\lambda^{lm}(x)\partial_{x_m}S)\det B_{jl}. \nonumber
\end{align}
We claim that 
\begin{equation}\label{eq_Ajl}
\sum_{j=1}^n \det A_{jl}=(-1)^n g_\mu(y)^{1/2}\partial_{x_l}D(t, x, y)
\end{equation}
and 
\begin{equation}\label{eq_Bjl}
\det B_{jl}=(-1)^n \delta_{jl}g_\lambda (x)^{1/2}g_\mu (y)^{1/2}D(t, x, y). 
\end{equation}
Here $\delta_{jl}$ is the Kronecker delta
\[
\delta_{jl}:=
\begin{cases}
1 & \text{if } j=l, \\
0 & \text{if } j\neq l. 
\end{cases}
\]

Firstly we prove \eqref{eq_Bjl}. If $j\neq l$, then the $j$th row of $B_{jl}$ corresponds to the $l$th row of $B_{jl}$. Thus $\det B_{jl}=0$ for $j\neq l$. If $j=l$, then $B_{jl}=B_{jj}$ is just the matrix $(\partial_{x_j}\partial_{y_k}S)_{j, k=1}^n=-g_\lambda (x)^{1/2}g_\mu (y)^{1/2}D$. Hence \eqref{eq_Bjl} is proved. 

Secondly we prove \eqref{eq_Ajl}. By the Leibnitz rule, we obtain
\[
\partial_{x_j} (g_\lambda (x)^{-1/2} \partial_{x_l}\partial_{y_k}S)
=\partial_{x_j}(g_\lambda (x)^{-1/2}) \cdot\partial_{x_l}\partial_{y_k}S
+g_\lambda (x)^{-1/2} \partial_{x_j}\partial_{x_l}\partial_{y_k}S. 
\]
Thus \eqref{eq_Ajl} is proved by the calculation
\begin{align*}
&\sum_{j=1}^n \det A_{jl} \\
&=\sum_{j=1}^n \partial_{x_j}g_\lambda (x)^{-1/2} \cdot \det B_{jl}+g_\lambda (x)^{-1/2} \partial_{x_l} \det (\partial_{x_j}\partial_{y_k}S)_{j, k=1}^n \\
&=\partial_{x_l} (g_\lambda (x)^{-1/2})\cdot \det (\partial_{x_j}\partial_{y_k}S)_{j, k=1}^n+g_\lambda (x)^{-1/2} \partial_{x_l} \det (\partial_{x_j}\partial_{y_k}S)_{j, k=1}^n \\
&=\partial_{x_l}(g_\lambda (x)^{-1/2}\det (\partial_{x_j}\partial_{y_k}S)_{j, k=1}^n) \\
&=(-1)^n g_\mu(y)^{1/2}\partial_{x_l}D(t, x, y). 
\end{align*}

By \eqref{eq_Ajl} and \eqref{eq_Bjl}, we obtain 
\begin{align*}
&\sum_{l, m=1}^n g_\lambda(x)^{1/2} g_\lambda^{lm}(x)\partial_{x_m}S\sum_{j=1}^n \det A_{jl} \\
&=(-1)^ng_\lambda(x)^{1/2}g_\mu(y)^{1/2} \sum_{l, m=1}^n g_\lambda^{lm}(x)\partial_{x_m}S \cdot \partial_{x_l}D(t, x, y) \\
&=(-1)^n g_\lambda(x)^{1/2}g_\mu(y)^{1/2} g(\rmop{grad}\nolimits_x S, \mathrm{grad}_x\, D)
\end{align*}
and 
\begin{align*}
&\sum_{j, l, m=1}^n g_\lambda(x)^{-1/2}\partial_{x_j}(g_\lambda (x)^{1/2} g_\lambda^{lm}(x)\partial_{x_m}S)\det B_{jl} \\
&=(-1)^n g_\lambda(x)^{1/2}g_\mu(y)^{1/2}D\sum_{l, m=1}^n g_\lambda(x)^{-1/2}\partial_{x_l}(g_\lambda (x)^{1/2} g_\lambda^{lm}(x)\partial_{x_m}S) \\
&=(-1)^n g_\lambda(x)^{1/2}g_\mu(y)^{1/2}D\triangle_x S. 
\end{align*}
Substituting these to \eqref{eq_transport_matrix}, we obtain
\[
\partial_t D=-g(\rmop{grad}\nolimits_x S, \mathrm{grad}_x\, D)-D\triangle_x S. \qedhere
\]
\end{proof}

Theorem \ref{theo_1_transport_equation} implies that $\sqrt{D}$ satisfies another transport equation: 
\begin{theo}\label{theo_1/2_transport}
    The square root $\sqrt{D}$ of the Morette-Van Vleck determinant $D$ satisfies the transport equation
\begin{align*}
&\frac{\partial}{\partial t} \sqrt{D(t, x, y)}+g(\rmop{grad}\nolimits_x S(t, x, y), \rmop{grad}\nolimits_x \sqrt{D(t, x, y)}) \\
&+\frac{1}{2}\sqrt{D(t, x, y)}\triangle_x S(t, x, y)=0. 
\end{align*}
\end{theo}


\subsubsection{Laplacian of the Morette-Van Vleck determinant}

Finally we calculate the Laplacian of the square root of the Van Vlack determinant on the diagonal, which is the origin of the modification term $R(x)/12$ of the Hamiltonian \eqref{eq_modified_hamiltonian}. 

\begin{theo}\label{theo_curvature}
    For any point $y\in M$, 
    \[
    \triangle_x \sqrt{D_\Phi (0, x, y)}|_{x=y}=\frac{1}{6}R(y). 
    \]
    \end{theo}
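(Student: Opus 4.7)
My plan is to work in Riemann normal coordinates centered at $x_0$ and reduce the problem to expanding the metric's volume density. In such coordinates $g_{ij}(x_0)=\delta_{ij}$ and all Christoffel symbols vanish at $x_0$, so $\triangle_g$ acts as the ordinary Laplacian $\sum_i \partial_i^2$ when evaluated at $x_0$. Moreover, the standard expansion $g_{ij}(x) = \delta_{ij} - \tfrac{1}{3} R_{ikjl}(x_0) x^k x^l + O(|x|^3)$ gives, after taking the determinant, $g(x) = 1 - \tfrac{1}{3} R_{kl}(x_0) x^k x^l + O(|x|^3)$ and therefore
\[
g(x)^{-1/4} = 1 + \tfrac{1}{12} R_{kl}(x_0) x^k x^l + O(|x|^3),
\]
where $R_{kl}$ denotes the Ricci tensor.

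The substantive step will be to show $a(0, x, x_0) = g(x)^{-1/4}$ in these coordinates, identically near $x_0$. Using Theorem \ref{theo_difference_from_free} to write $\Phi(0, x, y) = \tfrac{1}{2}\,d(x, y)^2$, and the first-variation formula $-d_y[\tfrac12 d(x,y)^2] = g_y(\exp_y^{-1}(x), \cdot)$, I will specialize at $y = x_0$, where both $g(x_0) = \delta$ and $\exp_{x_0}^{-1}(x) = x$ hold on the nose in the normal chart. This yields $-\partial_{y^i}\Phi(0, x, y)|_{y = x_0} = x^i$ as an exact equality in $x$; differentiating in $x^j$ (and using Schwarz's theorem to swap the order of partials) gives
\[
-\partial_{x^j}\partial_{y^i}\Phi(0, x, y)\bigr|_{y = x_0} = \delta^i_j,
\]
hence $\det(-\partial_x\partial_y\Phi(0, x, x_0)) \equiv 1$. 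Substituting into the definition of $a$ and using $g(x_0)^{-1/4} = 1$ produces $a(0, x, x_0) = g(x)^{-1/4}$.

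Combining the two and applying $\triangle_g|_{x_0} = \sum_i \partial_i^2|_{x_0}$ to the quadratic Taylor term yields
\[
\triangle_x a(0, x, x_0)\bigr|_{x = x_0} = \sum_i \partial_i^2\bigl(\tfrac{1}{12} R_{kl}(x_0) x^k x^l\bigr)\bigr|_{x = x_0} = \tfrac{1}{6}\sum_i R_{ii}(x_0) = \tfrac{1}{6}R(x_0).
\]

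The main obstacle is the \emph{exact} identity $\det(-\partial_x\partial_y\Phi(0, x, x_0)) \equiv 1$: it relies on the first-variation-of-distance formula together with the coordinate-flatness of the normal chart at the base point, and is specific to the slice $y = x_0$ — the same matrix fails to be the identity once $y$ is moved off $x_0$. The fact that the Laplacian is applied only in $x$ (with $y$ fixed at $x_0$) is exactly what makes this simplification available. Once the identity is in place, the remainder is the standard Taylor expansion of the Riemannian volume element.
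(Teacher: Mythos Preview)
Your argument is correct. The key identity $\det(-\partial_x\partial_y\Phi(0,x,x_0))\equiv 1$ in normal coordinates at $x_0$ follows exactly as you say: the first-variation formula for the energy gives $-d_y\bigl(\tfrac12 d(x,y)^2\bigr)=g_y(\exp_y^{-1}(x),\cdot)$, and at $y=x_0$ this reads $-\partial_{y^i}\Phi(0,x,x_0)=x^i$ identically in $x$, so differentiating in $x^j$ yields the identity matrix. Combined with $g(x_0)=1$ this gives $a(0,x,x_0)=g(x)^{-1/4}$, and the standard normal-coordinate expansion of the volume density finishes the computation.

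As for comparison with the paper: the paper does not actually supply a proof of this theorem. It simply states the result and refers the reader to \cite{Miyanishi15}. Your write-up therefore provides a self-contained argument where the paper defers to the literature. The route you take --- reducing $a(0,\cdot,x_0)$ to $g^{-1/4}$ in a normal chart and then reading off the Laplacian from the quadratic Taylor term of the volume element --- is the standard one and is essentially what one finds in the cited reference.
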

    
    \begin{proof}
    We begin with the transport equation \eqref{eq_1_transport} in Theorem \ref{theo_1_transport_equation}: 
    \[
    \frac{\partial D}{\partial t} (t, x, y)+g(\rmop{grad}\nolimits_x S(t, x, y), D(t, x, y))+D(t, x, y)\triangle_x S(t, x, y)=0. 
    \]
    Substituting $S(t, x, y)=t^{-1}\Phi(t, x, y)$ and $D(t, x, y)=t^{-n}D_\Phi(t, x, y)$, we obtain
    \[
    -nD_\Phi+t\frac{\partial D_\Phi}{\partial t}+g(\rmop{grad}\nolimits_x \Phi, \rmop{grad}\nolimits_x D_\Phi )+D_\Phi\triangle_x \Phi=0. 
    \]
    Now we put $t=0$. Then we have
    \begin{equation}\label{eq_transport_t=0}
        \begin{split}
            &-nD_\Phi(0, x, y)+g(\rmop{grad}\nolimits_x \Phi(0, x, y), \rmop{grad}\nolimits_xD_\Phi(0, x, y)) \\
    &+D_\Phi(0, x, y)\triangle_x \Phi(0, x, y) 
    =0. 
        \end{split}
    \end{equation}
    We recall that $\Phi(0, x, y)=d(x, y)^2/2$. Take a normal coordinate $(x_1, \ldots, x_n)$ with center at $y$. Then 
    \begin{equation}\label{eq_normal_Phi}
        \Phi(0, x, y)=\frac{1}{2}\sum_{j=1}^n x_j^2
    \end{equation}
    if the coordinate function maps $x$ to $(x_1, \ldots, x_n)$. We calculate $\triangle_x \Phi(0, x, y)$. We employ the well-known formula
    \[
    g_{ij}(x)=\delta_{ij}-\frac{1}{3}\sum_{k, l=1}^n R_{ikjl} x_k x_l +O(|x|^3) \quad (|x|\to 0)
    \]
    in the normal coordinates, where $\delta_{ij}$ is the Kronecker delta and $R_{ikjl}$ is the Riemann curvature tensor at the center of the normal coordinates, associated with the Levi-Civita connection $\nabla$ of $g$: 
    \[
    R_{ikjl}:=g\left( \frac{\partial}{\partial x_l}, \left(\nabla_\frac{\partial}{\partial x_j}\nabla_\frac{\partial}{\partial x_l}-\nabla_\frac{\partial}{\partial x_l}\nabla_\frac{\partial}{\partial x_j}\right) \frac{\partial}{\partial x_k}\right). 
    \]
    Thus the inverse matrix $(g^{ij}(x))$ of $(g_{ij}(x))$ and the volume density $\sqrt{g(x)}:=\sqrt{\det (g_{ij}(x))}$ has the expansion
    \[
    g^{ij}(x)=\delta_{ij}+\frac{1}{3}\sum_{k, l=1}^n R_{ikjl}x_k x_l+O(|x|^3)
    \]
    and 
    \begin{align*}
    \sqrt{g(x)}
    &=\exp \left( \frac{1}{2}\rmop{tr} \log (g_{ij}(x))\right)=\exp \left( \frac{1}{2} \rmop{tr}   \left( -\frac{1}{3}\sum_{k, l=1}^n R_{ikjl}x_kx_l +O(|x|^3)\right)\right) \\
    &=\exp \left(-\frac{1}{6} \sum_{k, l=1}^nR_{kl}x_k x_l +O(|x|^3)\right)
    =1-\frac{1}{6} \sum_{k, l=1}^nR_{kl}x_k x_l +O(|x|^3). 
    \end{align*}
    Here $R_{kl}:=\sum_{i=1}^n R_{ikil}$ is the Ricci curvature. Thus 
    \begin{align}
            \triangle_x \Phi(0, x, y)
    &=\frac{1}{\sqrt{g(x)}}\sum_{j, k=1}^n \frac{\partial}{\partial x_j} \left( \sqrt{g(x)}g^{jk}(x)\frac{\partial}{\partial x_k} \left( \frac{1}{2}\sum_{l=1}^n x_l^2\right) \right) \nonumber \\
    &=n-\frac{1}{3}\sum_{k, l=1}^n R_{kl}x_k x_l+O(|x|^3). \label{eq_lap_Phi}
    \end{align}
    By \eqref{eq_normal_Phi} and \eqref{eq_lap_Phi}, the equation \eqref{eq_transport_t=0} becomes
    \begin{equation}
    \sum_{j=1}^n x_j \frac{\partial}{\partial x_j}D_\Phi(0, x, y)-\frac{1}{3}\sum_{k, l=1}^n D_\Phi(0, x, y)R_{kl}x_k x_l =O(|x|^3). \label{eq_transport_asymp}
    \end{equation}
    We expand $D_\Phi(0, x, y)$ as
    \[
    D_\Phi(0, x, y)=1+\sum_{j=1}^n b_j x_j+\sum_{j, k=1}^n c_{jk} x_j x_k+O(|x|^3)
    \]
    employing Proposition \ref{prop_amplitude_diagonal}, that is, $D_\Phi (0, y, y)=1$. 
    Here $b_j, c_{jk}\in \mathbb{R}$ and $c_{jk}=c_{kj}$ for all $j, k$.  
    Thus the equation \eqref{eq_transport_asymp} leads to the relations
    \[
    b_j=0, \quad c_{jk}=\frac{1}{6}R_{jk}. 
    \]
    Hence we obtain 
    \[
    \sqrt{D_\Phi(0, x, y)}=\sqrt{1+\frac{1}{6}\sum_{j, k=1}^n R_{jk}x_j x_k +O(|x|^3)}=1+\frac{1}{12}R_{jk}x_j x_k+O(|x|^3)
    \]
    and 
    \[
    \triangle_x \sqrt{D_\Phi(0, x, y)} |_{x=y}=\left.\left( \frac{R(y)}{6}+O(|x|)\right)\right|_{x_1=\cdots=x_n=0}=\frac{1}{6}R(y)
    \]
    by the definition of the scalar curvature $R(y):=\sum_{j=1}^n R_{jj}$. 
    \end{proof}


\section{Estimates for oscillatory integral operators}\label{sect_FIO}

\subsection{Key estimates}\label{subsec_osc_goal}

In order to prove the stability and the consistency, we need estimates of oscillatory integral operators from Sobolev spaces to the $L^2$ space.


As in the end of Chapter \ref{subsubsec_action}, we take a small number $t_0>0$ and a small neighborhood $N$ of $\rmop{diag}(M)$ such that the assertion in Theorem \ref{theo_action_well_defined}, that in Theorem \ref{theo_phase_function} and the positivity of $D_\Phi|_{[0, t_0]\times N}$ hold. For $a\in C_c^\infty([0, t_0]\times N)$, we define an integral operator $T_\hbar [a](t): C^\infty(M)\to C^\infty(M)$ as

\begin{equation}\label{eq_defi_oscillatory}
    \begin{split}
T_\hbar [a](t)u(x)
&:=\frac{1}{(2\pi i\hbar t)^{n/2}}\int_M a(t, x, y)e^{iS(t, x, y)/\hbar}u(y)\, \vol_g(y) \\
&=\frac{1}{(2\pi i\hbar t)^{n/2}}\int_M a(t, x, y)e^{i\Phi (t, x, y)/\hbar t}u(y)\, \vol_g(y). 
    \end{split}
\end{equation}

Recall the Hamilton flow $\varphi^t_s$ generated by the Hamilton vector field $X_{H_t}$ defined in \eqref{eq_hamiltonian_flow_scaled}. Let $\pi: T^*M\to M$ be the natural projection and we define 
\begin{equation}\label{eq_configuration_flow_scaled}
    q^*(t, y, \eta):=\pi(\varphi^t_1(y, \eta)).
\end{equation} 
This function $q^*(t, y, \eta)$ is necessary to find an upper bound of the operator norm of the operator $T_\hbar [a](t)$ (Theorem \ref{theo_L2_boundedness}). 


We state two key estimates for the proof of the stability and the consistency. The first estimate is an estimate of $L^2$-operator norm of oscillatory integral operators, which plays a crucial role in the proof of stability. 

\begin{theo}\label{theo_L2_boundedness}
Suppose that $a(t, x, y)\in C_c^\infty([0, t_0]\times N)$. Then there exists $C>0$ such that for all $t\in (0, t_0/2]$ and $\hbar \in (0, 1]$, the inequality
\[
\|T_\hbar [a](t)\|_{L^2(M)\to L^2(M)}^4 \leq \left\|\frac{|a(t, q^*(t, y, \eta), y)|^2}{|D_\Phi (t, q^*(t, y, \eta), y)|}\right\|_{L^\infty(T^*M)}^2+C\hbar t
\] 
holds. 
\end{theo}

The second estimate is an estimate of the operator norm on the Sobolev spaces, which is employed for proving the consistency. Roughly speaking, the operator norm of $T_\hbar[a](t)$ is mainly dominated by the behavior of the amplitude near $\rmop{diag}(M)$: 

\begin{theo}\label{theo_vanishing}
Let $J$ and $L$ be nonnegative integers. Assume that $\partial_t^l\partial_x^\alpha\partial_y^\beta a(0, x, y)|_{x=y}=0$ for all $l+|\alpha|+|\beta|\leq 2J+L$. Take a family of real numbers $\{ s_\alpha\in [0, |\alpha|]\}_{|\alpha|\leq L+1}$ and define $s:=\max_{|\alpha|\leq L+1} s_\alpha$. Then there exists a constant $C>0$ such that 
\[
\| T_\hbar[a](t)\|_{H_\hbar^{s/2}\to L^2}^2\leq C\left(\sum_{\substack{|\alpha|+l=L+1 \\ l\geq 0}} t^{l+s_\alpha}+(\hbar t)^{J+1}\right)
\]
holds for all $t\in (0, t_0/2]$ and $\hbar\in (0, 1]$. 
\end{theo}

The goal of this section is to prove Theorem \ref{theo_L2_boundedness} and Theorem \ref{theo_vanishing}. 

Before the proof, we remark that the meaning of $t$ in the phase function $\Phi(t, x, y)/\hbar t$ in \eqref{eq_defi_oscillatory}. The $t\in [-t_0, t_0]$ in the argument of $\Phi(t, x, y)$ is the parameter of the perturbation by the potential $t^2V(x)$ as mentioned in Subsection \ref{subsec_classical_mechanics}. This does not affect the behavior of the oscillatory integral operators so much. However $t\in (0, t_0]$ in the denominator of $\Phi(t, x, y)/\hbar t$ behaves like ``\textit{a semiclassical parameter}.'' For example, the $t$-dependence of the remainder term $C\hbar t$ in Theorem \ref{theo_L2_boundedness} is derived by regarding $\hbar t$ as a semiclassical parameter. Another example is Proposition \ref{prop_fundamental_properties} (ii). We apply the stationary phase method to calculate the limit of the short-time approximate solution $E_\hbar (t)$ as $t\to +0$ (see Subsection \ref{subsect_proof_of_fp} for more details). Thus we treat $t$ in $\Phi(t, x, y)$ and that in the denominator of $\Phi(t, x, y)/\hbar t$ as if they are independent variables (except for Theorem \ref{theo_vanishing}). 

\subsection{Semiclassical pseudodifferential operators on manifolds}
We employ a theory of pseudodifferential operators on manifolds. For more details, see Appendix E of \cite{Dyatlov-Zworski19}. 

We begin with a definition of semiclassical Weyl quantization on Euclidean spaces. 

\begin{defi}
We define symbol classes $S^m_{1, 0}$ for $m\in \mathbb{R}$ as 
\[
S^m_{1, 0}:=\left\{\, a(x, \xi)\in C^\infty(\mathbb{R}^n\times \mathbb{R}^n) \,\middle|\, 
\begin{aligned}
&\|(1+|\xi|)^{-m+|\beta|}\partial_x^\alpha\partial_\xi^\beta a(x, \xi)\|_{L^\infty}<\infty \\
&\text{ for all } \alpha, \beta \in \mathbb{N}_0^n
\end{aligned}
\,\right\}. 
\]
For any symbol $a\in S^m_{1, 0}$, we define a semiclassical Weyl quantization $a^\mathrm{w}(x, \hbar D)$ of $a$ as 
\[
a^\mathrm{w}(x, \hbar D)u(x):=\frac{1}{(2\pi \hbar)^n}\int_{\mathbb{R}^n} d\xi \int_{\mathbb{R}^n} dy\, a\left(\frac{x+y}{2}, \xi\right) e^{i\xi\cdot (x-y)/\hbar} u(y) \quad (u\in \mathscr{S}), 
\]
where $\mathscr{S}$ is the space of Schwartz class (rapidly decreasing) functions. 
\end{defi}

Next we consider pseudodifferential operators on manifolds. Let $M$ be a manifold with fixed smooth density. 

\begin{defi}\label{defi_psiDO}
A family $\{A_\hbar\}_{\hbar\in (0, 1]}$ of operators acting on functions on $M$ is an $\hbar$-pseudodifferential operator of class $\Psi_\hbar^m$ on $M$ if $A_\hbar$ is represented as 
\begin{equation}
A_\hbar=\sum_{j=1}^N \chi_j \psi_j^* a_j^\mathrm{w}(\hbar; x, \hbar D) \psi_j^{-1*} \chi_j + B_\hbar, \label{eq_psiDO_manifold}
\end{equation}
where 
\begin{itemize}
\item $\chi_j$ are multiplication operators by smooth functions $\chi_j\in C_c^\infty (U_\lambda)$ supported in some coordinate neighborhood $U_\lambda=U_{\lambda_j}$; 
\item $\psi_j: U_\lambda \to \mathbb{R}^n$ is a coordinate function; 
\item $a_j(\hbar; \cdot, \cdot)\in S^m_{1, 0}$ has an asymptotic expansion
\[
a_j(\hbar; x, \xi)\sim \sum_{k=0}^\infty \hbar^k a_{jk}(x, \xi), \quad a_{jk}\in S^{m-k}_{1, 0}
\]
and $\{ a(\hbar; \cdot, \cdot)\}_{\hbar\in (0, 1]}$ forms a bounded family in $S^m_{1, 0}$;  
\item $B_\hbar$ is an operator with smooth kernel $B_\hbar(x, y)$ such that 
\begin{equation}
\| B_\hbar \|_{C^s (K)} \leq C_{sNK}\hbar^N \label{eq_smoothing_op}
\end{equation}
for all compact sets $K\subset M\times M$, $\hbar\in (0, 1]$ and nonnegative integers $s$ and $N$. 

We denote a class consisting of operators $B_\hbar$ satisfying \eqref{eq_smoothing_op} by $\hbar^\infty \Psi_\hbar^{-\infty}$. 
\end{itemize}
\end{defi}

If a pseudodifferential operator $A_\hbar\in \Psi_\hbar^m$ has a representation \eqref{eq_psiDO_manifold}, the function
\begin{equation}\label{eq_defi_principal_symbol}
\sigma_\hbar (A_\hbar)(x, \xi):=\sum_{j=1}^N \chi_j(x)^2 a_{j0}(\tilde\psi_j(x, \xi))
\end{equation}
is independent of the representation of $A_\hbar$. Here $\tilde \psi_j: T^*U_\lambda \to T^*\mathbb{R}^n$ is the canonical coordinates associated with $\psi_j: U_\lambda\to \mathbb{R}^n$. This $\sigma_\hbar(A_\hbar)$ is called the \textit{principal symbol} of $A_\hbar$.  


Then the following facts hold. The proof is in $\cite{Dyatlov-Zworski19}$. 

\begin{theo}\label{theo_psiDO_basic}
\begin{enumerate}
\item If $A_\hbar \in \Psi_\hbar^k$ and $B_\hbar\in \Psi_\hbar^m$, then $A_\hbar B_\hbar \in \Psi_\hbar^{k+m}$ with principal symbol $\sigma_\hbar(A_\hbar)\sigma_\hbar (B_\hbar)$. 
\item If $A_\hbar\in \Psi_\hbar^k$, then $A_\hbar^*\in \Psi_\hbar^k$ and $\sigma_\hbar (A_\hbar^*)=\overline{\sigma_\hbar (A_\hbar)}$. 
\item (sharp G\aa rding inequality) If (the Schwartz kernel of) $A_\hbar\in \Psi^0_\hbar$ has a compact support and $\sigma (A_\hbar)\geq 0$, then there exists a positive constant $C>0$ such that 
\[
\jbracket{A_\hbar u, u}_{L^2} \geq -C\hbar \| u\|_{L^2}^2
\]
for all compactly supported $u\in L^2(M)$ and $\hbar\in (0, 1]$. 
\end{enumerate}
\end{theo}

In our setting, we can apply the theory of pseudodifferential operators by considering $T_\hbar[a](t)^* T_\hbar [a](t)$, regarding $\hbar t$ as a semiclassical parameter independent of $t$ in $\Phi(t, x, y)$ and $q^*(t, x, \eta)$: 

\begin{theo}\label{theo_psiDO}
The operator $T_\hbar[a](t)^*T_\hbar[a](t): C_c^\infty(M)\to C_c^\infty(M)$ is a $(\hbar t)$-pseudodifferential operator of class $\Psi^0_{\hbar t}$ with the principal symbol 
\[
|a(t, q^*(t, x, \eta), y)|^2|D_\Phi (t, q^*(t, x, \eta), y)|^{-1} 
\]
with the parameters $t\in (0, t_0/2]$ and $\hbar\in (0, 1]$. 
\end{theo}

We prove Theorem \ref{theo_L2_boundedness} from Theorem \ref{theo_psiDO}. We prepare for the proof of Theorem \ref{theo_psiDO} in Subsection \ref{subsec_disjoint_support} and Subsection \ref{subsec_covering} below. We prove Theorem \ref{theo_psiDO} in Subsection \ref{subsec_proof_FIO}.

\subsection{Disjoint support}\label{subsec_disjoint_support}


A composition of two oscillatory integral operators with amplitudes with disjoint supports is negligible. 

\begin{theo}\label{theo_disjoint_support}
Let $K_1$ and $K_2$ be disjoint compact subsets of $N$. Then $T_\hbar [a](t)^* T_\hbar [b](t)$ is in the class $(\hbar t)^\infty \Psi_{\hbar t}^{-\infty}$. More precisely, for all $J\in\mathbb{N}_0$ and $s\in \mathbb{N}_0$, there exists a constant $C>0$ such that for all $a, b\in C^\infty([0, t_0]\times N)$ with $\rmop{supp}a\subset [0, t_0]\times K_1$ and $\rmop{supp}b\subset [0, t_0]\times K_2$, the integral kernel $K_\hbar (t, y, z)$ of $T_\hbar [a](t)^*T_\hbar [b](t)$ has the estimate 
\[
\|K_\hbar (t, \cdot, \cdot)\|_{C^s(M\times M)}\leq C(\hbar t)^J\|a(t, \cdot, \cdot)\|_{C^{2s+J+n}(K_1)}\|b(t, \cdot, \cdot)\|_{C^{2s+J+n}(K_2)}
\]
for all $t\in (0, t_0]$ and $\hbar\in (0, 1]$. 
\end{theo}

\begin{proof}
We denote the integral kernel of $T_\hbar [a](t)^*T_\hbar [b](t)$ by $K_\hbar (t, y, z)$: 
\begin{equation}
K_\hbar (t, y, z)=\frac{1}{(2\pi\hbar t)^n}\int_M \overline{a(t, x, y)}b(t, x, z)e^{i(\Phi (t, x, z)-\Phi(t, x, y))/\hbar t}\, \vol_g(x). \label{eq_disjoint_kernel}
\end{equation}
This is a smooth function with respect to $(t, y, z)$. On the support of the integrand in \eqref{eq_disjoint_kernel}, the vector field $x\mapsto \mathrm{grad}_x\Phi (t, x, z)-\mathrm{grad}_x\Phi (t, x, y)\in T_xM$ does not vanish since the assertion in Theorem \ref{theo_phase_function} implies the injectivity of $y\mapsto d_x\Phi(t, x, y)$. Thus we can consider the differential operator
\[
Lf(t, x, y, z):=\frac{g_x(\mathrm{grad}_x\Phi (t, x, z)-\mathrm{grad}_x\Phi (t, x, y), \mathrm{grad}_x f(x))}{i|\mathrm{grad}_x \Phi (t, x, y)-\mathrm{grad}_x\Phi (t, x, z)|_g^2}, 
\]
which satisfies
\[
\hbar t Le^{i(\Phi (t, x, z)-\Phi (t, x, y))/\hbar t}= e^{i(\Phi (t, x, z)-\Phi (t, x, y))/\hbar t}. 
\]
By the Gauss divergence theorem, the transpose of $L$ is
\[
\transp{L}f(t, x, y, z)=i\,\mathrm{div}_x \left(\frac{f(x)(\mathrm{grad}_x\Phi (t, x, z)-\mathrm{grad}_x\Phi (t, x, y))}{|\mathrm{grad}_x \Phi (t, x, y)-\mathrm{grad}_x\Phi (t, x, z)|_g^2}\right). 
\]
Hence \eqref{eq_disjoint_kernel} is equivalent to 
\[
K_\hbar (t, y, z)=\frac{(\hbar t)^J}{(2\pi\hbar t)^n}\int_M (\transp{L})^J(\overline{a(t, x, y)}b(t, x, z))e^{i(\Phi (t, x, z)-\Phi (t, x, y))/\hbar t}\, \vol_g(x)
\]
for all $J\in\mathbb{N}_0$. Thus if $A: C^\infty(M\times M)\to C^\infty(M\times M)$ is a differential operator of degree $s\in \mathbb{N}_0$, then 
\[
|AK_\hbar (t, y, z)|\leq C(\hbar t)^{J-n-s}\|a(t, \cdot, \cdot)\|_{C^{s+J}(K_1)}\|b(t, \cdot, \cdot)\|_{C^{s+J}(K_2)}
\]
for some $C>0$ independent of $(t, y, z)\in (0, t_0/2]\times M\times M$, $a\in C^\infty([0, t_0]\times N)$ with $\rmop{supp}a\subset [0, t_0]\times K_1$, $b\in C^\infty([0, t_0]\times N)$ with $\rmop{supp}b\subset K_2$ and $\hbar\in (0, 1]$. 
\end{proof}

\subsection{Covering}\label{subsec_covering}

Let $\{\, \phi_\lambda: U_\lambda\to V_\lambda\,\}_{\lambda\in\Lambda}$ be a local coordinate system on $M$. Then it induces a local coordinate system $\{\phi_\lambda\times \phi_\mu: U_\lambda\times U_\mu \to V_\lambda\times V_\mu\,\}_{(\lambda, \mu)\in \Lambda\times \Lambda}$ on $M\times M$. 
For $(\lambda, \mu)\in \Lambda\times \Lambda$, we define 
\[
\Phi_{\lambda \mu }(t, x, y):=\Phi(t, \phi_{\lambda }^{-1}(x), \phi_{\mu }^{-1}(y)). 
\]

We prepare a lemma for regarding a principal part of $T_\hbar [a](t)^*T_\hbar [a](t)$ as a pseudodifferential operator. 

\begin{lemm}\label{lemm_covering_B}
There exists a finite collection of open sets $\{\Omega_\iota\subset [-t_0, t_0]\times N\}_{\iota\in I}$ which satisfies the following properties. 
\begin{enumerate}
\item $[-t_0/2, t_0/2]\times \rmop{diag}(M)\subset \bigcup_{\iota\in I} \Omega_\iota$. 
\item If $\Omega_\iota\cap \Omega_{\iota^\prime}\neq \varnothing$, then there exist $(\lambda, \mu)\in \Lambda\times \Lambda$, open sets $I^\prime\subset [-t_0, t_0]$, $V\subset V_{\lambda}$ and an open convex subset $W\subset V_{\mu}$ such that $\Omega_\iota \cup \Omega_{\iota^\prime} \subset I\times U_{\lambda } \times U_{\mu }$, $\Omega_\iota\cup \Omega_{\iota^\prime}\subset I^\prime \times \phi_{\lambda }^{-1}(V) \times \phi_{\mu }^{-1}(W)$ and the function 
\[
F: I^\prime \times V \times W \times W \longrightarrow I^\prime\times W \times \mathbb{R}^n \times W, 
\]
\[
F(t, x, y, z):=\left( t, y, -\int_0^1 \partial_y\Phi_{\lambda \mu }(t, x, sz+(1-s)y)\, ds, z\right)
\]
is an embedding. 
\end{enumerate}
\end{lemm}

\begin{proof}
We introduce a collection of $\mathscr{B}_{\lambda\mu}$ of open subsets of $[-t_0, t_0]\times N$ as follows. An open subset $\Omega^\prime$ in  $[-t_0, t_0]\times N$ belongs to $\mathscr{B}_{\lambda\mu}$ if and only if the following conditions hold. 
\begin{enumerate}
\renewcommand{\labelenumi}{(\roman{enumi})}
\item $\Omega^\prime$ is a direct product $\Omega^\prime=I^\prime\times V \times W$ of an open interval $I^\prime\subset [-t_0, t_0]$, open sets $V\subset V_\lambda$ and $W\subset V_\mu$. Moreover $W$ is convex and $\phi_\lambda^{-1}(V) \times \phi_\mu^{-1}(W)\subset N$. 
\item The function 
\[
F: I^\prime \times V \times W \times W \longrightarrow I^\prime\times W \times \mathbb{R}^n \times W, 
\]
\[
F(t, x, y, z):=\left( t, y, -\int_0^1 \partial_y\Phi_{\lambda\mu}(t, x, sz+(1-s)y)\, ds, z\right)
\]
is an embedding. 
\end{enumerate}
We define 
\[
\mathscr{B}:=\bigcup_{(\lambda, \mu)\in\Lambda\times \Lambda}\{\, I^\prime\times \phi_\lambda^{-1}(V)\times \phi_\mu^{-1}(W) \mid I^\prime \times V \times W \in \mathscr{B}_{\lambda\mu}\,\}. 
\]
Since 
\[
\det dF(t, x, y, z)|_{y=z}=\det \partial_x\partial_y\Phi_{\lambda\mu}(t, y, y)\neq 0 
\]
for all $(\lambda, \mu)\in \Lambda\times \Lambda$ and $(t, x, y)\in [-t_0/2, t_0/2]\times V_\lambda \times V_\mu$, the inverse function theorem implies that $\mathscr{B}$ forms an open basis of the topology in $[-t_0/2, t_0/2]\times M \times M$. 

We define two projections $\mathrm{pr}_1, \mathrm{pr}_2: M\times M\to M$ as $\mathrm{pr}_1(x, y)=x$ and $\mathrm{pr}_2(x, y)=y$. Then $\tilde g=\mathrm{pr}_1^*g+\mathrm{pr}_2^*h$ defines a Riemannian metric on $M\times M$. This metric induces a distance function $\tilde d$ on $M\times M$. 
We can take a Lebesgue number $\delta>0$ associated with the open covering $\{ ([-t_0/2, t_0/2]\times \rmop{diag}(M))\cap \Omega \}_{\Omega\in \mathscr{B}}$ on the compact metric space $([-t_0/2, t_0/2]\times \rmop{diag}(M), \rho)$, $\rho((t, p), (s, q)):=|t-s|+\tilde d(p, q)$. 

Now we define 
\[\mathscr{A}:=\left\{\, \Omega\cap \left(\left[-\frac{t_0}{2}, \frac{t_0}{2}\right]\times \rmop{diag}(M)\right)\,\middle|\, \Omega\in \mathscr{B}, \, \rmop{diam} \Omega <\frac{\delta}{2}\, \right\}. 
    \]
$\mathscr{A}$ covers a compact set $[-t_0/2, t_0/2]\times \rmop{diag}(M)$ since $\mathscr{B}$ is a basis of the topology in $[-t_0, t_0]\times M\times M$. Thus we can choose a finite collection $\{(\Omega_\iota\cap ([-t_0/2, t_0/2]\times \rmop{diag}(M)) \in \mathscr{A}\}_{\iota\in I}$ such that $\Omega_\iota\in \mathscr{B}$, $\rmop{diam}\Omega_\iota <\delta/2$ and 
\[\bigcup_{\iota\in I} \left(\Omega_\iota\cap \left(\left[-\frac{t_0}{2}, \frac{t_0}{2}\right]\times \rmop{diag}(M)\right)\right)=\left[-\frac{t_0}{2}, \frac{t_0}{2}\right]\times \rmop{diag}(M). \]

Suppose $\Omega_\iota \cap \Omega_{\iota^\prime} \neq \varnothing$. Then we can prove 
\begin{equation}\mathrm{diam}(\Omega_\iota \cup \Omega_{\iota^\prime})\leq \rmop{diam}\Omega_\iota +\rmop{diam}\Omega_{\iota^\prime}\label{eq_triangle_diam}
\end{equation} as follows. Take arbitrary two points $x, y\in \Omega_\iota \cup \Omega_{\iota^\prime}$. If $x, y\in \Omega_\iota$ or $x, y\in \Omega_{\iota^\prime}$, then we obviously have 
\[
    \rho(x, y)\leq \rmop{diam}\Omega_\iota +\rmop{diam}\Omega_{\iota^\prime}. \]
Without loss of generality, we can suppose that $x\in \Omega_\iota$ and $y\in \Omega_{\iota^\prime}$. Since $\Omega_\iota \cap \Omega_{\iota^\prime}\neq \varnothing$, we can take a point $z\in \Omega_\iota \cap \Omega_{\iota^\prime}$. Then by the triangle inequality, we have 
\[
    \rho(x, y)\leq \rho(x, z)+\rho(y, z)\leq \rmop{diam}(\Omega_\iota)+\rmop{diam}(\Omega_{\iota^\prime}). \]
    There we obtain $\rho(x, y)\leq \rmop{diam}\Omega_\iota +\rmop{diam}\Omega_{\iota^\prime}$ for arbitrary two points $x, y\in \Omega_\iota \cup \Omega_{\iota^\prime}$. Thus the inequality \eqref{eq_triangle_diam} is proved. 

Now we apply \eqref{eq_triangle_diam} and obtian 
\[
\mathrm{diam}(\Omega_\iota \cup \Omega_{\iota^\prime})\leq \rmop{diam}\Omega_\iota +\rmop{diam}\Omega_{\iota^\prime}< \frac{\delta}{2}+\frac{\delta}{2}=\delta.
\]
Thus there exists $\Omega\in \mathscr{B}$ such that $\Omega_\iota\cup\Omega_{\iota^\prime} \subset \Omega$ by the definition of the Lebesgue number $\delta$. We choose $(\lambda , \mu )\in\Lambda\times \Lambda$ such that $\Omega=(\mathrm{id}\times \phi_{\lambda }\times \phi_{\mu })^{-1}(\Omega^\prime)$ for some $\Omega^\prime\in \mathscr{B}_{\lambda \mu }$. Then the restriction of $F$ to open subset $\Omega_\iota\cup\Omega_{\iota^\prime}$ is an embedding. 
\end{proof}

Employing Lemma \ref{lemm_covering_B}, we can calculate a principal part of $T_\hbar [a](t)^*T_\hbar [a](t)$. 

\begin{theo}\label{theo_nonempty_intersection}
Then there exists a finite collection of open sets $\{\Omega_\iota\subset [-t_0, t_0]\times N\}_{\iota\in I}$, a family of smooth functions $\{\kappa_\iota\in C_c^\infty(\Omega_\iota; [0, 1])\}_{\iota\in I}$ and $\chi_{\iota\iota^\prime}\in C_c^\infty (U_\mu)$ which satisfy the following properties. 
\begin{enumerate}
\item $[-t_0/2, t_0/2]\times \rmop{diag}(M)\subset \bigcup_{\iota\in I} \Omega_\iota$ and $\sum_\iota \kappa_\iota=1$ on $[-t_0/2, t_0/2]\times \rmop{diag}(M)$. 
\item If $\Omega_\iota\cap \Omega_{\iota^\prime}\neq \varnothing$, then there exist $(\lambda, \mu)\in \Lambda\times \Lambda$, open sets $I^\prime\subset [-t_0, t_0]$, $V\subset V_\lambda$ and an open subset $W\subset V_\mu$ such that $\Omega_\iota\cup \Omega_{\iota^\prime}\subset [-t_0, t_0]\times U_\lambda\times U_\mu$, $\Omega_\iota\cup \Omega_{\iota^\prime}\subset I^\prime \times \phi_\lambda^{-1}(V) \times \phi_\mu^{-1}(W)$ and
\[
T_\hbar [\kappa_\iota a](t)^*T_\hbar [\kappa_{\iota^\prime} a](t)=\chi_{\iota\iota^\prime}\phi_\mu^*b_{\iota\iota^\prime}^\mathrm{w}(\hbar t; t, y, \hbar t D_y)\phi_\mu^{-1*}\chi_{\iota\iota^\prime}
\]
for some $b_{\iota\iota^\prime}\in S^0_{1, 0}$. Moreover $b_{\iota\iota^\prime}(\hbar t; t, y, \eta)$ has an asymptotic expansion 
\begin{equation}
b_{\iota\iota^\prime}(\hbar t; t, y, \eta)\sim \sum_{j=0}^\infty (\hbar t)^j b_{\iota\iota^\prime, j}(t, y, \eta)\quad \text{in } S^0_{1, 0} \label{eq_asymptotic_b}
\end{equation}
and the principal part $b_{\iota\iota^\prime, 0}$ is
\begin{equation}\label{eq_principal_symbol_locally}
b_{\iota\iota^\prime, 0}(t, y, \eta)= \tilde\phi_\mu^{-1*}\left(\frac{(\kappa_\iota\kappa_{\iota^\prime})(q^*(t, y, \eta), y)|a(t, q^*(t, y, \eta), y)|^2}{|D_\Phi (t, q^*(t, x, \eta), y)|}\right). 
\end{equation}
Furthermore, if $\partial_t^l\partial_x^\alpha\partial_y^\beta a(0, x, x)=0$ for all $x\in M$ and $l+|\alpha|+|\beta|\leq 2J+L$, then $\partial_t^l\partial_\eta^\alpha b_{\iota\iota^\prime, j}(t, y, \eta)|_{t=0, \eta=0}=0$ for all $j\leq J$, $|\alpha|+l\leq L$ and $y\in \mathbb{R}^n$. 
\item $\chi_{\iota\iota^\prime}(y)=1$ if there exists $x\in M$ such that $(x, y)\in \rmop{supp} \kappa_\iota \cup \rmop{supp}\kappa_{\iota^\prime}$.  
\end{enumerate}
\end{theo}

\begin{proof}
We take a finite collection of open sets $\{\Omega_\iota\subset [-t_0, t_0]\times N\}_{\iota\in I}$ as in Lemma \ref{lemm_covering_B}. 

Suppose $\Omega_\iota \cap \Omega_{\iota^\prime}\neq \varnothing$. Then we can take $(\lambda, \mu)\in \Lambda\times \Lambda$, open sets $I^\prime\subset [-t_0, t_0]$, $V\subset V_{\lambda }$ and an open convex subset $W\subset V_{\mu }$ such that $\Omega_\iota \cup \Omega_{\iota^\prime} \subset [-t_0, t_0]\times U_{\lambda } \times V_{\mu }$, $\Omega_\iota\cup \Omega_{\iota^\prime}\subset I^\prime \times \phi_\lambda^{-1}(V) \times \phi_\mu^{-1}(W)$ and the function 
\[
F: I^\prime \times V \times W \times W \longrightarrow I^\prime\times W \times \mathbb{R}^n \times W, 
\]
\[
F(t, x, y, z):=\left( t, y, -\int_0^1 \partial_y\Phi_{\lambda \mu }(t, x, sz+(1-s)y)\, ds, z\right)
\]
is an embedding. 

We denote the integral kernel of $\phi_\mu^{-1*}T_\hbar [\kappa_\iota a](t)^*T_\hbar [\kappa_{\iota^\prime}a](t)\phi_\mu^*$ by $K^\prime_\hbar(t, y, z)$. We define $a_\iota(t, x, y):=\kappa_\iota(\phi_\lambda^{-1}(x), \phi_\mu^{-1}(y))a(t, \phi_\lambda^{-1}(x), \phi_\mu^{-1}(y))$ for $\iota\in I$, $g_\lambda (x)^{1/2}dx=\phi_\lambda^{-1*}\vol_g(x)$ and $g_\mu (y)^{1/2}dy=\phi_\mu^{-1*}\vol_g(y)$. Then 
\begin{align*}
&K^\prime_\hbar (t, y, z) \\
&= 
\frac{1}{(2\pi \hbar t)^n}
\int_{\mathbb{R}^n} \overline{a_\iota(t, x, y)}a_{\iota^\prime}(t, x, z)e^{i(\Phi_{\lambda\mu} (t, x, z)-\Phi_{\lambda\mu} (t, x, y))/\hbar t}g_\lambda (x)^{1/2} g_\mu (z)^{1/2}\, dx. 
\end{align*}
The phase function is 
\[
\Phi_{\lambda\mu} (t, x, z)-\Phi_{\lambda\mu}(t, x, y)
=-(y-z)\cdot \int_0^1 \partial_y\Phi_{\lambda\mu} (t, x, sz+(1-s)y)\, ds. 
\]
Since $\Omega_\iota\cap \Omega_{\iota^\prime}\neq \varnothing$, we can change variables 
\begin{equation}\label{eq_changing_variables}
\eta=\eta(t, x, y, z)=-\int_0^1 \partial_y\Phi_{\lambda\mu} (t, x, sz+(1-s)y)\, ds \Longleftrightarrow 
x=\hat x(t, y, \eta, z)
\end{equation}
by Lemma \ref{lemm_covering_B}
and obtain
\begin{equation}\label{eq_bisymbol_psiDO}
    \begin{split}
        &K^\prime_\hbar (t, y, z) \\
&=\frac{1}{(2\pi \hbar t)^n}
\int_{\mathbb{R}^n} \overline{a_\iota(t, x, y)}a_{\iota^\prime}(t, x, z)e^{i\eta\cdot (y-z)/\hbar t}g_\lambda (x)^{1/2} g_\mu (z)^{1/2}\left|\det \frac{\partial \eta}{\partial x}\right|^{-1}\, d\eta. 
    \end{split}
\end{equation}
This is an integral kernel of the $(\hbar t)$-pseudodifferential operator with symbol
\begin{equation}\label{eq_bisymbol}
\tilde b_{\iota\iota^\prime}(t, y, \eta, z):=\overline{a_\iota(t, \hat x, y)}a_{\iota^\prime}(t, \hat x, z)g_\lambda (\hat x)^{1/2} g_\mu (z)^{1/2}\left|\det \frac{\partial \eta}{\partial x}(t ,\hat x, y, z)\right|^{-1}. 
\end{equation}
In other words, \eqref{eq_bisymbol_psiDO} is equivalent to 
\[
    K^\prime_\hbar (t, y, z)=\frac{1}{(2\pi \hbar t)^n} \int_{\mathbb{R}^n} \tilde b_{\iota\iota^\prime}(t, y, \eta, z) e^{i\eta \cdot (y-z)/\hbar t}\, d\eta. \]
Hence the symbol $b_{\iota\iota^\prime}(\hbar t; t, \cdot, \cdot)=O_{S^0_{1, 0}}(1)$ $(t, \hbar \to 0)$ given by 
\[
b_{\iota\iota^\prime}(\hbar t; t, y, \eta):=e^{-i\hbar t\jbracket{\partial_z, \partial_\eta}}\tilde b_{\iota\iota^\prime}\left.\left(t, y-\frac{z}{2}, \eta, y+\frac{z}{2}\right)\right|_{z=0} 
\]
satisfies
\begin{equation}\label{eq_weyl_psiDO_ht}
\int_{\mathbb{R}^n} K^\prime_\hbar (t, y, z)u(z)\, dz=b_{\iota\iota^\prime}^\mathrm{w}(\hbar t; t, y, \hbar t D_y)u(y)
\end{equation}
for all $u\in C^\infty (M)$. (See Theorem 4.20 in \cite{Zworski12}.) 
This symbol $b_{\iota\iota^\prime}(\hbar t; t, y, \eta)$ has an asymptotic expansion 
\begin{equation}\label{eq_expansion_symbol}
b_{\iota\iota^\prime}(\hbar t; t, y, \eta)=\sum_{j=0}^N (\hbar t)^j b_{\iota\iota^\prime, j}(t, y, \eta)+O_{S^0_{1, 0}}((\hbar t)^{N+1}), 
\end{equation}
where 
\begin{equation}\label{eq_expansion_symbol_term}
b_{\iota\iota^\prime, j}(t, y, \eta):=\frac{(-i)^j}{j!} \jbracket{\partial_z, \partial_\eta}^j \tilde b_{\iota\iota^\prime}\left.\left(t, y-\frac{z}{2}, \eta, y+\frac{z}{2}\right)\right|_{z=0}. 
\end{equation}

We claim that $\partial_t^l\partial_x^\alpha\partial_y^\beta a(0, x, x)=0$ for all $x\in M$ and $l+|\alpha|+|\beta|\leq 2J+L$ implies $\partial_t^l\partial_\eta^\alpha b_{\iota\iota^\prime, j}(t, y, \eta)|_{t=0, \eta=0}=0$ for all $j\leq J$ and $|\alpha|+l\leq L$. Firstly, we have 
\begin{equation}\label{eq_diagonal_zero_energy}\hat x(0, y, 0, y)=y\end{equation} by 
\[
\eta(0, y, y, y)=-\int_0^1 \partial_y \Phi_{\lambda\mu}(t, y, y)\, ds=0
\]
and the definition \eqref{eq_changing_variables} of $\hat x(0, y, 0, y)$. Secondly, by \eqref{eq_bisymbol} and \eqref{eq_diagonal_zero_energy}, $\partial_t^l\partial_x^\alpha\partial_y^\beta a(0, x, x)=0$ for all $x\in M$ and $l+|\alpha|+|\beta|\leq 2J+L$ implies $\partial_t^l \partial _y^\alpha \partial_\eta^\beta \partial_z^\gamma \tilde b_{\iota\iota^\prime}(t, y, \eta, z)|_{t=0, y=z, \eta=0}=0$ for all $l+|\alpha|+|\beta|+|\gamma|\leq 2J+L$. Finally, by the fomula \eqref{eq_expansion_symbol_term} of $b_{\iota\iota^\prime, j}(t, y, \eta)$, $\partial_t^l \partial _y^\alpha \partial_\eta^\beta \partial_z^\gamma \tilde b_{\iota\iota^\prime}(t, y, \eta, z)|_{t=0, y=z, \eta=0}=0$ for all $l+|\alpha|+|\beta|+|\gamma|\leq 2J+L$ implies that $\partial_t^l \partial_\eta^\alpha b_{\iota\iota^\prime, j}(t, y, \eta)|_{t=0, \eta=0}=0$ for all $j\leq J$ and $|\alpha|+l\leq L$. Thus the claim is proved. 

In particular, we set $j=0$ in \eqref{eq_expansion_symbol_term} and obtain 
\begin{equation}\label{eq_principal_symbol}
    \begin{split}
b_{\iota\iota^\prime, 0}(t, y, \eta)
&=\tilde b_{\iota\iota^\prime}(t, y, \eta, y) \\
&=\frac{( \overline{a_\iota}a_{\iota^\prime})(t, \hat x(t, y, \eta, y), y)g_\lambda (\hat x(t, y, \eta, y))^{1/2} g_\mu (z)^{1/2}}{\left|\det \partial_x \eta (t ,\hat x(t, y, \eta, y), y, y)\right|}. 
    \end{split}
\end{equation}
Since $\hat x(t, y, \eta, y)=x$ if and only if $\eta=-\partial_y\Phi_{\lambda\mu} (t, x, y)$ by the definition \eqref{eq_changing_variables} of $\hat x(t, y, \eta, z)$, we have 
\[\hat x(t, y, \eta, y)=\phi_\lambda (q^*(t, \tilde\phi_\mu^{-1} (y, \eta))). \]
Here $\tilde \phi_\mu: T^*U_\mu\to V_\mu\times \mathbb{R}^n$ is the canonical coordinates in $T^*M$ associated with the local coordinates $\phi_\mu: U_\mu \to V_\mu$. Moreover 
\begin{equation}\label{eq_derivative_of_eta}
\frac{\partial \eta_j}{\partial x_k}(t, \phi_\lambda (q^*(t, \tilde\phi_\mu^{-1}(y, \eta))), y, y)=-\frac{\partial^2 \Phi_{\lambda\mu} }{\partial x_k\partial y_j}(t, \phi_\lambda (q^*(t, \tilde\phi_\mu^{-1}(y, \eta))), y)
\end{equation}
by differentiating both sides of $\eta=\partial_y\Phi_{\lambda\mu}(t, x, y)$ by $x$ and substituting $x=\hat x(t, y, \eta, y)=\phi_\lambda (q^*(t, \tilde\phi_\mu^{-1}(y, \eta)))$. 
Hence by \eqref{eq_derivative_of_eta} and the definition \eqref{eq_definiton_mvv_scaled}, we obtain 
\begin{align*}
&g_\lambda (\hat x(t, y, \eta, y))^{1/2} g_\mu (y)^{1/2}\left|\det \frac{\partial \eta}{\partial x}(\hat x(t, y, \eta, y))\right|^{-1}  \\
&=|D_\Phi (t, q^*(t, \tilde\phi_\lambda^{-1}(y, \eta)), \phi_\mu^{-1}(y))|^{-1}. 
\end{align*}
Therefore, recallin \eqref{eq_principal_symbol}, we obtain a formula of the principal symbol $b_{\iota\iota^\prime, 0}(t, y, \eta)$ of the $(\hbar t)$-pseudodifferential operator $\phi_\mu^{-1*}T_\hbar [\kappa_\iota a](t)^*T_\hbar [\kappa_{\iota^\prime}a](t)\phi_\mu^*$: 
\[
b_{\iota\iota^\prime, 0}(t, y,\eta)=\tilde\phi_\mu^{-1*}\left(\frac{(\kappa_\iota\kappa_{\iota^\prime}|a|^2)(q^*(t, x, \eta), y)}{|D_\Phi (t, q^*(t, x, \eta), y)|}\right). 
\]

Since the set 
\[
K_{\iota\iota^\prime}:=\{\, y\in M \mid \exists x\in M \text{ s.t. } (x, y)\in \rmop{supp} \kappa_\iota \cup \rmop{supp} \kappa_{\iota^\prime} \,\} \subset V_\mu
\]
is compact, we can take $\chi_{\iota\iota^\prime}\in C_c^\infty (V_\mu)$ such that $\chi_{\iota\iota^\prime}=1$ on $K_{\iota\iota^\prime}$. The support of the integral kernel 
\[
(y, z)\longmapsto \frac{1}{(2\pi \hbar t)^{n/2}}\int_M \overline{a(t, x, y)}b(t, x, z)e^{i(\Phi (t, x, z)-\Phi(t, x, y))/\hbar t}\, \vol_g(x)
\]
of the operator $T_\hbar [\kappa_\iota a](t)^*T_\hbar [\kappa_{\iota^\prime}a](t)$ is included in the set
\[B:=\{\, (y, z)\in M\times M \mid \exists x\in M \text{ s.t. } (x, y)\in \rmop{supp} \kappa_\iota , \, (x, z)\in \rmop{supp} \kappa_{\iota^\prime}\,\}
\]
Since $B\subset K_{\iota\iota^\prime}\times K_{\iota\iota^\prime}$, we have
\begin{align*}
T_\hbar [\kappa_\iota a](t)^* T_\hbar [\kappa_{\iota^\prime} a](t)
&=\chi_{\iota\iota^\prime}T_\hbar [\kappa_\iota a](t)^* T_\hbar [\kappa_{\iota^\prime} a](t)\chi_{\iota\iota^\prime} \\
&=\chi_{\iota\iota^\prime}\phi_\mu^*b_{\iota\iota^\prime}^\mathrm{w}(\hbar t; t, y, \hbar t D_y)\phi_\mu^{-1*}\chi_{\iota\iota^\prime}
\end{align*}
recalling \eqref{eq_weyl_psiDO_ht}. 
\end{proof}

\subsection{Proof of Theorem \ref{theo_psiDO}, Theorem \ref{theo_L2_boundedness} and Theorem \ref{theo_vanishing}}\label{subsec_proof_FIO}

Now we are ready to the proof of the estimates of oscillatory integral operators in Subsection \ref{subsec_osc_goal}. 
In the following we fix $\{\Omega_\iota\}_{\iota\in I}$ as in Theorem \ref{theo_nonempty_intersection}. For each $(\iota, \iota^\prime)\in I\times I$ such that $\Omega_\iota \cap \Omega_{\iota^\prime}\neq\varnothing$, we fix $(\lambda, \mu )\in \Lambda\times \Lambda$ in the statement 2 of Theorem \ref{theo_nonempty_intersection} and denote by $(\lambda_{\iota\iota^\prime}, \mu_{\iota\iota^\prime})$. 

\begin{proof}[Proof of Theorem \ref{theo_psiDO}]
By Theorem \ref{theo_nonempty_intersection}, we can represent $T_\hbar[a](t)^*T_\hbar[a](t)$ as
\begin{equation}\label{eq_decomposition_omega}
    \begin{split}
&T_\hbar[a](t)^*T_\hbar[a](t) \\
&=\sum_{\Omega_\iota\cap \Omega_{\iota^\prime}\neq \varnothing} \chi_{\iota\iota^\prime}\phi_{\iota\iota^\prime}^*b_{\iota\iota^\prime}^\mathrm{w}(\hbar t; t, y, \hbar t D_y)\phi_{\iota\iota^\prime}^{-1*}\chi_{\iota\iota^\prime}
+\sum_{\Omega_\iota\cap \Omega_{\iota^\prime}= \varnothing} T_\hbar [\kappa_\iota a](t)^*T_\hbar [\kappa_{\iota^\prime}a](t). 
    \end{split}
\end{equation}
Here we put $\phi_{\iota\iota^\prime}:=\phi_{\mu_{\iota\iota^\prime}}$. 
The second term is in $(\hbar t)^\infty \Psi_{\hbar t}^{-\infty}$ since $T_\hbar [\kappa_\iota a](t)^*T_\hbar [\kappa_{\iota^\prime}a](t)$ with $\Omega_\iota \cap \Omega_{\iota^\prime}=\varnothing$ is a composition of oscillatory integral operators with amplitudes with disjoint supports and thus we can apply Theorem \ref{theo_disjoint_support}. Hence by the definition of semiclassical pseudodifferential operators (\ref{defi_psiDO}), $T_\hbar [a](t)^*T_\hbar [a](t)$ is an $(\hbar t)$-pseudodifferential operator of class $\Psi_{\hbar t}^0$. The principal symbol is
\[
\sum_{\Omega_\iota\cap \Omega_{\iota^\prime}\neq \varnothing} \chi_{\iota\iota^\prime}(y)^2 \frac{(\kappa_\iota\kappa_{\iota^\prime}|a|^2)(q^*(t, x, \eta), y)}{|D_\Phi (t, q^*(t, x, \eta), y)|}
=\frac{|a(q^*(t, x, \eta), y)|^2}{|D_\Phi (t, q^*(t, x, \eta), y)|}. 
\]
by the definition \eqref{eq_defi_principal_symbol} of the principal symbol and the formula \eqref{eq_principal_symbol_locally}. 
\end{proof}

\begin{proof}[Proof of Theorem \ref{theo_L2_boundedness}]
We define $A_\hbar (t):=T_\hbar [a](t)^*T_\hbar [a](t)$. Then 
\begin{equation}\label{eq_osc_psiDO}
    \|T_\hbar [a](t)\|_{L^2(M)\to L^2(M)}^2=\|A_\hbar (t)\|_{L^2(M)\to L^2(M)}. 
\end{equation} 
Now we carry out the argument in Proposition E.24 in \cite{Dyatlov-Zworski19}. Since $a\in C_c^\infty([0, t_0]\times N)$, the Schwartz kernel of $A_\hbar (t)$ is compactly supported. Take the supremum of the principal symbol of $A_\hbar (t)$: 
\[
M(t):=\left\|\frac{|a(t, q^*(t, x, \eta), y)|^2}{|D_\Phi (t, q^*(t, x, \eta), y)|}\right\|_{L^\infty(T^*M)}. 
\]
Then the principal symbol of the $(\hbar t)$-pseudodifferential operator $M(t)^2-A_\hbar (t)^* A_\hbar (t)$ is calculated as 
\[
\sigma_{\hbar t} ( M(t)^2-A_\hbar (t)^* A_\hbar (t) )=M(t)^2-\left| \frac{|a(q^*(t, x, \eta), y)|^2}{|D_\Phi (t, q^*(t, x, \eta), y)|}\right|^2\geq 0
\]
for all $(t, y, \eta)$. Since $M(t)^2-A_\hbar (t)^* A_\hbar (t)$ has a compact support (recall that we assumed the compactness of $M$), we can apply the sharp G\aa rding inequality (Theorem \ref{theo_psiDO_basic}) and obtain
\[
\jbracket{(M(t)^2-A_\hbar (t)^* A_\hbar (t) )u, u}_{L^2(M)}\geq -C\hbar t\|u\|_{L^2(M)}^2. 
\]
Hence 
\[
\|A_\hbar (t)u\|_{L^2(M)}^2\leq M(t)^2\| u\|_{L^2(M)}^2+C\hbar t\|u\|_{L^2(M)}^2
\leq (M(t)^2+C\hbar t)\|u\|_{L^2(M)}^2. 
\]
Since this holds for all $u\in L^2 (M)$, we have
\[
\|A_\hbar (t)\|_{L^2(M)\to L^2(M)}\leq \sqrt{M(t)^2+C\hbar t}. 
\]
Thus by \eqref{eq_osc_psiDO}, we obtain
\[
\|T_\hbar [a](t)\|_{L^2(M)\to L^2(M)}^4=\|A_\hbar (t)\|_{L^2(M)\to L^2(M)}^2 
\leq M(t)^2+C\hbar t. \qedhere
\]
\end{proof}

\begin{proof}[Proof of Theorem \ref{theo_vanishing}]
We divide the proof into 7 steps. 

\textbf{Step 1.}    As in the proof of Theorem \ref{theo_psiDO}, we employ the decomposition \eqref{eq_decomposition_omega}. 
Then 
\begin{align}
    &\| T_\hbar [a](t)\|_{H_\hbar^{s/2}\to L^2}^2=\|T_\hbar [a](t)(1-\hbar^2\triangle_g)^{-s/4}\|_{L^2\to L^2}^2 \nonumber\\
    &=\| (1-\hbar^2\triangle_g)^{-s/4}T_\hbar [a](t)^* T_\hbar [a](t) (1-\hbar^2\triangle_g)^{-s/4}\|_{L^2\to L^2} \nonumber\\
    &\leq C\sum_{\Omega_\iota \cap \Omega_{\iota^\prime}\neq \varnothing} \|b_{\iota\iota^\prime}^\mathrm{w}(\hbar t; t, y, \hbar t D_y)\|_{H_\hbar^{s/2}\to H_\hbar^{-s/2}} \nonumber\\
    &\quad+ \sum_{\Omega_\iota \cap \Omega_{\iota^\prime}=\varnothing}  \| (1-\hbar^2\triangle_g)^{-s/4}T_\hbar [a_\iota](t)^* T_\hbar [a_{\iota^\prime}](t) (1-\hbar^2\triangle_g)^{-s/4}\|_{L^2\to L^2} \nonumber\\
    \begin{split}&\leq C\sum_{\Omega_\iota \cap \Omega_{\iota^\prime}\neq \varnothing} \|b_{\iota\iota^\prime}^\mathrm{w}(\hbar t; t, y, \hbar t D_y)\|_{H_\hbar^{s/2}\to H_\hbar^{-s/2}} \\
    &\quad+ \sum_{\Omega_\iota \cap \Omega_{\iota^\prime}=\varnothing}  \| T_\hbar [a_\iota](t)^* T_\hbar [a_{\iota^\prime}](t) \|_{L^2\to L^2}\end{split}\label{eq_decomposition_beginning}
    \end{align}
    for some constant $C>0$ independent of $t\in [0, t_0/2]$ and $\hbar \in (0, 1]$. 

    \textbf{Step 2.} We first consider the case $\Omega_\iota \cap \Omega_{\iota^\prime}=\varnothing$. In this case, $T_\hbar[\kappa_\iota a](t)^* T_\hbar[\kappa_{\iota^\prime} a](t)$ is in the class $(\hbar t)^\infty \Psi_{\hbar t}^{-\infty}$ by Theorem \ref{theo_disjoint_support}. Since $M$ is compact, this leads to the boundedness of the operator: 
    \begin{align}
    \| T_\hbar[a_\iota](t)^* T_\hbar[a_{\iota^\prime}](t)\|_{L^2(M)\to L^2(M)} 
    &\leq \| K_{\iota\iota^\prime}(\hbar t; t, \cdot, \cdot)\|_{L^2(M\times M)} \nonumber\\ 
    &\leq \rmop{Vol}(M) \| K_{\iota\iota^\prime}(\hbar t; t, \cdot, \cdot)\|_{L^\infty (M\times M)} \nonumber\\
    &\leq C(\hbar t)^{J+1}. \label{eq_sobolev_estimate_empty} 
    \end{align}
    Here $K_{\iota\iota^\prime}(\hbar t; t, \cdot, \cdot)$ is the integral kernel of $ T_\hbar[\kappa_\iota a](t)^* T_\hbar[\kappa_{\iota^\prime} a](t)$ and $\rmop{Vol}(M)$ is the volume of $M$ with respect to the metric $g$. 

\textbf{Step 3.} Next we consider the case $\Omega_\iota \cap \Omega_{\iota^\prime}\neq \varnothing$. Assume that $\partial_t^l\partial_x^\alpha \partial_y^\beta a (0, x, y)|_{x=y}=0$ for all $x\in M$ and $l+|\alpha|+|\beta|\leq 2J+L$. This assumption implies $\partial_t^l \partial_\eta^\alpha b_{\iota\iota^\prime, j} (0, y, 0)=0$ for all $l+|\alpha|\leq L$ and $j\leq J$ by the statement 2 of Theorem \ref{theo_nonempty_intersection}. Then, by the Taylor theorem, we have  
\begin{align*}
&b_{\iota\iota^\prime, j}(t, y, \eta)= \\
&\frac{1}{L!}\sum_{|\alpha|+l=L+1} 
t^l\eta^\alpha
\begin{pmatrix}
L+1 \\
l
\end{pmatrix}
\int_0^1 (1-s)^L\partial_t^l\partial_\eta^\alpha b_{\iota\iota^\prime, j}(st, y, s\eta)\,ds
\end{align*}
for all $j\leq J$. We define
\[
c_{\iota\iota^\prime, jl\alpha}(t, y, \eta):= 
\frac{1}{L!}
\begin{pmatrix}
L+1 \\
l
\end{pmatrix}
\int_0^1 (1-s)^L\partial_t^l \partial_\eta^\alpha b_{\iota\iota^\prime, j}(st, y, s\eta)\,ds. 
\]
Then 
\begin{equation}\label{eq_defi_bc}
b_{\iota\iota^\prime, j}(t, y, \eta)=\sum_{|\alpha|+l=L+1} t^l \eta^\alpha c_{\iota\iota^\prime, jl\alpha}(t, y, \eta)
\end{equation}
holds for all $(t, y, \eta)\in [-t_0/2, t_0/2]\times \mathbb{R}^{2n}$. We can take a large $R>0$ such that 
\begin{equation}\bigcup_{|t|\leq t_0/2} \rmop{supp} b_{\iota\iota^\prime, j}(t, \cdot, \cdot)\subset \{ |\eta|\leq R\}. \label{eq_support_low_energy}\end{equation}

The existence of such $R>0$ is proved as follows. The formula \eqref{eq_expansion_symbol_term} implies that 
\begin{equation} \label{eq_support_1}
    \rmop{supp} b_{\iota\iota^\prime, j}(t, \cdot, \cdot)\subset \{\, (y, \eta)\in \mathbb{R}^{2n}\mid (t, y, \eta, y)\in \rmop{supp}\tilde b_{\iota\iota^\prime}\,\}. \end{equation}
Next we recall the formula \eqref{eq_bisymbol} and the definition \eqref{eq_changing_variables} of $\hat x(t, y, \eta, z)$ and obtain 
\begin{equation}\label{eq_support_2}
    \begin{split}
    &\{\, (y, \eta)\in \mathbb{R}^{2n}\mid (t, y, \eta, y)\in \rmop{supp}\tilde b_{\iota\iota^\prime}\,\} \\
    &\subset \{\, (y, \eta)\in \mathbb{R}^{2n} \mid (t, \hat x(t, y, \eta, y), y) \in \rmop{supp}(a_\iota a_{\iota^\prime})\,\}.
    \end{split}
\end{equation}
for all $[-t_0/2, t_0/2]$. 
\eqref{eq_support_1} and \eqref{eq_support_2} imply that 
\begin{equation}\label{eq_support_12}
    \rmop{supp}b_{\iota\iota^\prime, j}(t, \cdot, \cdot) \subset \{\, (y, \eta)\in \mathbb{R}^{2n} \mid (t, \hat x(t, y, \eta, y), y) \in \rmop{supp}(a_\iota a_{\iota^\prime})\,\}\end{equation}
for all $t\in [-t_0/2, t_0/2]$. Let $x=\hat x(t, y, \eta, y)$. Then by the defition \eqref{eq_changing_variables} of $\hat x(t, y, \eta, y)$ implies $\eta=-\partial_y \Phi_{\lambda\mu}(t, x, y)$. Thus if $(t, \hat x(t, y, \eta, y), y) \in \rmop{supp}(a_\iota a_{\iota^\prime})$, then 
\begin{equation}\label{eq_support_3}
\eta\in \{ \, -\partial_y \Phi_{\lambda\mu} (t, x, y) \mid (t, x, y)\in \rmop{supp}(a_\iota a_{\iota^\prime})\,\}. 
\end{equation}
The right hand side of \eqref{eq_support_3} is an image of the compact set $\rmop{supp}(a_\iota a_{\iota^\prime})$ by the continuous map $-\partial_y\Phi_{\lambda\mu}(t, x, y)$. Thus it is a compact set. Hence we can take $R>0$ such that 
\begin{equation}\label{eq_support_123}
    \begin{split}
        &\{ \, (t, y, \eta)\in [-t_0/2, t_0/2]\times \mathbb{R}^{2n} \mid (t, \hat x(t, y, \eta, y), y) \in \rmop{supp} (a_\iota a_{\iota^\prime})\,\} \\
&\subset \{ \, (t, y, \eta)\in [-t_0/2, t_0/2]\times \mathbb{R}^{2n} \mid |\eta|\leq R\,\}. 
    \end{split} \end{equation}
Combining \eqref{eq_support_12} and \eqref{eq_support_123}, we obtian 
\[\rmop{supp}b_{\iota\iota^\prime, j}(t, \cdot, \cdot) \subset \{ \, (y, \eta)\in \mathbb{R}^{2n} \mid |\eta|\leq R\,\}
    \]
for all $t\in [-t_0/2, t_0/2]$. Thus \eqref{eq_support_low_energy} is proved. 

\textbf{Step 4.} In the following steps, we should forget that $t$ sometimes behaves like ``\textit{a semiclassical parameter}." We consider the semiclassical Weyl quantization of the symbol
\[
(t\eta)^\alpha c_{\iota\iota^\prime, jl\alpha}(t, y, t\eta). 
\]
We show that for all mutiindices $\alpha$, $\beta$ and $\gamma$, there exists $C_{\alpha\beta\gamma}>0$ such that 
\begin{equation}\label{eq_estimate_c}
|\partial_y^\beta \partial_\eta^\gamma ((t\eta)^\alpha c_{\iota\iota^\prime, j\alpha}(t, y, t\eta))|
\leq C_{\alpha\beta\gamma} t^{s_\alpha} (1+|\eta|)^s 
\end{equation}
holds for all $y\in \mathbb{R}^n$, $t\in (0, t_0]$ and $|\eta|\leq R/t$. 
For any multiindices $\alpha$, $\beta$ and $\gamma$, and $y\in \mathbb{R}^n$ and $|\eta|\leq R/t$, we have  
\begin{align}
&|\partial_y^\beta \partial_\eta^\gamma ((t\eta)^\alpha c_{\iota\iota^\prime, jl\alpha}(t, y, t\eta))|
= |t^{|\gamma|}\partial_y^\beta \partial_{\tilde\eta}^\gamma ({\tilde\eta}^\alpha c_{\iota\iota^\prime, j\alpha}(t, y, \tilde\eta))|_{\tilde\eta=t\eta}| \nonumber\\
&\leq C_{\alpha\beta\gamma}t^{|\gamma|}\sum_{\gamma^\prime \leq \gamma} 
| t^{|\alpha|-|\gamma|+|\gamma^\prime|}(\partial_\eta^{\gamma-\gamma^\prime} \eta^\alpha) (\partial_y^\beta \partial_\eta^{\gamma^\prime} c_{\iota\iota^\prime, j\alpha}(t, y, t\eta)) | \nonumber\\
&\leq C_{\alpha\beta\gamma}\sum_{\substack{0\leq \gamma^\prime \leq \gamma \\ \gamma^\prime \geq \gamma-\alpha}} 
t^{|\alpha|+|\gamma^\prime|}|\eta|^{|\alpha|-|\gamma|+|\gamma^\prime|} |c_{\iota\iota^\prime, j\alpha}|_{\beta, \gamma^\prime}. \label{eq_estimate_c_wip}
\end{align}
Here 
\[
|c_{\iota\iota^\prime, j\alpha}|_{\beta, \gamma^\prime}
:=\sup_{\substack{t \in [-t_0, t_0] \\ (y, \eta)\in \mathbb{R}^{2n} \\ |\eta|\leq R}}| \partial_y^\beta \partial_\eta^{\gamma^\prime} c_{\iota\iota^\prime, j\alpha}(t, y, \eta)|<\infty. 
\]
If $|\alpha|-|\gamma|+|\gamma^\prime|\geq s \, (\geq s_\alpha)$, then for $|t\eta|\leq R$, 
\begin{align}
t^{|\alpha|+|\gamma^\prime|}|\eta|^{|\alpha|-|\gamma|+|\gamma^\prime|} 
&= t^{|\gamma|}|t\eta|^{|\alpha|-|\gamma|+|\gamma^\prime|} 
\leq t_0^{|\gamma|-s_\alpha+s}R^{|\alpha|-|\gamma|+|\gamma^\prime|-s_\alpha}t^{s_\alpha} |\eta|^s \nonumber\\
&\leq C_{\alpha\beta\gamma}t^{s_\alpha}(1+|\eta|)^s \label{eq_t_eta_case1}
\end{align}
holds. 

If $(0\leq )\, |\alpha|-|\gamma|+|\gamma^\prime|< s$, then, recalling $s_\alpha\leq |\alpha|$, we obtain 
\begin{align}
t^{|\alpha|+|\gamma^\prime|}|\eta|^{|\alpha|-|\gamma|+|\gamma^\prime|} 
&\leq t_0^{|\gamma^\prime|+|\alpha|-s_\alpha} t^{s_\alpha}(1+|\eta|)^{s_\alpha}
\leq t_0^{|\gamma^\prime|+|\alpha|-s_\alpha} t^{s_\alpha}(1+|\eta|)^s \nonumber\\
&\leq C_{\alpha\beta\gamma}t^{s_\alpha} (1+|\eta|)^s. \label{eq_t_eta_case2}
\end{align}
for all $|t\eta|\leq R$. 
Thus, by \eqref{eq_t_eta_case1} and \eqref{eq_t_eta_case2}, $|\alpha|-|\gamma|+|\gamma^\prime|\geq 0$ implies that there exists a constant $C_{\alpha\beta\gamma}>0$ such that the inequality
\begin{equation}\label{eq_t_eta_estimate}
    t^{|\alpha|+|\gamma^\prime|}|\eta|^{|\alpha|-|\gamma|+|\gamma^\prime|}
    \leq C_{\alpha\beta\gamma} t^{s_\alpha}(1+|\eta|)^s
\end{equation}
holds for all $t\in (0, t_0/2]$ and $|\eta|\leq R/t$. Therefore the inequalities \eqref{eq_estimate_c_wip} and \eqref{eq_t_eta_estimate} imply
\begin{align*}
    &|\partial_y^\beta \partial_\eta^\gamma ((t\eta)^\alpha c_{\iota\iota^\prime, jl\alpha}(t, y, t\eta))| \\
    &\leq C_{\alpha\beta\gamma} \sum_{\substack{0\leq \gamma^\prime \leq \gamma \\ \gamma^\prime \geq \gamma-\alpha}} 
    t^{|\alpha|+|\gamma^\prime|}|\eta|^{|\alpha|-|\gamma|+|\gamma^\prime|} |c_{\iota\iota^\prime, j\alpha}(t, \cdot, \cdot)|_{\beta, \gamma^\prime} \\
    &\leq C_{\alpha\beta\gamma} t^{s_\alpha}(1+|\eta|)^s. 
\end{align*}
The inequality \eqref{eq_estimate_c} is proved. 

\textbf{Step 5.} By \eqref{eq_defi_bc} ($\eta$ is replaced by $t\eta$) and \eqref{eq_estimate_c}, the derivatives of $b_{\iota\iota^\prime, j}(t, y, t\eta)$ are estimated as
\begin{align}
|\partial_y^\beta \partial_\eta^\gamma (b_{\iota\iota^\prime, j}(t, y, t \eta))|
&\leq \sum_{|\alpha|+l=L+1} t^l  |\partial_y^\beta \partial_\eta^\gamma ((t\eta)^\alpha c_{\iota\iota^\prime, j\alpha}(t, y, t\eta))| \nonumber\\
&\leq C_{\beta\gamma} (1+|\eta|)^s\sum_{|\alpha|+l=L+1} t^{l+s_\alpha} \label{eq_estimate_b}
\end{align}
provided $|t\eta|\leq R$. Since $b_{\iota\iota^\prime, j}(t, y, t\eta)=0$ if $|t\eta|>R$ by \eqref{eq_support_low_energy}, the above inequality \eqref{eq_estimate_b} holds for all $(y, \eta)\in \mathbb{R}^{2n}$. 

The inequality \eqref{eq_estimate_b} means that the family 
\[\left\{ \, \left(\sum_{|\alpha|+l\leq L+1} t^{l+s_\alpha}\right)^{-1}b_{\iota\iota^\prime, j}(t, y, t\eta)\,\right\}_{t\in (0, t_0]}\] 
is bounded in the symbol class 
\[
    S^s_{0, 0}:=\left\{\, a(x, \xi)\in C^\infty(\mathbb{R}^n\times \mathbb{R}^n) \,\middle|\, 
    \begin{aligned}
    &\|(1+|\xi|)^{-m}\partial_x^\alpha\partial_\xi^\beta a(x, \xi)\|_{L^\infty}<\infty \\
    &\text{ for all } \alpha, \beta \in \mathbb{N}_0^n
    \end{aligned}
    \,\right\}. \]

Thus there exists a constant $C>0$ such that 
\begin{equation}\label{eq_sobolev_estimate}
\|b_{\iota\iota^\prime, j}^\mathrm{w}(t, y, \hbar t D_y)\|_{H_\hbar^{s/2}\to H_\hbar^{-s/2}} 
\leq C\sum_{|\alpha|+l=L+1} t^{l+s_\alpha}. 
\end{equation}
for all $t\in (0, t_0/2]$ and $\hbar\in (0, 1]$ (see Theorem 8.10 in \cite{Zworski12}). 

\textbf{Step 6.} The remainder term 
\[
\tilde b_{\iota\iota^\prime, J}(\hbar t; t, y, t\eta):=(\hbar t)^{-J-1}\left(b_{\iota\iota^\prime}(\hbar t; t, y, t\eta)-\sum_{j=0}^J (\hbar t)^j b_{\iota\iota^\prime, j} (t, y, t\eta)\right)
\]
of the asymptotic expansion \eqref{eq_asymptotic_b} is bounded in $S^0_{1, 0}$ by the definition of asymptotic expansions. Thus 
\begin{equation}\label{eq_sobolev_estimate_rem}
\| \tilde b_{\iota\iota^\prime, J}^\mathrm{w}(\hbar t; t, y, \hbar t D_y)\|_{H_\hbar^{s/2}\to H_\hbar^{-s/2}}\leq C
\end{equation}
for some constant $C>0$ independent of $t\in (0, t_0/2]$ and $\hbar\in (0, 1]$. Hence, by combining \eqref{eq_sobolev_estimate} and \eqref{eq_sobolev_estimate_rem}, we obtain
\begin{align}
\|b_{\iota\iota^\prime}^\mathrm{w}(\hbar t; t, y, \hbar t D_y)\|_{H_\hbar^{s/2}\to H_\hbar^{-s/2}} 
&\leq \sum_{j=0}^J (\hbar t)^j \|b_{\iota\iota^\prime, j}^\mathrm{w}(t, y, \hbar t D_y)\|_{H_\hbar^{s/2}\to H_\hbar^{-s/2}} \nonumber\\
&\quad + (\hbar t)^{J+1} \|\tilde b_{\iota\iota^\prime, J}^\mathrm{w}(\hbar t; t, y, \hbar t D_y)\|_{H_\hbar^{s/2}\to H_\hbar^{-s/2}} \nonumber\\
&\leq C\sum_{j=0}^J\sum_{|\alpha|+l=L+1} \hbar^j t^{l+j+s_\alpha}+O((\hbar t)^{J+1}) \nonumber\\
&\leq C\left(\sum_{|\alpha|+l=L+1} t^{l+s_\alpha}+(\hbar t)^{J+1}\right) \label{eq_sobolev_estimate_nonempty}
\end{align}
for some $C<0$ independent of $t\in (0, t_0/2]$ and $\hbar\in (0, 1]$. 

\textbf{Step 7.} Finally we substitude \eqref{eq_sobolev_estimate_empty} and \eqref{eq_sobolev_estimate_nonempty} to \eqref{eq_decomposition_beginning}, we obtain 
\begin{align*}
    \| T_\hbar [a](t)\|_{H_\hbar^{s/2}\to L^2}^2 
    &\leq C\sum_{\Omega_\iota \cap \Omega_{\iota^\prime}\neq \varnothing} \|b_{\iota\iota^\prime}^\mathrm{w}(\hbar t; t, y, \hbar t D_y)\|_{H_\hbar^{s/2}\to H_\hbar^{-s/2}} \\
    &\quad+ \sum_{\Omega_\iota \cap \Omega_{\iota^\prime}=\varnothing}  \| T_\hbar [a_\iota](t)^* T_\hbar [a_{\iota^\prime}](t) \|_{L^2\to L^2} \\
    &\leq C\left(\sum_{|\alpha|+l=L+1} t^{l+s_\alpha}+(\hbar t)^{J+1}\right)+C(\hbar t)^{J+1} \\
    &\leq C\left(\sum_{|\alpha|+l=L+1} t^{l+s_\alpha}+(\hbar t)^{J+1}\right).
\end{align*}
The proof is completed. 
\end{proof}



\section{Definition of short-time approximate solution}\label{sect_def_of_E_h(t)}
\subsection{Precise definition of short-time approximate solution}\label{subsect_definition}

We have proved in Section \ref{sect_low_energy_cm} that the action $S(t, x, y)$ and the Morette-Van Vleck determinant $D(t, x, y)$ are well-defined and have suitable properties. Thus 
we can define a short-time approximate solution to the modified Schr\"odinger equation. 

\begin{defi}
We fix $[-t_0, t_0]\times N\subset \mathbb{R}\times M\times M$ as in Theorem \ref{theo_action_well_defined}. Then for $\chi\in C_c^\infty(N; [0, 1])$ with $\chi=1$ near $\mathrm{diag}(M)$ and $t\in (0, t_0]$, we define
\[
E^\chi_\hbar(t)u(x):=\frac{1}{(2\pi i\hbar)^{n/2}}\int_M \chi(x, y) \sqrt{D(t, x, y)} e^{iS(t, x, y)/\hbar} u(y)\, \vol_g(y). 
\]
If $\chi$ is obvious from the context, we denote $E^\chi_\hbar(t)$ by $E_\hbar(t)$. 
\end{defi}

\subsection{Proof of Proposition \ref{prop_fundamental_properties}}\label{subsect_proof_of_fp}

\begin{proof}[Proof of (i)]
This is an immediate consequence of the Lebesgue dominated convergence theorem. 
\end{proof}

\begin{proof}[Proof of (ii)]
Fix $u\in C^\infty(M)$ and $x\in M$. By Theorem \ref{theo_difference_from_free}, we have
\[
E_\hbar(t)u(x)=\frac{1}{(2\pi i \hbar t)^{n/2}}\int_M \chi(x, y)\sqrt{D_\Phi (t, x, y)}e^{it\phi(t, x, y)/\hbar}e^{id(x, y)^2/ 2\hbar t}u(y) \,\vol_g(y). 
\]
Since we are only interested in the behavior in $t\to +0$, we fix the semiclassical parameter $\hbar>0$. We regard $\chi(x, y)\sqrt{D_\Phi (t, x, y)}e^{it\phi(t, x, y)/\hbar}u(y)$ as an amplitude and $d(x, y)^2/2\hbar$ as a phase function. We apply the method of stationary phase (see \cite{Hormander85-1} for example) as $t\to +0$. Then there exists a constant $C_\hbar>0$ ($C_\hbar$ may diverge as $\hbar\to +0$) such that 
\[
\left\| E_\hbar(t)u(x)- \frac{\sqrt{D_\Phi(t, x, x)}}{|\mathrm{Hess}_{g, y}\Phi(0, x, x)|^{1/2}}e^{it\phi(t, x, x)/\hbar}u(x)\right\|_{L_x^\infty(M)}\leq C_\hbar t
\]
for small $t\ll t_0$. Here we used the fact that $d_y(d(x, y)^2)=0$ if and only if $x=y$ provided $(x, y)\in N$. The weighted Hessian 
\[
\mathrm{Hess}_{g, y}\Phi(0, x, x):=\det (g_{ij}(x))^{-1}\det (\partial_y^2 \Phi(0, x, y))|_{y=x}, 
\]
which is independent of the choice of local coordinates near $x$, is equal to 1. Thus 
\[
\| E_\hbar(t)u-u\|_{L^2(M)}\leq C\|(\sqrt{D_\Phi (t, x, x)}e^{it\phi(t, x, x)/\hbar}-1)u(x)\|_{L^2(M)}+
C_\hbar t. 
\]
Since $\|D_\Phi \|_{L^\infty(I\times N)}<\infty$ and $D_\Phi (0, x, x)=1$ by Proposition \ref{prop_amplitude_diagonal}, we can apply the Lebesgue dominated convergence theorem as $t\to +0$ and obtain
\[
\lim_{t \to +0} \|E_\hbar(t)u-u\|_{L^2(M)}=0. \qedhere
\]
\end{proof}

(iii) is an immediate consequence of the following lemma. 
\begin{lemm}\label{lemm_time_derivative}
Let $u\in C^\infty(M)$. Then $t\in (0, t_0]\mapsto E_\hbar(t)u\in L^2(M)$ is differentiable in $L^2$ and its derived function is the pointwise derivative $v(t, x):=\partial_t(E_\hbar(t)u(x))$. Moreover $t\in (0, t_0]\mapsto v(t, \cdot)\in L^2(M)$ is continuous in $L^2(M)$.  
\end{lemm}

\begin{proof}
$v(t, x)$ is the form
\[
v(t, x)=\int_M b_\hbar(t, x, y) e^{i\Phi(t, x, y)/ \hbar t} u(y) \, \vol_g(y), 
\]
where 
\[ b_\hbar (t, x, y)=\frac{\chi(x, y)}{(2\pi i\hbar)^{n/2}}\left(\frac{\partial}{\partial t}(\sqrt{D(t, x, y)})+\frac{i}{\hbar}\sqrt{D(t, x, y)}\frac{\partial S}{\partial t}(t, x, y)\right). \]
We only have to note that $b_\hbar\in C^\infty((0, t_0]\times N)$. By the Lebesgue dominated convergence theorem, the mapping
\[
t\in (0, t_0) \longmapsto v(t, \cdot)\in L^2(M)
\]
is continuous in $L^2$. Thus we can consider the Riemann integral in the $L^2$: 
\[
w(t, s):=\int_s^t v(\sigma, \cdot)\, d\sigma \in L^2(M)
\]
for $s, t\in (0, t_0]$. We prove that $E(t)u-E(s)u=w(t, s)$. Let $\varphi\in C^\infty(M)$ be an arbitrary test function. Since we can change the order of the integration and the inner product, we have
\[
\jbracket{w(t, s), \varphi}_{L^2(M)}=\jbracket{\int_s^t v(\sigma, \cdot)\, d\sigma, \varphi}_{L^2(M)}
=\int_s^t \jbracket{v(\sigma, \cdot), \varphi}_{L^2(M)}\, d\sigma. 
\]
By Fubini's theorem, we have
\begin{align*}
\int_s^t \jbracket{v(\sigma, \cdot), \varphi}_{L^2(M)}\, d\sigma
&=\int_M \vol_g(x)\, \overline{\varphi(x)} \int_s^t d\sigma \,v(\sigma, x) \\
&=\int_M (E_\hbar(t)u(x)-E_\hbar(s)u(x))\overline{\varphi(x)} \, \vol_g(x) \\
&=\jbracket{E_\hbar (t)u-E_\hbar(s)u, \varphi}_{L^2(M)}. 
\end{align*}
Since $\varphi\in C^\infty(M)$ is arbitrary, we obtain $w(t, s)=E_\hbar(t)u-E_\hbar(s)u$. 

This fact means that $t\in (0, t_0)\mapsto E_\hbar(t)u\in L^2(M)$ is differentiable in the $L^2$ norm topology and the derived function
\[
t \in (0, t_0) \longmapsto \frac{\partial}{\partial t}E_\hbar(t)u=v(t, \cdot)\in L^2(M)
\]
is continuous in $L^2(M)$. 
\end{proof}


\section{Proof of stability and consistency}\label{sect_proof_s_and_c}

We prove Theorem \ref{theo_stability_consistency} in this section. We take $\delta:=t_0/2>0$. We prove that this $\delta$ is a required one in the statements in Theorem \ref{theo_stability_consistency}.  

\subsection{Proof of stability}

\begin{proof}
We apply Theorem \ref{theo_L2_boundedness} to $E_\hbar(t)$, setting $a=\chi\sqrt{D_\Phi}$. Then we have \begin{align*}
\| E_\hbar(t)\|_{L^2(M)\to L^2(M)}^4 
&\leq \left\|\frac{|\chi (q^t_1(y, \eta), y)|^2|D_\Phi (t, q^t_1(y, \eta), y)|}{|D_\Phi (t, q^t_1(y, \eta), y)|}\right\|_{L^\infty(T^*M)}^2+C\hbar t \\
&=\|\chi (q^t_1(y, \eta), y)^2\|_{L^\infty(T^*M)}^2+C\hbar t \leq 1+C\hbar t\leq e^{C\hbar t}. 
\end{align*}
Hence we obtain
\[
\|E_\hbar (t)\|_{L^2(M)\to L^2(M)}\leq e^{C\hbar t}
\]
for all $t\in (0, t_0/2]$ and $\hbar \in (0, 1]$. 
\end{proof}

\subsection{Proof of consistency}

First we confirm that the integral kernel of $E_\hbar (t)$ satisfies the modified Schr\"odinger equation approximately. 
\begin{prop}\label{prop_WKB}
    We have 
\[
\left( i\hbar\frac{\partial}{\partial t}-\tilde H_\hbar\right)(\sqrt{D}\chi e^{iS/\hbar})
= 
(\hbar^2 t^{-n/2} r_0(t, x, y)+ \hbar t^{-n/2-1} r_{1, \hbar}(t, x, y))e^{iS/\hbar},  
\]
where
\begin{equation}\label{eq_defi_r0}
r_0(t, x, y)=\chi(x, y)\left(\frac{1}{2}\triangle_x a(t, x, y)-\frac{1}{12}R(x)a(t, x, y)\right)
\end{equation}
and
\[
r_{1, \hbar}(t, x, y)=g(ia\,\mathrm{grad}_x\Phi+\hbar t\,\mathrm{grad}_xa, \mathrm{grad}_x\chi)+\frac{\hbar t}{2}a\triangle_x\chi. 
\]
\end{prop}

\begin{proof}
By the Leibnitz rule, we have
\begin{align*}
&e^{-iS/\hbar}\left( i\hbar\frac{\partial}{\partial t}-\tilde H_\hbar\right)(\sqrt{D}\chi e^{iS/\hbar})= \\
& -\left(\frac{\partial S}{\partial t}+\frac{1}{2}|\mathrm{grad}_x S|_g^2+V\right) \sqrt{D}\chi \\
&+i \hbar\left(\frac{\partial \sqrt{D}}{\partial t}+g(\mathrm{grad}_xS, \mathrm{grad}_x \sqrt{D})+\frac{1}{2}\sqrt{D}\triangle_x S\right) \chi \\ 
&+\hbar^2t^{-n/2} r_0(t, x, y)+ \hbar t^{-n/2-1} r_{1, \hbar}(t, x, y). 
\end{align*}
Since $S$ solves the Hamilton-Jacobi equation 
\[
\frac{\partial S}{\partial t}(t, x, y)+\frac{1}{2}|\mathrm{grad}_x S(t, x, y)|_g^2+V(x)=0
\]
by Theorem \ref{theo_property_of_action} and $D^{1/2}$ satisfies the transport equation
\begin{align*}
&\frac{\partial \sqrt{D}}{\partial t}(t, x, y)+g\left(\mathrm{grad}_xS(t, x, y), \mathrm{grad}_x \sqrt{D(t, x, y)}\right) \\
&+\frac{1}{2}\sqrt{D(t, x, y)}\triangle_x S(t, x, y) 
=0
\end{align*}
by Theorem \ref{theo_1/2_transport}, we obtain the conclusion. 
\end{proof}

Recall the definition \eqref{eq_defi_oscillatory} of $T_\hbar [a](t)$ for an amplitude function $a(t, x, y)$:
\[
T_\hbar [a](t)u(x)=\frac{1}{(2\pi i\hbar t)^{n/2}}\int_M a(t, x, y)e^{i\Phi(t, x, y)/\hbar t}u(y)\, \vol_g(y). 
\]
Then Proposition \ref{prop_WKB} implies  
\[
\left( i\hbar\frac{\partial}{\partial t}-\tilde H_\hbar\right)(E_\hbar(t)u(x))
=\hbar^2 T_\hbar[r_0](t)u(x)+\hbar t^{-1} T_\hbar[r_{1, \hbar}](t)u(x)
\]
for all $u\in C^\infty(M)$ and $x\in M$. We define an operator $G_\hbar (t)$ as 
\[
G_\hbar(t)u(x):=\hbar^2 T_\hbar [r_0](t)u(x)+ \hbar t^{-1} T_\hbar [r_1](t)u(x)
\]
for $u\in C^\infty(M)$. Then 
\begin{equation}\label{eq_remainder_schroedinger}
\|G_\hbar(t)u\|_{L^2(M)}\leq \hbar^2\|T_\hbar [r_0](t)u\|_{L^2(M)}+\hbar t^{-1}\| T_\hbar [r_1](t)u\|_{L^2(M)}. 
\end{equation}

In the following we estimate $\|T_\hbar [r_0](t)u\|_{L^2(M)}$ and $\| T_\hbar [r_1](t)u\|_{L^2(M)}$ employing Theorem \ref{theo_vanishing}. 

\begin{lemm}\label{lemm_r1}
For any $\varepsilon\in [0, 1/2]$, there exists a constant $C>0$ such that for all $t\in (0, t_0/2]$ and $\hbar\in (0, 1]$, the inequality
\[
\| T_\hbar[r_1](t)\|_{H_\hbar^{1+\varepsilon}(M)\to L^2(M)}\leq Ct^{1+\varepsilon}
\]
holds. 
\end{lemm}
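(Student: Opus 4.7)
The strategy is to invoke Theorem \ref{theo_vanishing} directly, exploiting that $r_1$ vanishes to infinite order on the diagonal. Since $\chi \equiv 1$ on some open neighborhood $U \supset \rmop{diag}(M)$, both $\mathrm{grad}_x\chi$ and $\triangle_x\chi$ vanish identically on $U$, so the explicit form of $r_1$ given in Proposition \ref{prop_WKB} forces $r_1(t, x, y) = 0$ for every $(t, x, y) \in (-t_0, t_0) \times U$. In particular, $\partial_x^\alpha \partial_y^\beta r_1(t, x, y)$ vanishes identically on $Z$ for every pair of multi-indices and every $t$. Here $Z = \rmop{diag}(M)$ follows from $\Phi(0, x, y) = d(x, y)^2/2$ (Theorem \ref{theo_difference_from_free}), and $\Phi$ satisfies Assumption \ref{assu_phase_function_B} by Theorem \ref{theo_phase_condition}.

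The hypotheses of Theorem \ref{theo_vanishing} are thus satisfied for \emph{any} $L$ and $N$. After noting that $T[r_1](t) = i^{-n/2}\, T_t[r_1(t,\cdot,\cdot)](t)$ in the notation of Section \ref{sect_FIO}, i.e.\ with semiclassical parameter $\varepsilon = t$, I would apply Theorem \ref{theo_vanishing} with $L = 2$, $N = 2$ and
\[
s_\alpha := \min(|\alpha|,\, 2+2\varepsilon), \qquad |\alpha| \le L+1 = 3,
\]
so that $s = \max_{|\alpha|\le 3} s_\alpha = 2+2\varepsilon$ and $H^{s/2} = H^{1+\varepsilon}$. Substituting $\varepsilon = t$ in the conclusion gives
\[
\| T[r_1](t)\|_{H^{1+\varepsilon}\to L^2}^2 \le C\sum_{\substack{|\alpha|+l = 3 \\ l \ge 0}} t^{\,l + s_{|\alpha|}} + C t^{N+1}.
\]
For $|\alpha| = 0, 1, 2$ one has $l + s_{|\alpha|} = 3$, and for $|\alpha| = 3$ one has $l + s_{|\alpha|} = 2+2\varepsilon$; since $\varepsilon \le 1/2$ implies $3 \ge 2+2\varepsilon$, the whole right-hand side is bounded by $C t^{2+2\varepsilon}$ on $(0, \delta]$ for $\delta$ sufficiently small. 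Taking square roots yields the claimed $C t^{1+\varepsilon}$ estimate.

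The only delicate point — and the main obstacle requiring care — is that the amplitude $r_1(t, \cdot, \cdot)$ itself depends on the parameter $t$ which also plays the role of the semiclassical parameter in Theorem \ref{theo_vanishing}, while that theorem is stated for $t$-independent amplitudes. Uniformity is handled by the observation that $r_1 \in C^\infty([-\delta, \delta] \times \overline{N})$, so all its $C^k$ seminorms in $(x, y)$ are uniformly bounded in $t$; the constant in Theorem \ref{theo_vanishing} depends on the amplitude only through finitely many such seminorms, which is visible from the proof architecture (non-stationary phase integration by parts combined with the $L^2$-boundedness of pseudodifferential operators via Theorem \ref{theo_psiDO}). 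A uniform choice of $C$ then suffices, completing the plan.
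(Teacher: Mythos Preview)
Your approach is essentially the paper's own: both proofs observe that $r_1$ vanishes identically on a neighborhood of $(-t_0,t_0)\times\rmop{diag}(M)$ (hence all derivatives vanish on $Z$) and invoke Theorem~\ref{theo_vanishing} with $L=2$ and the semiclassical parameter set equal to $t$. The paper takes $N=1$ and $s_\alpha=|\alpha|-1+2\varepsilon$ for $|\alpha|\ge 1$, whereas you take $N=2$ and $s_\alpha=\min(|\alpha|,2+2\varepsilon)$; both yield $s=2+2\varepsilon$ and the same $t^{2+2\varepsilon}$ bound, and your choice $N=2$ has the minor advantage that the residual term $\varepsilon^{N+1}=t^3\le t^{2+2\varepsilon}$ is already absorbed without further comment, while your explicit remark on the $t$-dependence of the amplitude is a point the paper passes over silently.
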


\begin{proof}
Noting that $r_1$ vanishes near $\{0\}\times \rmop{diag}(M)$, we apply Theorem \ref{theo_vanishing} setting $J=1$, $L=2$ and
\[
s_\alpha=
\begin{cases}
0 & \text{for } |\alpha|=0, \\
|\alpha|-1+2\varepsilon & \text{for } 1\leq |\alpha|\leq 3. \\
\end{cases}
\]
Then there exists a constant $C>0$ such that 
\begin{align*}
\|T[r_1](t)u\|_{L^2(M)}^2 
&\leq 
C(t^3+t^2\times (\hbar t)^{2\varepsilon}+t\times t^{1+2\varepsilon}+t^{2+2\varepsilon}+(\hbar t)^2)\|u\|_{H^{1+\varepsilon}(M)}^2 \\
&\leq Ct^{2+2\varepsilon}\|u\|_{H^{1+\varepsilon}(M)}^2
\end{align*}
holds for all $u\in C^\infty(M)$, $t\in (0, t_0/2]$ and $\hbar\in (0, 1]$. 
\end{proof}

Next we estimate $T_\hbar [r_0](t)$. 

\begin{lemm}\label{lemm_r0}
For any $\varepsilon\in [0, 1/2]$, there exists a constant $C>0$ such that 
\[
\|T_\hbar [r_0](t)\|_{H_\hbar^\varepsilon(M)\to L^2(M)} \leq C t^\varepsilon
\]
for all $t\in (0, t_0/2]$ and $\hbar\in (0, 1]$. 
\end{lemm}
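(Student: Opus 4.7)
The plan is to split $r_0$ into its $t=0$ contribution plus a smooth $O(t)$ remainder, and apply Theorem \ref{theo_vanishing} to the former using the vanishing identity $r_0(0,x,x)=0$, which is the very reason for the $R/12$ correction in $\tilde H$. To verify the identity, note that $\Phi(0,x,y)=d(x,y)^2/2$, so a short computation in geodesic normal coordinates centred at $x$ gives $a(0,x,x)=1$. Combined with $\chi(x,x)=1$ near the diagonal and Theorem \ref{theo_curvature}, this yields
\[
r_0(0,x,x)=\tfrac12\cdot\tfrac{1}{6}R(x)-\tfrac{1}{12}R(x)\cdot 1=0,
\]
so $r_0(0,\cdot,\cdot)$ vanishes on the stationary set $Z=\rmop{diag}(M)\cap N$.

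Write $r_0(t,x,y)=r_0(0,x,y)+t\,\tilde r(t,x,y)$ with $\tilde r(t,x,y):=\int_0^1(\partial_t r_0)(st,x,y)\,ds$, a function whose $C^s$ norms on $(-\delta,\delta)\times N$ are finite for every $s$. Correspondingly
\[
T[r_0](t)=T[r_0(0,\cdot,\cdot)](t)+t\,T[\tilde r](t).
\]
For the remainder, Theorem \ref{theo_phase_condition} lets me invoke Corollary \ref{coro_L2_boundedness} at each fixed $t$ with amplitude $\tilde r(t,\cdot,\cdot)$; the constants are uniform in $t$ because the seminorms of $\tilde r$ and the lower bound on $|D_\Phi|$ are uniform in $t\in(-\delta,\delta)$, giving $\|T[\tilde r](t)\|_{L^2\to L^2}\leq C$. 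Hence $\|t\,T[\tilde r](t)\|_{H^\varepsilon\to L^2}\leq Ct\leq Ct^\varepsilon$ on $(0,\delta]$ whenever $\delta\leq 1$ and $\varepsilon\in[0,1]$.

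For the leading term I would apply Theorem \ref{theo_vanishing} with $N=L=0$, $s_0=0$ (forced by $s_0\in[0,0]$), and $s_\alpha=2\varepsilon$ for $|\alpha|=1$ (admissible since $2\varepsilon\leq 1$). The single vanishing hypothesis demanded is $r_0(0,x,y)|_Z=0$, already established. With $s=2\varepsilon$ the theorem yields
\[
\|T[r_0(0,\cdot,\cdot)](t)\|_{H^\varepsilon\to L^2}^2\leq C(t+t^{2\varepsilon}+t)\leq Ct^{2\varepsilon}
\]
for $t\in(0,\delta]$ with $\delta\leq 1$, and combining with the previous estimate gives the claim. The heart of the argument is the cancellation $r_0(0,x,x)=0$: without the curvature correction the amplitude would not vanish on $Z$, and Theorem \ref{theo_vanishing} could not be triggered, so no $t^\varepsilon$ decay would be available. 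The remaining work is the routine matching of the two exponents $s_\alpha$ to the desired Sobolev index, which is why the range $\varepsilon\in[0,1/2]$ is the natural one.
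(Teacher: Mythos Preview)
Your proof is correct and follows essentially the same route as the paper: split $r_0(t,\cdot,\cdot)=r_0(0,\cdot,\cdot)+O(t)$, use Theorem~\ref{theo_curvature} to get $r_0(0,x,x)=0$, and apply Theorem~\ref{theo_vanishing} with $N=L=0$, $s_0=0$, $s_\alpha=2\varepsilon$ for $|\alpha|=1$. Your treatment of the $O(t)$ remainder via Corollary~\ref{coro_L2_boundedness} is more explicit than the paper's one-line Taylor argument, but the substance is the same.
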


\begin{proof}
First, since $r_0(t, x, y)=r_0(0, x, y)+O(t)$ by the Taylor theorem, we have
\[
\|T_\hbar[r_0](t)\|_{H_\hbar^\varepsilon\to L^2}\leq \|T_\hbar[r_0(0, \cdot, \cdot)](t)\|_{H_\hbar^\varepsilon\to L^2}+O(t). 
\]
By Theorem \ref{theo_curvature} and the definition \eqref{eq_defi_r0} of $r_0$, $r_0(0, x, x)=0$ for all $x\in M$. Thus we can apply Theorem \ref{theo_vanishing} setting $J=L=0$ and 
\[
s_\alpha=
\begin{cases}
0 & \text{for } |\alpha|=0, \\
2\varepsilon & \text{for } |\alpha|=1. 
\end{cases}
\]
Then there exists a constant $C>0$ such that 
\[
\|T_\hbar[r_0(0, \cdot, \cdot)](t)\|_{H_\hbar^\varepsilon\to L^2}^2\leq C(t+t^{2\varepsilon}+\hbar t)\leq Ct^{2\varepsilon}
\]
for all $t\in (0, t_0/2]$ and $\hbar \in (0, 1]$. 
\end{proof}

Now we are ready to prove the consistency stated in Theorem \ref{theo_stability_consistency}. 

\begin{proof}[Proof of Consistency]
Let $u\in C^\infty(M)$. By Lemma \ref{lemm_time_derivative}, the time derivative of $E_\hbar(t)u$ in $L^2(M)$ is $(i\hbar)^{-1}(\tilde H_\hbar E_\hbar(t)u+G_\hbar(t)u)$. Thus we obtain
\[
    G_\hbar (t)u=i\hbar \frac{\partial}{\partial t}E_\hbar (t)u-\tilde H_\hbar u 
\]
as an equation in $L^2(M)$. We employ Lemma \ref{lemm_r1} and Lemma \ref{lemm_r0} to the inequality \eqref{eq_remainder_schroedinger} and obtain 
\[\| G_\hbar(t)u\|_{L^2(M)}\leq C(\hbar^2\times t^\varepsilon+\hbar t^{-1}\times t^{1+\varepsilon})\|u\|_{H_\hbar^{1+\varepsilon}(M)}\leq C\hbar t^{1+\varepsilon}\|u\|_{H_\hbar^{1+\varepsilon}(M)}\]
for some constant $C>0$ independent of $t\in (0, t_0/2]$, $\hbar \in (0, 1]$ and $u\in C^\infty (M)$. 
\end{proof}





\subsection*{Acknowledgments}
The author thanks Professor Kenichi Ito, Professor Shu Nakamura and Professor Yoshihisa Miyanishi for a lot of valuable discussions, advices and comments. He also thanks to Professor Masaki Kawamoto for his encouragements. 




\end{document}